\documentclass[journal]{IEEEtran}

\usepackage{amsmath,amssymb,amsfonts,mathtools,bm,mathrsfs}
\usepackage{enumitem}
\usepackage{microtype}
\usepackage{xcolor}
\usepackage{array}
\usepackage{booktabs}
\usepackage{hyperref}
\usepackage{braket}

\newcommand{\R}{\mathbb{R}}
\newcommand{\C}{\mathbb{C}}

\providecommand{\ii}{\mathrm{i}}

\newcommand{\Hermzero}{\operatorname{Herm}_0}
\newcommand{\Ad}{\operatorname{Ad}}
\newcommand{\Span}{\operatorname{span}}
\newcommand{\Tr}{\operatorname{Tr}}
\newcommand{\Range}{\operatorname{Range}}
\newcommand{\Ker}{\operatorname{Ker}}

\newcommand{\alg}{\operatorname{alg}}
\newcommand{\rank}{\operatorname{Rank}}

\newtheorem{theorem}{Theorem}

\newtheorem{corollary}{Corollary}

\newtheorem{definition}{Definition}
\newtheorem{example}{Example}
\newtheorem{exercise}{Exercise}
\newtheorem{lemma}{Lemma}

\newtheorem{problem}{Problem}
\newtheorem{proposition}{Proposition}
\newtheorem{remark}{Remark}

\newcommand{\bthm}{\begin{theorem}}
	\newcommand{\ethm}{\end{theorem}}
\newcommand{\blem}{\begin{lemma}}
	\newcommand{\elem}{\end{lemma}}
\newcommand{\bex}{\begin{example}}
	\newcommand{\eex}{\end{example}}
\newcommand{\beg}{\begin{exercise}}
	\newcommand{\eeg}{\end{exercise}}
\newcommand{\bprop}{\begin{proposition}}
	\newcommand{\eprop}{\end{proposition}}
\newcommand{\bplm}{\begin{problem}}
	\newcommand{\eplm}{\end{problem}}
\newcommand{\bmrk}{\begin{remark}}
	\newcommand{\emrk}{\end{remark}}
\newcommand{\bdfn}{\begin{definition}}
	\newcommand{\edfn}{\end{definition}}
\newcommand{\bcor}{\begin{corollary}}
	\newcommand{\ecor}{\end{corollary}}

\newcommand{\beq}{\begin{equation}}
	\newcommand{\eeq}{\end{equation}}
\newcommand{\beqm}{\begin{equation*}}
	\newcommand{\eeqm}{\end{equation*}}
\newcommand{\beqn}{\begin{eqnarray}}
	\newcommand{\eeqn}{\end{eqnarray}}
\newcommand{\beqnm}{\begin{eqnarray*}}
	\newcommand{\eeqnm}{\end{eqnarray*}}
\newcommand{\bea}{\begin{aligned}}
	\newcommand{\eea}{\end{aligned}}
\newcommand{\bs}{\begin{subequations}}
	\newcommand{\es}{\end{subequations}}

\newcommand{\bei}{\begin{itemize}}
	\newcommand{\eei}{\end{itemize}}
\newcommand{\bed}{\begin{description}}
	\newcommand{\eed}{\end{description}}
\newcommand{\bee}{\begin{enumerate}}
	\newcommand{\eee}{\end{enumerate}}
\newcommand{\bey}{\begin{array}}
	\newcommand{\eey}{\end{array}}
\newcommand{\bec}{\begin{center}}
	\newcommand{\eec}{\end{center}}

\begin{document}
	
	\title{On the Structure of Quantum Finite-Level Systems:
Decoherence-Free Subsystems, Quantum Nondemolition
Variables, and Back-Action-Evading Measurements}

	\author{Jia~Gu,
		Guofeng~Zhang,~\IEEEmembership{Senior~Member,~IEEE,}
		and~Ian~R.~Petersen,~\IEEEmembership{Fellow,~IEEE}%
		\thanks{This work was partially supported by Hong Kong Research Grant Council (RGC) under
			Grant No. 15213924 and the Australian Research Council under grant DP240101494.}%
		\thanks{J.~Gu is with the Department of Data Science and Artificial Intelligence, and G.~Zhang is with the Department of Applied Mathematics,
			The Hong Kong Polytechnic University, Hung Hom, Kowloon, Hong Kong
			(e-mail: joanna.gu@connect.polyu.hk; guofeng.zhang@polyu.edu.hk).}%
		\thanks{I.~R.~Petersen is with the School of Engineering, The Australian
			National University, Canberra, ACT 2601, Australia
			(e-mail: i.r.petersen@gmail.com).}}
	
	\markboth{IEEE Transactions on Automatic Control}%
	{Gu \MakeLowercase{\textit{et al.}}: On the Structure of Quantum Finite-Level Systems}
	
	\maketitle

	\begin{abstract}
	The purpose of this paper is to conduct a comprehensive structural analysis of quantum finite-level systems in the Heisenberg picture, where the evolution of system observables is modeled by 
		bilinear quantum stochastic differential equations (QSDEs). Firstly, real coordinate transformations that preserve commutation and anticommutation relations 
		are characterized, 
		and relative admissibility between different orthonormal Hermitian bases
		is formulated.  Secondly, strongly aligned  subspaces are
		introduced, based on which decoherence-free subsystems (DFSs) are constructed in general orthonormal Hermitian basis as well as the generalized Gell-Mann basis.  Thirdly, conditions for the existence of quantum nondemolition (QND) variables are given in terms of both linear systems theory and spectral separation by  output-signal observables. Finally,  a sufficient condition for quantum back-action-evading (BAE) measurements is established. A four-qubit nearest-neighbor Heisenberg chain with collective
		dephasing illustrates these structural properties of quantum finite-level systems.
	\end{abstract}

	\begin{IEEEkeywords}
		Quantum finite-level systems, bilinear quantum stochastic differential
		equations, decoherence-free subsystems, 
		quantum nondemolition variables, quantum back-action evading measurements.
	\end{IEEEkeywords}
	
	\IEEEpeerreviewmaketitle

	\section{Introduction}
	\label{sec:introduction}
	
\IEEEPARstart{A}{quantum} finite-level system has an underlying finite-dimensional Hilbert space. Its states are positive semidefinite, unit-trace density matrices, and its observables are Hermitian matrices. This class of quantum systems includes qubits and qudits, finite spin networks, and effective multilevel descriptions of atoms, quantum dots, and superconducting artificial atoms. Finite-level systems provide basic carriers of quantum information and serve as controllable components of quantum memories, quantum processors, and quantum sensors \cite{AltafiniTicozzi2012,KnillLaflammeViola00,ClerkEtAl10}. Their usefulness depends on our ability to engineer  selected degrees of freedom, protect stored information against environmental disturbances, and extract information without severely perturbing the quantities of interest.

In quantum computing and metrology, a prominent goal is to identify and protect degrees of freedom immune to specified environmental noise. In the Heisenberg picture, a decoherence-free subsystem (DFS) comprises degrees of freedom neither driven by the environmental inputs nor detectable through the outputs  \cite{NY14, ZGPG18,ZPL20}. It need not be isolated from other system components, provided their interactions preserve these properties  \cite[Fig. 2]{ZGPG18}.   This notion is closely related to the concept of decoherence-free subspace in the Schrödinger picture \cite[Sec. III-C]{DZWW23}, which play a crucial role in protecting quantum information for quantum computation \cite{ZanardiRasetti97,LidarChuangWhaley98,KnillLaflammeViola00,TicozziViola2008,V09,DeschampsEtAl2016}.

Whereas a DFS protects internal degrees of freedom, quantum back-action-evading (BAE) measurement concerns selected input–output relations. An output field observable is BAE with respect to specified input field observables when their influence on the measured output observable vanishes. This excludes the associated back-action noise from the readout without requiring the entire system to remain undisturbed. Such measurements can enable sensing beyond the standard quantum limit (SQL) \cite{TC10,WC13,NY14,OckeloenKorppiEtAl16,ZGPG18,LOW+21,BQND24}.  Quantum BAE measurements also provide a basis for measurement-based
state preparation and feedback stabilization
\cite{ARM11,WC13,CSR+20,LD25}.

A related but distinct concept is a quantum nondemolition (QND) observable: a system observable that admits repeated measurements without measurement back-action perturbing its subsequent evolution. Specifically, in the Heisenberg picture, an observable $O$ is called a QND observable if \([O(t), O(s)]=0\) for all $t,s$.  Measurement back-action may disturb noncommuting observables, but that disturbance must not feed back into \(O\) \cite{TC10,WC13,NY14,ZGPG18,LOW+21}.

Structural decomposition provides a natural approach to visualize all of these important notions.   Indeed,  in the quantum Kalman canonical form developed in \cite{ZGPG18,ZPL20,ZLDP25,DLZ26} for linear quantum systems,  a DFS corresponds to the uncontrollable/unobservable subsystem,  QND variables corresponds to the uncontrollable/observable subsystem \cite{PDP17}, and quantum BAE measurements is characterized by a zero transfer matrix from a certain input to a certain output. Extending these results in linear quantum systems to finite-level quantum systems is challenging, mainly because a finite-level quantum system  is modeled in the  Heisenberg picture by bilinear quantum stochastic differential equations (QSDEs) where input field increments multiply system observables. Moreover, the multiplication table of the observables involves both the antisymmetric structure tensor, which determines commutators, and the symmetric structure tensor, which determines the traceless part of anticommutators \cite{Duffaut-et-al_2016}. Thus, a finite-level quantum system often has more complex dynamics in contrast to a linear one, imposing significant difficulties in system analysis and synthesis. 
	
 Structural decomposition for quantum two-level systems has been studied in \cite{ZP20}, where  DFS, QND variables and quantum BAE measurements have been identified. It should be noted that in the Schrödinger picture, 
	decoherence-free subspaces and noiseless subsystems protect quantum
	information
	\cite{ZanardiRasetti97,LidarChuangWhaley98,
		KnillLaflammeViola00,ViolaKnillLloyd00},  
	their dynamical invariance and control properties have 
	been investigated	\cite{TicozziViola2008,V09};  and   model reduction for
	open quantum finite-level dynamics  have recently been developed \cite{GrigolettoTicozzi2024,GrigolettoTaoTicozziViola2025,GrigolettoPellegriniTicozzi2025,GrigolettoTicozzi2025}; see also \cite{GZP19,GZP20,GDPD20, AMR25,HIN25,HIN26} for model reduction of quantum  filters. 

The purpose of this paper is to conduct a comprehensive study of DFSs, QND variables and quantum BAE measurements of finite-level quantum systems. The main contributions are summarized as follows.

\begin{enumerate}[label=(\roman*)]
	\item
		We characterize the real coordinate transformations (Definition \ref{def:admissible}) that
		preserve generalized Gell--Mann structure tensors and the physical realizability of finite-level quantum systems (Lemma \ref{lem:rotation}). 
		These transformations are exactly the adjoint
		representations of unitary conjugations (Theorem \ref{thm:admissible_inner}). 
		
		\item
		We introduce strongly aligned subspaces (Definition \ref{def:strong_alignment}), whose
		associated unital \(C^*\)-algebras admit an exact orthogonal
		separation into central and noncentral Hermitian
		components (Lemma \ref{rem:complex}); based on which we apply the Artin--Wedderburn  decomposition to identify a set of system observables whose evolution is decoupled from  the input and output fields (Theorem \ref{thm:decomp}).

		\item We  formulate
		relative admissibility (Definition \ref{def:rel_admissible}) between the generalized Gell-Mann basis \(\lambda\) and another orthonormal
		Hermitian basis \(\eta\) of finite-level quantum systems (Lemma \ref{lem:rotation_frame}). Then in Theorem \ref{thm:decomp_frame} we construct a DFS under the  \(\eta\)-basis, and in Corollary \ref{cor:fixed_frame_decomp} a DFS  under the   generalized Gell-Mann basis.

		\item We derive  QND variables in the sense of linear systems theory in Theorem \ref{thm:july27_qnd_G7}, and further distinguish them   in terms of
		spectral separation of the QND algebra resulting from the
		output-signal observables (Theorem \ref{thm:QND_2}).
		
		\item
		We derive a sufficient condition for BAE measurements in finite-level quantum systems  in Theorem \ref{thm:july27_G8_general}. 
	\end{enumerate}
	
	A four-qubit nearest-neighbor Heisenberg chain with
 collective dephasing illustrates  the main results. 
	
	The remainder of the paper is organized as follows.
	Section~\ref{sec:preli} presents the finite-level  quantum systems under consideration.
	Section~\ref{sec:structural decomp} develops admissible
	transformations, strong alignment, and  structural
	decomposition results. Sections \ref{sec: DFS}, \ref{sec:QND} and \ref{sec:BAE} study DFSs, QND
	observables and quantum BAE measurements, respectively.
	Section~\ref{sec:four_qubit_chain} presents a four-qubit
	illustrative example, and the final section concludes the paper.
	
\textit{Notation.} \(\ii =\sqrt{-1}\). Let \(\mathbb{R}/\mathbb{C}\) be the set of real/complex numbers.  \(I_n\) is the identity matrix of size \(n\). The subscript ``$n$'' is often omitted. Let $J$ denote the symplectic matrix  $J = \bigl[
\begin{smallmatrix}
	0 & I \\ 
	-I & 0
\end{smallmatrix}
\bigr]$.   Given a matrix \(A\), denote its \(k\)-th row by \(A_k\) or  \((A)_k\), and its transpose, entrywise conjugation, and conjugate transpose by \(A^\top,  A^\#, A^\dagger\) respectively.   $\Range(A)$ is the column space and \(\Ker(A)\) is the nullspace of $A$.     The commutator of two operators \(a,b\) is  \([a,b]=ab-ba\), while their anticommutator is \(\{a,b\}=ab+ba\).  The reduced Planck constant \(\hbar\) is set to 1.

	\section{Preliminaries}	 \label{sec:preli}
	
	
	Let  \(M_n(\mathbb{C})\)  denote the vector space of all $n\times n$ complex matrices. Denote by \(SU(n)\) the special unitary group of degree \(n\).  The Lie algebra \(\mathfrak{su}(n)\) of \(SU(n)\)    is usually written mathematically as the vector space of traceless skew-Hermitian matrices. However, in quantum physics it is often more convenient to use the  equivalent traceless Hermitian generator convention  \cite{Duffaut-et-al_2012,Duffaut-et-al_2016}.   Let \(e_1,\dots,e_n\) be the standard basis of \(\mathbb C^n\), and define  matrices \(	P_{jk}=e_je_k^\top\).	A standard traceless Hermitian basis of the Lie algebra \(\mathfrak{su}(n)\) is given by
\[
\begin{aligned}
		&u_{jk}=P_{jk}+P_{kj}, \quad 
		v_{jk}=\ii(P_{jk}-P_{kj}), \quad 1\le j<k\le n,
		\\ 
		&w_\ell
		=
		-\sqrt{\frac{2}{\ell(\ell+1)}}
		\Bigl(
		\sum_{q=1}^{\ell}P_{qq}
		-
		\ell P_{\ell+1,\ell+1}
		\Bigr), \quad  1\le \ell\le n-1.
		\end{aligned}
		\] 
	These \(s=n^2-1\) matrices form the generalized Gell-Mann basis of the vector space of traceless Hermitian matrices
	\[
	\Hermzero(n) \triangleq \{A\in M_n(\mathbb C): A=A^\dagger, \Tr(A)=0)\}.
	\]   
	Without particular order, we label these traceless Hermitian generators as \(\lambda_1,\dots,\lambda_s\), 
	which by construction satisfy
	\begin{equation} \label{eq:lambda}
		\operatorname{Tr}(\lambda_j\lambda_k)=2\delta_{jk},
	\end{equation}
	and whose commutation and anticommutation relations are
	\begin{subequations} \label{eq:CCR_antiCCR}
		\begin{align}
			[\lambda_a,\lambda_b]
			&=\;
			2\ii\sum_{c=1}^{s}f_{abc}\lambda_c, 
			\label{lambda_ccr}
			\\
			\{\lambda_a,\lambda_b\}
			&=\;
			\frac{4}{n}\delta_{ab}I_n
			+
			2\sum_{c=1}^{s}d_{abc}\lambda_c. 
			\label{lambda_anticcr}
		\end{align}
	\end{subequations}
		From \eqref{eq:CCR_antiCCR}, we have
	\begin{equation}
		\lambda_a\lambda_b
		=
		\frac{2}{n}\delta_{ab}I_n
		+
		\sum_{c=1}^{s}
		\left(
		\ii f_{abc}+d_{abc}
		\right)\lambda_c.
		\label{eq:gellmann-product}
	\end{equation}
The 3rd-order real tensor \(f_{abc}\) is completely antisymmetric, whereas \(d_{abc}\) completely symmetric:
	\begin{equation} \label{eq:sym_antisym}
		f_{abc} = -f_{bac} = f_{bca}, \quad 		d_{abc} = d_{bac} = d_{bca}.
	\end{equation}	
	More details can be found \cite[Section 3]{Duffaut-et-al_2016}.
	
	For each \(a=1,\dots,s\), we define antisymmetric matrices \(F_a\) and symmetric matrices  \(D_a\in\mathbb R^{s\times s}\) by
	\begin{equation} \label{eq:F_and_D}
		(F_a)_{bc}=f_{abc},
		\qquad
		(D_a)_{bc}=d_{abc},
	\end{equation}
	based on which define linear maps $\Theta_n^-, \Theta_n^+: \mathbb{C}^s\to\mathbb{C}^{s\times s}$ as
	\begin{subequations} \label{eq:theta-definitions}
		\begin{align}
			\Theta_n^-(\beta)
			&\triangleq 
			\begin{bmatrix}
				F_1^\top \beta&\cdots&F_s^\top \beta
			\end{bmatrix},
			\label{eq:theta-definitions_CCR}
			\\
			\Theta_n^+(\beta)
			&\triangleq
			\begin{bmatrix}
				D_1^\top \beta&\cdots&D_s^\top \beta
			\end{bmatrix}.
			\label{eq:theta-definitions_antiCCR}
		\end{align}
	\end{subequations}
	When
	$\beta$ is an $s$-dimensional row vector, then  it is understood  that $\Theta^-(\beta) \equiv \Theta^-(\beta^\top )$ and
	$\Theta^+(\beta)\equiv\Theta^+(\beta^\top )$.  
	
	\begin{lemma} \label{lemL:CCR_ACR}
		For every
		\(
		\beta\in\mathbb C^{s},
		\)
		\begin{equation} \label{eq:Theta_pm}
			\Theta_n^{-}(\beta)
			=
			\sum_{j=1}^{s}
			\beta_jF_j,
			\qquad
			\Theta_n^{+}(\beta)
			=
			\sum_{j=1}^{s}
			\beta_jD_j.
		\end{equation}
	\end{lemma}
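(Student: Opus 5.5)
The plan is to compare the two sides entrywise, using only the definitions \eqref{eq:F_and_D} and \eqref{eq:theta-definitions} together with the permutation symmetries \eqref{eq:sym_antisym}.

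\textbf{Antisymmetric case.} By \eqref{eq:theta-definitions_CCR}, the $a$-th column of $\Theta_n^-(\beta)$ is $F_a^\top\beta$. Its $b$-th entry is therefore
\[
\bigl(\Theta_n^-(\beta)\bigr)_{ba}=\sum_{j=1}^s (F_a)_{jb}\beta_j=\sum_{j=1}^s f_{ajb}\beta_j .
\]
On the other side, $\bigl(\sum_j\beta_jF_j\bigr)_{ba}=\sum_j\beta_j f_{jba}$. It then suffices to show $f_{ajb}=f_{jba}$ for all indices. This follows from the cyclic invariance $f_{abc}=f_{bca}$ in \eqref{eq:sym_antisym}: applied to the triple $(j,b,a)$, it gives $f_{jba}=f_{baj}=f_{ajb}$. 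No sign appears, because a cyclic permutation is even. Since $\Theta_n^-$ is defined through columns indexed by $a$, I would take care that the row index is $b$ and the column index is $a$ on both sides, so that the identity is an equality of matrices and not of transposes.

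\textbf{Symmetric case.} The same computation gives
\[
\bigl(\Theta_n^+(\beta)\bigr)_{ba}=\sum_{j=1}^s d_{ajb}\beta_j, \qquad \Bigl(\sum_{j=1}^s\beta_jD_j\Bigr)_{ba}=\sum_{j=1}^s\beta_j d_{jba}.
\]
Here $d_{ajb}=d_{jba}$ holds by the total symmetry of $d$.

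\textbf{Main obstacle.} The only real point of care is the index bookkeeping in the antisymmetric case. One must verify that the permutation relating $(a,j,b)$ to $(j,b,a)$ is cyclic, hence even, rather than a transposition. A transposition would introduce a minus sign and give $-\sum_j\beta_jF_j$ instead. Linearity in $\beta$ makes the argument valid for complex $\beta$ without change. The row-vector convention is covered by the definition $\Theta^\pm(\beta)\equiv\Theta^\pm(\beta^\top)$.
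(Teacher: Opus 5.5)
Your proof is correct and takes essentially the same route as the paper: you compare $\Theta_n^{\pm}(\beta)$ with $\sum_j\beta_jF_j$ (resp.\ $\sum_j\beta_jD_j$) entry by entry, using the cyclic identity $f_{ajb}=f_{jba}$ (an even permutation, so no sign appears) and the total symmetry of $d$. Your explicit tracking of which index is the row and which is the column is the bookkeeping behind the paper's one-line computation $\bigl(\Theta_n^-(\beta)\bigr)_{k\ell}=\sum_j f_{\ell jk}\beta_j=\bigl(\sum_j\beta_jF_j\bigr)_{k\ell}$.
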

	
	\begin{IEEEproof}
		We prove the first identity by comparing matrix entries. By definition, the $(k,\ell)$-th entry of \(\Theta_n^{-}(\beta)\) is
\[
			\bigl(\Theta_n^{-}(\beta)\bigr)_{k\ell}
			=
			\sum_{j=1}^s (F_\ell)_{jk}\beta_j
			=
			\sum_{j=1}^s f_{\ell jk}\beta_j 
			=
			\Bigl(
			\sum_{j=1}^s \beta_j F_j
			\Bigr)_{k\ell},
\]
		where the  complete antisymmetry of \(f_{ijk}\) in \eqref{eq:sym_antisym} has been used.
		Hence $
		\Theta_n^{-}(\beta)
		=
		\sum_{j=1}^s \beta_jF_j$. 
		The proof for \(\Theta_n^{+}\) is identical, using the complete
		symmetry of \(d_{ijk}\).
	\end{IEEEproof}
	
	Define the \emph{normalized}  Hilbert--Schmidt inner product 
	\begin{equation} \label{eq:nHS}
		\langle X,Y \rangle_{\rm nHS}  \triangleq  	\frac1{2}\operatorname{Tr}(X^\dagger Y), \quad X,Y\in \mathbb{C}^{n\times n}.
	\end{equation}
	Thus by \eqref{eq:lambda},  the  generalized Gell--Mann matrices \(\{\lambda_1,\dots,\lambda_s\}\) form an orthonormal basis of $\Hermzero(n)$, under the normalized Hilbert--Schmidt inner product.
	
Following Ref.~\cite{Duffaut-et-al_2016}, for an \(n\)-level open quantum system, we identify  the vector of spin observables \(X=(X_1,\dots,X_s)^\top \) with  \((\lambda_1,\dots,\lambda_s)^\top \).  Then by Lemma \ref{lemL:CCR_ACR}, Eq.~\eqref{eq:CCR_antiCCR} translates to
	\begin{subequations} \label{eq: CCR_ACR}
		\begin{align}
			[X,X^\top ]
			&\triangleq XX^\top - (XX^\top)^\top =
			2\ii\,\Theta_n^-(X),
			\label{eq:CCR}
			\\
			\{X,X^\top \}
			&\triangleq XX^\top +(XX^\top)^\top =
			\frac{4}{n}I_s
			+
			2\Theta_n^+(X).
			\label{eq:CCR_ACR}
		\end{align}
	\end{subequations}

	\begin{remark} \label{rem:op_mat}
		In \eqref{eq:CCR},  \(X\) is understood as a vector of operators.   Like \eqref{eq:theta-definitions_CCR}, 
 $
			\Theta_n^-(X) = 		\begin{bmatrix}
				F_1^\top X&\cdots&F_s^\top X
\end{bmatrix}$. 
		Hence, by Lemma \ref{lemL:CCR_ACR}, 
	 $
			\Theta_n^{-}(X)
			=
			\sum_{j=1}^{s}
			\lambda_j F_j.
	$ 
		If   \(X\) is understood as a vector of matrices as each $\lambda_j\in\mathbb{C}^{n\times n}$, then  the above $\Theta_n^-(X) $ should be understood as $
			\Theta_n^-(X) = 		\begin{bmatrix}
				(F_1^\top\otimes I_n) X&\cdots& (F_s^\top\otimes I_n) X
			\end{bmatrix}$.
		The same is true for \eqref{eq:CCR_ACR}.
	\end{remark}
	
	The evolution of the spin observables $X$ of  an \(n\)-level quantum system driven by  \(n_w\) Bosonic input fields  can be described in the Heisenberg picture by bilinear QSDEs:
	\begin{equation}  \label{sys_n_level}
		\begin{aligned}
			&dX
			=
			A_0\,dt
			+
			A X\,dt
			+
			\sum_{k=1}^{n_w} B_{1k}X\,d W_{1k}
			+
			\sum_{k=1}^{n_w} B_{2k}X\,d W_{2k},
			\\
			&\begin{bmatrix}
				d Y_1\\
				d Y_2
			\end{bmatrix}
			=
			\begin{bmatrix}
				C_1\\
				C_2
			\end{bmatrix}
			X\,dt
			+
			\begin{bmatrix}
				d W_1\\
				d W_2
			\end{bmatrix},
		\end{aligned}
	\end{equation}
	where matrices \(A_0\in\mathbb{R}^s\), \(A,B_{1k},B_{2k}\in\mathbb{R}^{s\times s}\)  and $C_1,C_2\in\mathbb{R}^{n_w\times s}$, while the  column vectors of  input operators and  output operators are
	\[
	W_j = \left[
	\begin{array}{c}
		W_{j1}\\
		\vdots\\
		W_{jn_w}
	\end{array}	
	\right], \quad 
	Y_j = \left[
	\begin{array}{c}
		Y_{j1}\\
		\vdots\\
		Y_{jn_w}
	\end{array}	
	\right],\quad j=1,2.
	\]
	Using the compact notation $
		\mathcal W
		=
		\left[\begin{smallmatrix}
			W_1\\
			W_2
	\end{smallmatrix}\right]$, $
		Y
		=\left[
				\begin{smallmatrix}
		Y_1\\
		Y_2
	\end{smallmatrix}\right]$,
	and
	\begin{equation*}
		C
		=
		\begin{bmatrix}
			C_1\\
			C_2
		\end{bmatrix}, \quad	 B_\mu
		=
		\begin{cases}
			B_{1\mu}, & 1\le \mu\le n_w,\\[2mm]
			B_{2,\mu-n_w}, & n_w<\mu\le 2n_w=m,
		\end{cases}
	\end{equation*}
the system \eqref{sys_n_level} can be rewritten as
	\begin{equation} \label{eq:irka_sys2}
		\begin{aligned}
			dX
			&=
			(A_0+AX)\,dt
			+
			\sum_{\mu=1}^m B_\mu X\,d \mathcal W_\mu,
			\\
			d Y
			&=
			CX\,dt+d \mathcal W.
		\end{aligned}
	\end{equation}
	The  It\^o matrix for the input field quadratures is
	\begin{equation} \label{eq:W}
		d \mathcal W\,d \mathcal W^\top 
		=
		\Omega\,dt,
		\qquad
		\Omega \triangleq
		\begin{bmatrix}
			I_{n_w} & \ii I_{n_w}\\
			-\ii I_{n_w} & I_{n_w}
		\end{bmatrix};
	\end{equation}
	 i.e.,  the input fields are initialized in the vacuum  state. 	
	
	The open \(n\)-level  quantum system \eqref{sys_n_level} can alternatively be parametrized by   physical parameters
	\begin{equation} \label{eq:physical_parameters}
		\theta=(\alpha,C_1,C_2),
	\end{equation}
	where \(\alpha\in\mathbb R^s\) and  \(C_1,C_2\) are  given  in \eqref{sys_n_level}.   
	With these parameters, in the $(S,L,H)$ formalism \cite{GJ09,GJ09b}, the scattering matrix $S=I_{n_w}$, the system Hamiltonian $H$ and the coupling  operator $L$ are 
	\begin{equation*}\label{eq:H and L}
		H = \alpha^\top X, \quad L 
		=
		\Lambda X, 
	\end{equation*}
	with
	\begin{equation} \label{eq:Lambda}
		\Lambda =  \frac12(C_1+\ii C_2)\in\mathbb{C}^{n_w\times s}.
	\end{equation}   
	Then the \emph{real} system  matrices in \eqref{sys_n_level} are parametrized in terms of \(\alpha, C_1, C_2\) as, \cite{Duffaut-et-al_2016}, 
	\begin{subequations}\label{eq:N-from-C}
		\begin{align}
			B_{1k}
			&=
			\Theta_n^-\bigl((C_2)_k\bigr),
			\qquad
			k=1,\dots,n_w, 
			\label{eq:B1-from-C}
			\\
			B_{2k}
			&=
			-\Theta_n^-\bigl((C_1)_k\bigr),
			\qquad
			k=1,\dots,n_w, 
			\label{eq:B2-from-C}
			\\
			A_0
			&=
			\frac{4\ii}{n}
			\sum_{k=1}^{n_w}
			\Theta_n^-(\Lambda_k^\#)\Lambda_k^\top , 
			\label{eq:A0-full}
			\\
			A
			&=
			-2\Theta_n^-(\alpha)
			+
			\sum_{k=1}^{n_w}
			\left(
			R_k-\ii Q_k
			\right),
			\label{eq:A-full}
		\end{align}
	\end{subequations}
where
	\begin{equation*}
		\begin{aligned}
			R_k
			&=
			\Theta_n^-(\Lambda_k)\Theta_n^-(\Lambda_k^\#)
			+
			\Theta_n^-(\Lambda_k^\#)\Theta_n^-(\Lambda_k),
			\\
			Q_k
			&=
			\Theta_n^-(\Lambda_k)\Theta_n^+(\Lambda_k^\#)
			-
			\Theta_n^-(\Lambda_k^\#)\Theta_n^+(\Lambda_k).
		\end{aligned}
	\end{equation*}
Finally, system matrices must satisfy
	\begin{equation}\label{eq:Ar-drift_0}
		A+A^\top 
		+
		\sum_{k=1}^{n_w}
		\left(
		B_{1k}B_{1k}^\top 
		+
		B_{2k}B_{2k}^\top 
		\right)
		=
		\frac{n}{2}\Theta_n^+(A_0).
	\end{equation}
	
It is proved in \cite{Duffaut-et-al_2016} that \eqref{eq:irka_sys2} is a valid  open \(n\)-level  quantum system if the system matrices satisfy \eqref{eq:N-from-C}-\eqref{eq:Ar-drift_0}.
	
	\section{Structural decomposition of quantum finite-level systems}	 \label{sec:structural decomp}
	
	In this section, we develop a structure-preserving decomposition theory
	for the open \(n\)-level  quantum system \eqref{eq:irka_sys2}.

	\subsection{Admissible maps}
	
	In this subsection, we establish conditions for coordinate transformations on the quantum finite-level system \eqref{eq:irka_sys2},  under which the   commutation and anticommutation relations  are preserved by the transformed system observables.
	
	\blem \label{lem:rotation}
	Given \(T\in\mathbb R^{s\times s}\), the vector of transformed observables  $\widetilde X=T^{\top}X$  have  commutation and anticommutation relations
	\beq \label{eqLjuly27_ccr}
	[\widetilde X,\widetilde X^{\top}]
	=
	2\ii \Theta_n^{-}(\widetilde X),
	\eeq
	and
	\beq \label{eqLjuly27_anti}
	\{\widetilde X,\widetilde X^{\top}\}
	=
	\frac{4}{n}I_s+2\Theta_n^{+}(\widetilde X),
	\eeq
	if and only if the transformation matrix $T$ satisfies
	\beq
	T^{\top}T=I_s,
	\label{2.1}
	\eeq
	and for every \(\beta\in\mathbb C^s\),
	\begin{subequations} \label{eq:amdissible_maps}
		\begin{align}
			\Theta_n^{-}(T^{\top}\beta)
			=&
			T^{\top}\Theta_n^{-}(\beta)T,
			\label{2.2}
			\\ 
			\Theta_n^{+}(T^{\top}\beta)
			=&
			T^{\top}\Theta_n^{+}(\beta)T.
			\label{2.3}
		\end{align}
	\end{subequations}
	\elem
	
	\begin{IEEEproof} 
		Partition $T$ as  $T = [T_1,\ldots, T_s]$. Then  $T_k^\top \beta =\sum_j T_{jk}\beta_j$. Substitution into \eqref{eq:Theta_pm} gives
		\beqm 
		\begin{aligned}
			&	\Theta_n^-(T^\top \beta) = \sum_k (T_k^\top \beta) F_k
			= \sum_j \beta_j \sum_k T_{jk} F_k. 
		\end{aligned}
		\eeqm
		On the other hand, again by \eqref{eq:Theta_pm}, 
	$
		T^{\top}\Theta_n^{-}(\beta)T = \sum_j \beta_j T^\top  F_j T$. 
		As  \(\beta\in\mathbb C^s\) is arbitrary,  Eq. \eqref{2.2} is equivalent to 
		\beq\label{eq:aug2_temp1a}
		T^\top  F_j T =  \sum_{k=1}^s T_{jk} F_k, \quad j=1,\ldots, s.
		\eeq
		Similarly, it can be easily shown that \eqref{2.3} is equivalent to 
		\beq\label{eq:aug2_temp1b}
		T^\top  D_j T =  \sum_{k=1}^s T_{jk} D_k, \quad j=1,\ldots, s.
		\eeq
		
		From \(\widetilde X=T^\top X\),  by \eqref{eq:CCR} we have  $
		[\widetilde X,\widetilde X^\top]
		=
		T^\top[X,X^\top]T
		=
		2\ii T^\top\Theta_n^-(X)T$. 
		As a result, Eq. \eqref{eqLjuly27_ccr} is equivalent to $
		T^\top\Theta_n^-(X)T = \Theta_n^-(T^\top X)$, 
		which is equivalent to
		\beq \label{eq:aug2_temp2}
		\sum_{j=1}^s \lambda_j  T^\top F_j T =  \sum_{j=1}^s \lambda_j \sum_{k=1}^s T_{jk} F_k.
		\eeq
		Because $\lambda_1,\ldots, \lambda_s$ are linear independent,  Eq. \eqref{eq:aug2_temp2} is equivalent to  \eqref{eq:aug2_temp1a}, while the latter is equivalent  to \eqref{2.2}, as has been established.  Similarly, notice that by \eqref{eq:CCR_ACR}, $	\{\widetilde X,\widetilde X^\top\}
		=
		T^\top\{X,X^\top\}T
		=
		\frac4nT^\top T+2T^\top\Theta_n^+(X)T$.
		Hence, Eq. \eqref{eqLjuly27_anti} is equivalent to  $\frac{4}{n}I_s+2\Theta_n^{+}(T^\top X)
		= 
		\frac4nT^\top T+2T^\top\Theta_n^+(X)T$, 
		which can be written as
		\beq\label{eq:aug2_temp3}
		\frac{4}{n}\left(I_s - T^\top T \right) + 2 \sum_{j=1}^s \lambda_j \Bigl(
		\sum_k T_{jk} D_k - T^\top D_jT
		\Bigr)
		=0.
		\eeq
		As $I_n, \lambda_1,\ldots, \lambda_s$ are linearly independent,  \eqref{eq:aug2_temp3} is equivalent to  \eqref{2.1} and \eqref{eq:aug2_temp1b}, while  \eqref{eq:aug2_temp1b} is equivalent to  Eq. \eqref{2.3}.
	\end{IEEEproof}
	
	\bdfn \label{def:admissible}
	A matrix \(T\in\mathbb R^{s\times s}\) is called an \emph{admissible map} if it satisfies  \eqref{2.1}--\eqref{eq:amdissible_maps}
	\edfn
	
	In this paper, an admissible map is also referred to as  an \emph{admissible coordinate transformation}. 
	
	Under the transformation $
	\widetilde X=T^\top X$,  where $T$ is an admissible coordinate transformation matrix,   the system \eqref{eq:irka_sys2}  is transformed to
	\beq \label{eq:irka_sys_T} 
	\begin{aligned}
		d\widetilde X
		&=
		(\widetilde A_0+\widetilde A\widetilde X)\,dt
		+
		\sum_{\mu=1}^{m}
		\widetilde B_\mu\widetilde X\,dW_\mu, \\
		dY&=\widetilde C\widetilde X\,dt+dW,
	\end{aligned}
	\eeq
	where $
	\widetilde A_0=T^\top A_0$, $
	\widetilde A=T^\top AT$, 
	$\widetilde B_\mu=T^\top B_\mu T$, and $
	\widetilde C=CT$.  The physical parameters \eqref{eq:physical_parameters} transform according to $
	\widetilde\alpha=T^\top\alpha$, 
	$\widetilde C_1=C_1T$, and $
	\widetilde C_2=C_2T$. 
	It can be easily verified that $
	\widetilde B_{1k}
	=
	\Theta_n^-\big((\widetilde C_2)_k\big)$, 
	$\widetilde B_{2k}
	=
	-\Theta_n^-\big((\widetilde C_1)_k\big)$, 
	and $
	\widetilde A+\widetilde A^\top
	+
	\sum_{k=1}^{n_w}
	\left(
	\widetilde B_{1k}\widetilde B_{1k}^{\top}
	+
	\widetilde B_{2k}\widetilde B_{2k}^{\top}
	\right)
	=
	\frac n2\Theta_n^+(\widetilde A_0)$.
	
	\subsection{Characterization by unitary conjugation}
	
	In this subsection, we show that an admissible coordinate transformation $T$ is induced by a unitary \(U\in SU(n)\).
	
	Let \(U\in SU(n)\) be given, and define a  matrix \(T\in\mathbb{R}^{s\times s}\) by
	\begin{equation}
		U\lambda_jU^\dagger
		=
		\sum_{k=1}^{s}
		T_{kj}\lambda_k,
		\label{eq:july27_unitary_10.1}
	\end{equation}
	where
	\begin{equation}
		T_{kj}
		=
		\frac{1}{2}
		\Tr\left(
		\lambda_kU\lambda_jU^\dagger
		\right) 
		=\langle 
		\lambda_k, U\lambda_jU^\dagger
		\rangle_{\rm nHS}.
		\label{eq:july27_unitary_10.2}
	\end{equation}
	
	Since unitary conjugation $U\bullet U^\dagger$ preserves Hermiticity and trace, it induces
	the real adjoint representation
	\beq \label{eq: Ad_map_2sep}
	\Ad:SU(n)\longrightarrow
	\mathrm{GL}_{\mathbb R}\bigl(\Hermzero(n)\bigr),
	\quad
	\Ad_U(A)=UAU^\dagger,
	\eeq
	where $\mathrm{GL}_{\mathbb R}\bigl(\Hermzero(n)\bigr)$ denotes the general linear group of real-linear transformations acting on  $\Hermzero(n)$.  
	Under the generalized Gell--Mann basis, the coordinate
	matrix of \(\Ad_U\) is the real matrix \(T\) defined by
	\eqref{eq:july27_unitary_10.1}--\eqref{eq:july27_unitary_10.2}.
	Hence, we identify the unitary conjugation 
	\(\Ad_U\) with its coordinate representation \(T\).
	
	The following result shows that  $T=Ad_U$ for each $U\in SU(n)$ is an admissible map.
	\blem \label{lem:T}
	The matrix $T$ defined by 	\eqref{eq:july27_unitary_10.1}--\eqref{eq:july27_unitary_10.2} satisfies \eqref{2.1} and \eqref{eq:amdissible_maps}; in other words, it is admissible.
	\elem
	\begin{IEEEproof}
		As $U$ is unitary, we have
		$
			2\delta_{jk}
			=
			\Tr(\lambda_j\lambda_k)
			=
			\Tr\left[
			(U\lambda_jU^\dagger)
			(U\lambda_kU^\dagger)
			\right]
=
			2(T^{\top}T)_{jk}$, 
		where \eqref{eq:lambda} and  \eqref{eq:july27_unitary_10.1}  have been used. 
		Therefore, $T$ satisfies \eqref{2.1}. Moreover, 
		\begin{equation}
			[U AU^\dagger,U BU^\dagger]
			=
			U[A,B]U^\dagger,
			\label{eq:july27_unitary_10.6}
		\end{equation}
		and 
		\begin{equation*}
			\{U AU^\dagger,UBU^\dagger\}
			=
			U\{A,B\}U^\dagger
			\label{eq:july27_unitary_10.7}
		\end{equation*}
		for arbitrary matrices $A,B$ of compatible dimension.  On one hand, as given in the proof of Lemma \ref{lem:rotation},  Eq.~\eqref{2.2} is equivalent to \eqref{eq:aug2_temp1a}. Left- and right-multiplying both sides of \eqref{eq:aug2_temp1a} by $e_a^\top$ and $e_b$ of $\mathbb{R}^s$ respectively yield 
		\beq \label{eq:aug10_temp1a}
		\sum_{m,n=1}^s f_{jmn}T_{nb}T_{ma} = \sum_{k=1}^s f_{kab} T_{jk}.
		\eeq
		On the other hand, from \eqref{eq:july27_unitary_10.6} we have
	$
		[U \lambda_j U^\dagger,U \lambda_k U^\dagger]
		=
		U[\lambda_j ,\lambda_k ]U^\dagger,
	$
		which by \eqref{eq:july27_unitary_10.1} is equivalent to  
		\[
			\sum_{c=1}^s \sum_{a,b=1}^s f_{cab}T_{aj}T_{bk} \lambda_c = \sum_{c=1}^s \sum_{\ell=1}^s f_{\ell jk} T_{c\ell}\lambda_c.
		\]	 
		Consequently,
		\beq \label{eq:aug10_temp1b}
		\sum_{a,b=1}^s f_{cab}T_{aj}T_{bk} =  \sum_{\ell=1}^s f_{\ell jk} T_{c\ell}
		\eeq
		due to linear independence of generalized Gell-Mann matrices. 
		However, \eqref{eq:aug10_temp1a} is actually \eqref {eq:aug10_temp1b}. As this is true for all $a,b,c\in \{1,\ldots, s\}$, from \eqref{eq:july27_unitary_10.6} we get \eqref{2.2}.   The anticommutator case can be proved similarly. 
	\end{IEEEproof}
	
	In the remainder of this subsection, we aim to show that each admissible map is induced by a unitary matrix. To do this, some preparations are in order.
	
	Define the linear map
	\beq \label{eq:def_Phi}
	\Phi:\mathbb{R}^{s}\longrightarrow \Hermzero(n),
	\qquad
	\Phi(v)=v^\top X
	=\sum_{a=1}^{s}v_a\lambda_a.
	\eeq
	By \eqref{eq:gellmann-product} we have
	\beq \label{eq:isometry_4sep}
	\langle \Phi(u), \Phi(v)\rangle_{\rm nHS}=u^\top v = \langle u,v\rangle,
	\eeq
	where $\langle \cdot, \cdot \rangle$ is the usual inner product for $\mathbb{R}^s$. 
	Thus $\Phi$ is an isometry. Introduce two bilinear maps from $\mathbb{R}^s\times \mathbb{R}^s \to \mathbb{R}^s$ as
	\beq \label{eq:fd_maps}
	f(u,v)\triangleq\Theta_n^-(u)v,
	\quad
	d(u,v)\triangleq\Theta_n^+(u)v, \quad u, v \in \mathbb{R}^s.
	\eeq
	It can be readily verified that 
	\begin{subequations}	 \label{eq:Phi_uv_ccr_anticcr}
		\begin{align}
			[\Phi(u),\Phi(v)]
			&=
			-2\ii \,\Phi\bigl(f(u,v)\bigr),
			\label{eq:july25_1.1}
			\\
			\{\Phi(u),\Phi(v)\}
			&=
			\frac{4}{n}(u^\top v)I_n
			+
			2\Phi\bigl(d(u,v)\bigr),
			\label{eq:july25_1.2}
		\end{align}
	\end{subequations}	
	from which we have
	\begin{equation}
		\Phi(u)\Phi(v)
		=
		\frac{2}{n}(u^\top v)I_n
		+
		\Phi\bigl(d(u,v)\bigr)
		-\ii \Phi\bigl(f(u,v)\bigr).
		\label{eq:july25_1.3}
	\end{equation}
	
	\bmrk \label{rem:fd}
Because we will use the Skolem--Noether Theorem for \(M_n(\mathbb{C)}\) (see  e.g.,
\cite{StacksSN}) to establish Theorem \ref{thm:admissible_inner},   we extend the  real maps \(f\) and \(d\) defined in \eqref{eq:fd_maps} to complex-bilinear maps \(\mathbb{C}^s\times\mathbb{C}^s\to\mathbb{C}^s\) defined by the same formulas \(f(u,v)=\Theta^-_n(u)v\) and  \(d(u,v)=\Theta^+_n(u)v\)  for all \(u,v\in\mathbb{C}^s\). Indeed, they are well defined since the maps \(\Theta^\pm_n\) are  actually defined on \(\mathbb{C}^s\) in \eqref{eq:theta-definitions}.  
	\emrk
	
	Let $\Phi_{\mathbb C}$ denote the complex-linear extension of the isometric map $\Phi$; specifically, 
	\beq \label{eq:Phi_C}
	\Phi_{\mathbb C}(u+\ii v) \triangleq  \Phi(u)+ \ii   \Phi(v), \quad u,v\in \mathbb{R}^s.
	\eeq
	By means of \eqref{eq:july25_1.3}, it can be easily shown that   for all \(u,v\in\mathbb{C}^s\),
	\begin{equation} \label{eq:aug6_temp2}
		\Phi_\mathbb{C}(u)\Phi_\mathbb{C}(v)=\frac{2}{n}(u^\top v)I_n+\Phi_\mathbb{C}\big(d(u,v)\big)-\mathrm{i}\,\Phi_\mathbb{C}\big(f(u,v)\big).
	\end{equation}
	Similarly, given a real subspace $\mathcal V\subseteq\mathbb R^s$, define a complex space \(\mathcal{V}_\mathbb{C}\triangleq\mathcal{V}+\mathrm{i}\mathcal{V}\subseteq\mathbb{C}^s\). Then we use the notation
 $
	\Phi_{\mathbb C}(\mathcal V_{\mathbb C}) = \Phi(\mathcal V) + \ii  \Phi(\mathcal V)
$
	as the complex extension of the real vector space $\Phi(\mathcal V)$.
	
	\begin{theorem}
		\label{thm:admissible_inner}
	A matrix \(T\in\mathbb R^{s\times s}\) is admissible if and only if there exists
		\(U\in SU(n)\) such that \eqref{eq:july27_unitary_10.1} holds.
	\end{theorem}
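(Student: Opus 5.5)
The ``if'' direction is exactly Lemma \ref{lem:T}, so the work is in the ``only if'' direction. Given an admissible $T$, I will build an algebra automorphism of $M_n(\mathbb C)$ whose restriction to $\Hermzero(n)$ has coordinate matrix $T$. Skolem--Noether then makes this automorphism inner.

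\emph{Step 1 (the candidate map).} Define $\psi:\Hermzero(n)\to\Hermzero(n)$ by $\psi(\Phi(v))=\Phi(Tv)$ for $v\in\mathbb R^s$. Extend it complex-linearly to $M_n(\mathbb C)=\C I_n\oplus\Phi_{\mathbb C}(\mathbb C^s)$ by setting
\[
\Psi(cI_n+\Phi_{\mathbb C}(u))=cI_n+\Phi_{\mathbb C}(Tu).
\]
In coordinates, $\Psi(\lambda_j)=\sum_k T_{kj}\lambda_k$, which is the pattern of \eqref{eq:july27_unitary_10.1}. Since $T$ is orthogonal by \eqref{2.1}, $\Psi$ is a bijective complex-linear map with $\Psi(I_n)=I_n$.

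\emph{Step 2 (multiplicativity).} By \eqref{eq:aug6_temp2} and bilinearity, it suffices to check, for all $u,v\in\mathbb C^s$,
\[
(Tu)^\top(Tv)=u^\top v,\qquad f(Tu,Tv)=Tf(u,v),\qquad d(Tu,Tv)=Td(u,v).
\]
The first identity is \eqref{2.1}. For the second, $f(Tu,Tv)=\Theta_n^-(Tu)Tv$. Applying \eqref{2.2} with $\beta=Tu$ and using $TT^\top=I$ gives $\Theta_n^-(u)=T^\top\Theta_n^-(Tu)T$. Hence $\Theta_n^-(Tu)=T\Theta_n^-(u)T^\top$, and therefore $f(Tu,Tv)=T\Theta_n^-(u)v=Tf(u,v)$. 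The identity for $d$ follows in the same way from \eqref{2.3}. Together with $\Psi(I_n)=I_n$, this shows that $\Psi$ is a unital algebra automorphism of $M_n(\mathbb C)$.

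\emph{Step 3 (Skolem--Noether and normalization).} Skolem--Noether gives an invertible $S\in M_n(\mathbb C)$ with $\Psi(A)=SAS^{-1}$. To make $S$ unitary, use that $\Psi$ maps Hermitian matrices to Hermitian matrices, because $T$ is real. Therefore $\Psi(A^\dagger)=\Psi(A)^\dagger$ for all $A$, which gives $SA^\dagger S^{-1}=(S^{-1})^\dagger A^\dagger S^\dagger$. Consequently $S^\dagger S$ commutes with every matrix, so $S^\dagger S=cI_n$ with $c>0$. Rescaling $S$ by $c^{-1/2}$ makes it unitary, and multiplying by a scalar phase $e^{\ii\theta}$ gives $\det=1$ without changing the conjugation. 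This produces $U\in SU(n)$ with $U\lambda_jU^\dagger=\Psi(\lambda_j)=\sum_k T_{kj}\lambda_k$, which is \eqref{eq:july27_unitary_10.1}.

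The main obstacle I expect is Step 2: getting the index and transpose conventions of \eqref{2.2}--\eqref{2.3} to match $f(Tu,Tv)=Tf(u,v)$, rather than some version twisted by $T^\top$. The sign conventions in \eqref{eq:july25_1.1} and the placement of $T$ versus $T^\top$ in \eqref{eq:july27_unitary_10.1} are exactly where a mismatch could appear. If the conventions do come out reversed, I would instead define $\psi$ using $T^\top$ (equivalently $T^{-1}$) and transpose at the end.
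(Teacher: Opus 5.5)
Your proposal is correct and follows essentially the same route as the paper. Both use the map $\Psi_T(cI_n+\Phi_{\mathbb C}(z))=cI_n+\Phi_{\mathbb C}(Tz)$, get multiplicativity from $f(Tu,Tv)=Tf(u,v)$ and $d(Tu,Tv)=Td(u,v)$ (via \eqref{2.2}--\eqref{2.3} with $\beta=Tu$), and finish with Skolem--Noether, the $V^\dagger V=\gamma I_n$ normalization and a phase fix. The transpose mismatch you feared does not occur, since your computation already gives $\Psi(\lambda_j)=\sum_k T_{kj}\lambda_k$ directly. You also make explicit the identity-component check $(Tu)^\top(Tv)=u^\top v$, which the paper leaves implicit.
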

	
	\begin{IEEEproof}
		The ``if'' part is Lemma~\ref{lem:T}. Conversely, suppose that
		\(T\) is admissible. From \eqref{2.1}--\eqref{eq:amdissible_maps} with
		\(\beta=Tu\), we obtain $
		\Theta_n^\pm(Tu)
		=
		T\Theta_n^\pm(u)T^\top$. 
		Consequently,
		\begin{align*}
			f(Tu,Tv)
			&=
			\Theta_n^-(Tu)Tv
			=
			T\Theta_n^-(u)v
			=
			Tf(u,v),
			\\
			d(Tu,Tv)
			&=
			\Theta_n^+(Tu)Tv
			=
			T\Theta_n^+(u)v
			=
			Td(u,v).
		\end{align*}
		In terms of the function $\Phi_{\mathbb C}$  defined in \eqref{eq:Phi_C}, define a complex-linear map $
		\Psi_T:M_n(\mathbb C)\longrightarrow M_n(\mathbb C)$ 
		by
		\beq \label{eq:def_Phi_T}
		\Psi_T\bigl(cI_n+\Phi_{\mathbb C}(z)\bigr)
		\triangleq 
		cI_n+\Phi_{\mathbb C}(Tz), \ \ c\in\mathbb C,  z\in\mathbb C^s.
		\eeq
This map is well defined  because 
		$M_n(\mathbb C)
		=
		\Span_\mathbb{C}\{I_n, \lambda_1,\ldots, \lambda_s\}$.  Clearly,   \(\Psi_T\) is unital and bijective. As \(T\) is real, 
		\beq \label{eq:Phi_T_A}
		\Psi_T(A^\dagger)=\Psi_T(A)^\dagger.
		\eeq	 
		Furthermore, by \eqref{eq:fd_maps} and \eqref{eq:aug6_temp2} we get $
		\Psi_T(AB)=\Psi_T(A)\Psi_T(B)$ for all $
		A,B\in M_n(\mathbb C)$. 
		Thus \(\Psi_T\) is a unital \( * \)-automorphism of
		\(M_n(\mathbb C)\). Then by the Skolem--Noether Theorem \cite{StacksSN}, there exists an invertible matrix $V$ such
		that 
	$
		\Psi_T(A)=VAV^{-1}$ 
		for all $
		A\in M_n(\mathbb C)$.
		By \eqref{eq:Phi_T_A}, $
		V A^\dagger V^{-1}
		=
		(VAV^{-1})^\dagger
		=
		(V^{-1})^\dagger A^\dagger V^\dagger$, which can be rewritten as $(V^\dagger V) A^\dagger = A^\dagger (V^\dagger V)$. In other words, \(V^\dagger V\) commutes with every element of
		\(M_n(\mathbb C)\). Therefore $
		V^\dagger V=\gamma I_n$ 
		for some \(\gamma>0\). Replacing \(V\) by \(\gamma^{-1/2}V\) gives a unitary matrix.
		Multiplying \(\gamma^{-1/2}V\) by a  phase $e^{\ii \theta}$ yields  \(U\in SU(n)\). Then $
		\Psi_T(A)=UAU^\dagger$. Applying this identity to \eqref{eq:def_Phi_T} and letting $c=0$ and $z=e_j$  establish
		\eqref{eq:july27_unitary_10.1}.
	\end{IEEEproof}
	
	\subsection{Strong alignment} \label{subsec: alignment}
	
	In this subsection, we introduce strongly aligned subspaces
	\(\mathcal S\subseteq\mathbb R^s\).  A strongly aligned subspace is
	accompanied by another subspace
	\(\mathcal Q\subseteq\mathcal S^\perp\). 
	The  spaces of observables  \(\Phi(\mathcal S)\) and \(\Phi(\mathcal Q)\), corresponding to \(\mathcal S\) and \(\mathcal Q\) respectively, will be
	identified with the noncentral and traceless central observable parts,
	respectively, of a unital \(C^*\)-algebra, \cite{MA25,DeschampsEtAl2016}.

	\begin{definition}\label{def:strong_alignment}
		A real subspace $\mathcal S\subseteq\mathbb R^s$ is called
		\emph{strongly aligned} if there exists another real subspace $
		\mathcal Q\subseteq\mathcal S^\perp$ 
		such that
		\begin{subequations}\label{eq:f_d_Q_S}
			\begin{align}
				f(\mathcal S,\mathcal S)
				&\subseteq\mathcal S,
				\label{eq:july25_2.1}
				\\
				d(\mathcal S,\mathcal S)
				&\subseteq\mathcal S\oplus\mathcal Q,
				\label{eq:july25_2.2}
				\\
				f\bigl(\mathcal Q,\mathcal S\oplus\mathcal Q\bigr)
				&=\{0\},
				\label{eq:july25_2.3}
				\\
				d(\mathcal Q,\mathcal S)
				&\subseteq\mathcal S,
				\qquad
				d(\mathcal Q,\mathcal Q)
				\subseteq\mathcal Q,
				\label{eq:july25_2.4}
			\end{align}
		\end{subequations}		
		and the subspace of $M_n(\mathbb{C})$ defined as
		\begin{equation}
			\mathfrak A
			\triangleq
			\mathbb C I_n
			\oplus
			\Phi_{\mathbb C}(\mathcal Q_{\mathbb{C}})
			\oplus
			\Phi_{\mathbb C}(\mathcal S_{\mathbb{C}})
			\label{eq:july25_2.5}
		\end{equation}
		satisfies
		\begin{subequations} \label{eq:Phi_QS}
			\begin{align}
				\Phi(\mathcal S)
				&=
				\mathfrak A_{\mathrm{sa}}
				\cap
				(Z(\mathfrak A)_{\mathrm{sa}})^\perp.
				\label{eq:july25_2.7}
				\\
				\Phi(\mathcal Q)
				&=
				Z(\mathfrak A)_{\mathrm{sa}}
				\cap
				\{I_n\}^{\perp}.
				\label{eq:july25_2.6}
			\end{align}
		\end{subequations}			
		Here, $Z(\mathfrak A)$ is the center of $\mathfrak A$, defined as
		\beq \label{eq:Z center}
		Z(\mathfrak A)
		\triangleq
		\{Z\in\mathfrak A:ZA=AZ,\ \forall A\in\mathfrak A\},
		\eeq
		the subscript
		``$\mathrm{sa}$'' denotes the  part of observables, e.g.,  
		\beqm
		\begin{aligned}
			\mathfrak A_{\mathrm{sa}} &= \{Y\in\mathfrak A: Y=Y^\dagger \}, \\
			Z(\mathfrak A)_{\mathrm{sa}} &= \{Y\in Z(\mathfrak A): Y=Y^\dagger\},
		\end{aligned}
		\eeqm
		and  orthogonality in \eqref{eq:Phi_QS} is with
		respect to the normalized Hilbert--Schmidt inner product \eqref{eq:nHS}.
	\end{definition}

Some algebraic  properties of the subspace  \(\mathfrak{A}\) defined in \eqref{eq:july25_2.5} are given by the following two lemmas.
	
	\blem\label{rem:complex}
	The decomposition of \(\mathfrak{A}\) in \eqref{eq:july25_2.5} is  indeed a direct sum.
	\elem 
	\begin{IEEEproof}	
		By definition, \(\langle I_n,\Phi_\mathbb{C}(w)\rangle_{\mathrm{nHS}}=\tfrac12\operatorname{Tr}\Phi_\mathbb{C}(w)=0\) for all $w\in\mathbb{C}^s$.  Moreover,  \(\mathcal{Q}\perp\mathcal{S}\) implies \(p^\dagger w=0\) for all \(p\in\mathcal{Q}_\mathbb{C}\) and  \(w\in\mathcal{S}_\mathbb{C}\). Thus, the three summands  in \eqref{eq:july25_2.5}   are mutually orthogonal under the normalized Hermitian inner product \eqref{eq:nHS}, and the result follows.
	\end{IEEEproof}

	\begin{lemma}\label{lem:* algebra}
		The space $\mathfrak A$ defined in \eqref{eq:july25_2.5}
		is a finite-dimensional unital  $C^*$-algebra.
	\end{lemma}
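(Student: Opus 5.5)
We need to show that $\mathfrak A$ contains $I_n$, is a complex subspace, is closed under products, and is closed under $\dagger$. Being a finite-dimensional $*$-subalgebra of $M_n(\mathbb C)$, it is then automatically norm-closed, so it is a unital $C^*$-algebra with the operator norm inherited from $M_n(\mathbb C)$.

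Unitality and linearity are immediate from \eqref{eq:july25_2.5}, since $\mathbb C I_n$ is a summand and each summand is a complex subspace; Lemma~\ref{rem:complex} ensures the sum is well defined as a direct sum. For the involution, note that $\mathcal S,\mathcal Q$ are real subspaces and $\Phi$ maps real vectors to Hermitian matrices. Hence for $u+\ii v\in\mathcal S_{\mathbb C}$ we have $\Phi_{\mathbb C}(u+\ii v)^\dagger=\Phi(u)-\ii\Phi(v)=\Phi_{\mathbb C}(u-\ii v)\in\Phi_{\mathbb C}(\mathcal S_{\mathbb C})$. The same holds for $\mathcal Q$, and $(cI_n)^\dagger=\bar c I_n$. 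So $\mathfrak A^\dagger=\mathfrak A$.

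The main step is closure under multiplication. By bilinearity it suffices to check products of generators $\Phi(x)\Phi(y)$ with $x,y$ real vectors in $\mathcal S$ or $\mathcal Q$, extended complex-bilinearly through \eqref{eq:aug6_temp2}; products with $I_n$ are trivial. By \eqref{eq:july25_1.3},
\[
\Phi(x)\Phi(y)=\tfrac{2}{n}(x^\top y)I_n+\Phi\bigl(d(x,y)\bigr)-\ii\,\Phi\bigl(f(x,y)\bigr),
\]
so it is enough to show $f(x,y),d(x,y)\in\mathcal S\oplus\mathcal Q$ in each of the four cases.

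\begin{itemize}
\item $x,y\in\mathcal S$: this is \eqref{eq:july25_2.1}--\eqref{eq:july25_2.2}.
\item $x\in\mathcal Q$: here $f(x,y)=0$ by \eqref{eq:july25_2.3}, and $d(x,y)\in\mathcal S$ or $\mathcal Q$ by \eqref{eq:july25_2.4}.
\item $x\in\mathcal S$, $y\in\mathcal Q$: the definition only states conditions with $\mathcal Q$ in the first slot, so we use the symmetries from \eqref{eq:sym_antisym}. We have $d(u,v)_c=\sum_{a,b}u_a d_{bac}v_b$, so $d(u,v)=d(v,u)$ by total symmetry of $d$. Likewise $f(u,v)=-f(v,u)$ by total antisymmetry of $f$; this needs checking from Lemma~\ref{lemL:CCR_ACR}, since $\Theta^-_n(u)v=\sum_j u_jF_jv$ has entries $\sum_{j,b}u_jf_{jcb}v_b$. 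These symmetries reduce the case to the previous one.
\end{itemize}

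The only genuine obstacle is getting the sign and index conventions in $f$ and $d$ right, so that the swap argument is justified. Once that is done, every product of generators lies in $\mathbb C I_n\oplus\Phi_{\mathbb C}(\mathcal Q_{\mathbb C})\oplus\Phi_{\mathbb C}(\mathcal S_{\mathbb C})=\mathfrak A$, which completes the proof. Note that conditions \eqref{eq:Phi_QS} are not needed for this lemma.
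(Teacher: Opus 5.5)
Your proof is correct and follows essentially the same route as the paper. You check the unit, then $\dagger$-closure via Hermiticity of $\Phi$ on real vectors, then multiplicative closure from the product formula \eqref{eq:july25_1.3} combined with the strong-alignment inclusions \eqref{eq:f_d_Q_S}, and finally the inherited operator norm. Your explicit use of the symmetry of $d$ and antisymmetry of $f$ to handle products with $\mathcal S$ in the first slot and $\mathcal Q$ in the second usefully fills in a case that the paper's proof lists only in the order $\Phi_{\mathbb C}(\mathcal Q_{\mathbb C})\Phi_{\mathbb C}(\mathcal S_{\mathbb C})$.
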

	
	\begin{IEEEproof}
		Firstly, $I_n \in \mathfrak{A}$,  and hence  $	\mathfrak A$ is unital.  Secondly,  as \((cI_n)^\dagger= c^* I_n\), and by \eqref{eq:Phi_C} and \eqref{eq:def_Phi},  \(\Phi_\mathbb{C}(w)^\dagger=\Phi_\mathbb{C}(w^\dagger)\) for all $w\in\mathbb{C}^s$, we have \(\mathfrak{A}^\dagger=\mathfrak{A}\). Finally, we show multiplicative closure. Products involving \(I_n\) are trivial.  If \(\mathcal{U},\mathcal{V},\mathcal{W}\subseteq\mathbb{R}^s\) are real subspaces with \(f(\mathcal{U},\mathcal{V})\subseteq\mathcal{W}\), then by the complex-bilinear expansion of \(f(u_x+\mathrm{i}u_y,\,v_x+\mathrm{i}v_y)\) (see Remark \ref{rem:fd}), simple algebraic manipulations give \(f(\mathcal{U}_\mathbb{C},\mathcal{V}_\mathbb{C})\subseteq\mathcal{W}_\mathbb{C}\); similarly for the map \(d\). Hence  \eqref{eq:f_d_Q_S} also holds with \(\mathcal{S},\mathcal{Q}\) replaced by \(\mathcal{S}_\mathbb{C},\mathcal{Q}_\mathbb{C}\), provided 	\(\mathcal Q\perp\mathcal S\)  as given in Definition \ref{def:strong_alignment}.  Consequently,  
		$
		\Phi_{\mathbb C}(\mathcal S_{\mathbb C})
		\Phi_{\mathbb C}(\mathcal S_{\mathbb C})
		\subseteq\mathfrak A$,  $
		\Phi_{\mathbb C}(\mathcal Q_{\mathbb C})
		\Phi_{\mathbb C}(\mathcal S_{\mathbb C})
		\subseteq\mathfrak A$, 
		and $
		\Phi_{\mathbb C}(\mathcal Q_{\mathbb C})
		\Phi_{\mathbb C}(\mathcal Q_{\mathbb C})
		\subseteq\mathfrak A$ 
		follow from \eqref{eq:aug6_temp2}. 
		Finally, equip $\mathfrak A$ with the matrix operator norm $
		\|A\| = 
		\sigma_{\rm max}(A)$,
 where  \(\sigma_{\rm max}(A)\) denotes the largest singular value of \(A\).  Clearly, \(\|A^\dagger A\| =\|A\|^2\). 
		Thus $\mathfrak A$ is a finite-dimensional unital $C^*$-algebra.
	\end{IEEEproof}	
	
	\bmrk	 
	Clearly, \([\Phi(p),I_n]=0\) for all $p \in \mathbb{R}^s$.  Moreover,  for \(p\in\mathcal{Q}\), by \eqref{eq:july25_1.1} and \eqref{eq:july25_2.3} we have  \([\Phi(p),\Phi(q)]=-2\mathrm{i}\Phi(f(p,q))=0\) for every \(q\in\mathcal{S}\oplus\mathcal{Q}\).  As a result, by the form of   \(\mathfrak{A}\)  defined in \eqref{eq:july25_2.5}, we get \(\Phi(p)\in Z(\mathfrak{A})\). However, as \(\Phi(p) \in \Hermzero(n)\), we have \(\Phi(\mathcal{Q})\subseteq Z(\mathfrak{A})_{\mathrm{sa}}\cap\{I_n\}^\perp\). In this sense,  \eqref{eq:july25_2.6} is the  maximality assumption that this inclusion is actually an equality. Moreover, once \eqref{eq:july25_2.6} holds, \(\Phi(\mathcal{S})\subseteq\mathfrak{A}_{\mathrm{sa}}\cap(Z(\mathfrak{A})_{\mathrm{sa}})^\perp\) naturally holds (for \(w\in\mathcal{S}\): \(\langle\Phi(w),cI_n+\Phi(p)\rangle_{\mathrm{nHS}}=0+w^\top p=0\)), therefore \eqref{eq:july25_2.7} is again a maximality assumption.  Such a ``maximality'' feature  alludes to  the word ``strongly'' in Definition \ref{def:strong_alignment}.
	\emrk
	
	\subsection{Artin--Wedderburn decomposition}
	
	In this subsection, by applying the Artin--Wedderburn decomposition to the  unital \( C^* \)-subalgebra
	\(\mathfrak A\subseteq M_n(\mathbb C)\) studied in the preceding subsection,  we present an interpretation of  a strongly aligned subspace $\mathcal S$ and the associated  subspace
	\(\mathcal Q\subseteq\mathcal S^\perp\).

	By Lemma~\ref{lem:* algebra}, $\mathfrak A$ is a
	finite-dimensional unital $C^*$-algebra.
	The finite-dimensional Artin--Wedderburn Theorem
	\cite{MA25} therefore applies. Specifically,  there exist positive
	integers \(r_k,m_k\) and a unitary \(U\) such that
	\beq \label{eq:AW data}
	\sum_{k=1}^{K}r_km_k=n,
	\eeq
	and
	\begin{equation}
		\widetilde{\mathfrak A} \triangleq U\mathfrak A U^\dagger
		=
		\bigoplus_{k=1}^{K}
		\left(
		M_{r_k}(\mathbb C)\otimes I_{m_k}
		\right).
		\label{eq:july25_4.1}
	\end{equation}
	Moreover, as multiplying $U$ by a phase $e^{\imath \theta}$ does not change its adjoint action in \eqref{eq:july25_4.1},  $U$ can always be chosen to satisfy $U\in SU(n)$.
	
	By the definition of  ``center'' in \eqref{eq:Z center}, the center of  $\widetilde{\mathfrak A} $  is
	\begin{equation}
		Z(\widetilde{\mathfrak A})
		=
		\bigoplus_{k=1}^{K}
		\left(\mathbb C
		I_{r_k}\otimes I_{m_k}\right),
		\label{eq:july25_4.2}
	\end{equation}
whose traceless Hermitian part  is
	\begin{equation}
		\mathfrak Z_0
		=
		\Bigl\{
		\bigoplus_{k=1}^{K}
		\left(z_k I_{r_k}\otimes I_{m_k}\right)
		:
		z_k\in\mathbb R,\ 
		\sum_{k=1}^{K}r_km_kz_k=0
		\Bigr\}.
		\label{eq:july25_4.3}
	\end{equation}
	Consequently, $
	\dim_{\mathbb R}\mathfrak Z_0=K-1$. 
	(The notation ``\(\dim_\mathbb{R}V\)'' denotes the dimension of \(V\) as a vector space over the field of \emph{real} numbers; i.e., the number of vectors in a basis when only their \emph{real}  linear combinations are allowed. For example,  $\dim_\mathbb{R} \mathbb{C}^n=2n$.) Clearly by definition, $\mathfrak Z_0$ is the traceless Hermitian center part of 	$\widetilde{\mathfrak A} $.  On the other hand,	the noncentral traceless Hermitian part of $\widetilde{\mathfrak A} $ is
	\begin{equation}
		\mathfrak D
		=
		\bigoplus_{k=1}^{K}
		\left(
		\Hermzero(r_k)\otimes I_{m_k}
		\right),
		\label{eq:july25_4.4}
	\end{equation}
	with dimension  $
	\dim_{\mathbb R}\mathfrak D
	=
	\sum_{k=1}^{K}(r_k^2-1)$. 
	
	\blem\label{lem:D_intersection}
	$\mathfrak D$ defined in \eqref{eq:july25_4.4} can be written as
	\beq \label{D_intersection}
	\mathfrak D
	=
	\widetilde{\mathfrak A}_{\mathrm{sa}}
	\cap
	(Z(\widetilde{\mathfrak A} )_{\mathrm{sa}})^\perp.
	\eeq
	\elem	
	\begin{IEEEproof}
		Clearly, by the definition of $Z(\widetilde{\mathfrak A})$) in \eqref{eq:july25_4.2}, its Hermitian part is 
		\beq \label{eq:Z_A_sa}
		Z(\widetilde{\mathfrak A})_{\mathrm{sa}} = 	\mathbb RI_n
		\oplus
		\mathfrak Z_0.
		\eeq  
		By the definition of $\widetilde{\mathfrak A} $  in \eqref{eq:july25_4.1}, every Hermitian element of $\widetilde{\mathfrak A} $ has the
		form
		$
		H
		=
		\bigoplus_{k=1}^{K}
		(H_k\otimes I_{m_k}),
		$
		where the Hermitian matrix $H_k = H_k^\dagger \in M_{r_k}(\mathbb{C})$ has the unique decomposition
		\[
		H_k
		=
		\frac{\Tr(H_k)}{r_k}I_{r_k}
		+
		H_k^0,
		\qquad
		H_k^0 \in \Hermzero(r_k).
		\]
		Clearly, the first part contributes to  $Z(\widetilde{\mathfrak A})_{\mathrm{sa}}$ in \eqref{eq:Z_A_sa}, while the second part belongs to
		$\mathfrak D$. Hence, the  Hermitian part of $\widetilde{\mathfrak A}$ is
		\beq \label{eq: A_can_sa}
		\widetilde{\mathfrak A}_{\mathrm{sa}}
		=
		\mathbb RI_n
		\oplus
		\mathfrak Z_0
		\oplus
		\mathfrak D.
		\eeq	
		We have $\mathfrak D \subseteq \widetilde{\mathfrak A}_{\mathrm{sa}}$. 
		For each \(Z=\bigoplus_k z_kI_{r_k}\otimes I_{m_k}\in\mathfrak{Z}_0\) and $D_0=\bigoplus_{k=1}^K(H_k^0\otimes I_{m_k})$, where $H_k^0\in\Hermzero(r_k)$, it is easy to show that
		$
		\langle I_n,Z\rangle_{\mathrm{nHS}}=\tfrac12\textstyle\sum_k r_km_kz_k=0$, 
		$\langle I_n,D_0\rangle_{\mathrm{nHS}}=\tfrac12\textstyle\sum_k m_k\operatorname{Tr}(H^0_k)=0$, and 
		$\langle Z,D_0\rangle_{\mathrm{nHS}}=\tfrac12\textstyle\sum_k z_km_k\operatorname{Tr}(H^0_k)=0$. 
		Now take \(Y\in\widetilde{\mathfrak{A}}_{\mathrm{sa}}\) with the unique decomposition \(Y=cI_n+Z+D_0\) as given in  \eqref{eq: A_can_sa}. By the above orthogonality, \(Y\perp Z(\widetilde{\mathfrak A} )_{\mathrm{sa}}\) if and only if \(c=0,Z=0\), i.e., if and only if  \(Y\in\mathfrak{D}\). The proof is  completed.
	\end{IEEEproof}	
	
	Recall the adjoint map $\mathrm{Ad}_U$ defined in \eqref{eq: Ad_map_2sep}. We can easily  extend it to be a map on $M_n(\mathbb{C})$ as \(\mathrm{Ad}_U(A) = U A U^\dagger\) for all $A\in M_n(\mathbb{C})$.   The following result can be easily verified, and thus its proof is omitted.
		\begin{lemma} \label{lem:adjoint map}
		Let \(U\in SU(n)\) be given. Then \(\operatorname{Ad}_U(A)=UAU^\dagger\)
		\begin{description}
		 \item[(i)] fixes \(I_n\); i.e., $\operatorname{Ad}_U(I_n) = I_n$;
			\item[(ii)] maps \(\mathfrak A\) onto \(\widetilde{\mathfrak A}\);
			\item[(iii)] maps \(\mathfrak A_{\mathrm{sa}}\) onto \(\widetilde{\mathfrak A}_{\mathrm{sa}}\);
			\item[(iv)] maps \(Z(\mathfrak A)\) onto \(Z(\widetilde{\mathfrak A})\);
			\item[(v)] is an isometry for the normalized Hilbert–Schmidt inner product \eqref{eq:nHS}.
		\end{description}
		Consequently, for subspaces \(V,W\subseteq M_n(\mathbb C)\),
		\beqm \label{eq:Ad_VW}
		\begin{aligned}
			\operatorname{Ad}_U(V\cap W)
			&=\operatorname{Ad}_U(V)\cap\operatorname{Ad}_U(W),
			\\
			\operatorname{Ad}_U(V^\perp)
			&=\operatorname{Ad}_U(V)^\perp.
		\end{aligned}
		\eeqm
	\end{lemma}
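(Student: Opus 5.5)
The plan is to verify each item directly from the definition $\operatorname{Ad}_U(A)=UAU^\dagger$, using only that $U$ is unitary. We also use the defining relation \eqref{eq:july25_4.1}, $\widetilde{\mathfrak A}=U\mathfrak A U^\dagger$. Item (i) is immediate from $UU^\dagger=I_n$. Item (ii) holds by definition of $\widetilde{\mathfrak A}$, since $\operatorname{Ad}_U$ is bijective on $M_n(\mathbb C)$ with inverse $\operatorname{Ad}_{U^\dagger}$. For (iii), $(UAU^\dagger)^\dagger=UA^\dagger U^\dagger$, so $A$ is Hermitian if and only if $\operatorname{Ad}_U(A)$ is. Combining this with (ii) and applying the inverse map gives surjectivity onto $\widetilde{\mathfrak A}_{\mathrm{sa}}$.

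For (iv), $\operatorname{Ad}_U$ is multiplicative: $UABU^\dagger=(UAU^\dagger)(UBU^\dagger)$. Hence $ZA=AZ$ for all $A\in\mathfrak A$ is equivalent to $\operatorname{Ad}_U(Z)\operatorname{Ad}_U(A)=\operatorname{Ad}_U(A)\operatorname{Ad}_U(Z)$ for all $A\in\mathfrak A$. By (ii), this says exactly that $\operatorname{Ad}_U(Z)$ commutes with every element of $\widetilde{\mathfrak A}$. Running the same argument with $\operatorname{Ad}_{U^\dagger}$ gives the reverse inclusion. For (v), cyclicity of the trace yields
\[
\langle UXU^\dagger,UYU^\dagger\rangle_{\rm nHS}=\tfrac12\Tr(UX^\dagger U^\dagger UYU^\dagger)=\langle X,Y\rangle_{\rm nHS}.
\]

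For the two set identities, intersection is preserved because $\operatorname{Ad}_U$ is injective. The inclusion $\operatorname{Ad}_U(V\cap W)\subseteq \operatorname{Ad}_U(V)\cap\operatorname{Ad}_U(W)$ is trivial. Conversely, if $\operatorname{Ad}_U(v)=\operatorname{Ad}_U(w)$ with $v\in V$ and $w\in W$, then $v=w\in V\cap W$. For orthogonal complements, (v) gives that $X\perp V$ if and only if $\operatorname{Ad}_U(X)\perp\operatorname{Ad}_U(V)$. Surjectivity of $\operatorname{Ad}_U$ on $M_n(\mathbb C)$ then shows that every element of $\operatorname{Ad}_U(V)^\perp$ is the image of some element of $V^\perp$.

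One point needs care: the orthogonal complements in \eqref{eq:Phi_QS} are taken inside real subspaces of Hermitian matrices, not inside all of $M_n(\mathbb C)$. The same argument still works, because $\operatorname{Ad}_U$ also restricts to a bijective isometry of the Hermitian matrices (on which $\langle\cdot,\cdot\rangle_{\rm nHS}$ is real). No step is a genuine obstacle; the only thing to watch is the ambient space in which $\perp$ is taken.
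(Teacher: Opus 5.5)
Your proposal is correct: unitarity of $U$, multiplicativity and $*$-preservation of $\operatorname{Ad}_U$, cyclicity of the trace, and bijectivity with inverse $\operatorname{Ad}_{U^\dagger}$ establish every item and both set identities. The paper omits the proof as ``easily verified'', and your direct check is the routine argument it leaves to the reader; your remark that the orthogonal complements agree whether taken in $M_n(\mathbb C)$ or among Hermitian matrices is a sound extra precaution.
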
	

The following result is a consequence of  Lemmas \ref{lem:D_intersection} and \ref{lem:adjoint map}.
	
	\begin{lemma} \label{lem: Z_0 and D}
		$\Phi(\mathcal S)$  and  $\Phi(\mathcal Q)$ in \eqref{eq:Phi_QS} are  mapped onto  $\mathfrak Z_0$ and $\mathfrak D$ in \eqref{eq: A_can_sa} respectively  by $\mathrm{Ad}_U$; i.e.,
		\begin{subequations}	\label{eq: Z_0 and D}
			\begin{align}
				U\Phi(\mathcal Q)U^\dagger
				&=
				\mathfrak Z_0,
				\label{eq:july25_4.6}
				\\
				U\Phi(\mathcal S)U^\dagger
				&=
				\mathfrak D.
				\label{eq:july25_4.7}
			\end{align}
		\end{subequations}	
	\end{lemma}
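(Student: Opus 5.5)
The plan is to transport the two defining identities \eqref{eq:july25_2.7}--\eqref{eq:july25_2.6} of Definition~\ref{def:strong_alignment} through $\operatorname{Ad}_U$, using Lemma~\ref{lem:adjoint map}. We then identify the images with the explicit traceless central and noncentral parts of $\widetilde{\mathfrak A}$ computed from the Artin--Wedderburn form \eqref{eq:july25_4.1}. (Note that, despite the wording of the statement, the displayed equations \eqref{eq: Z_0 and D} pair $\Phi(\mathcal Q)$ with $\mathfrak Z_0$ and $\Phi(\mathcal S)$ with $\mathfrak D$. This is the pairing I will prove.)

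\emph{Central part.} Apply $\operatorname{Ad}_U$ to \eqref{eq:july25_2.6}. By Lemma~\ref{lem:adjoint map}, $\operatorname{Ad}_U$ commutes with intersections and orthogonal complements, fixes $I_n$, and sends $Z(\mathfrak A)$ onto $Z(\widetilde{\mathfrak A})$. It also preserves Hermiticity, since $(UYU^\dagger)^\dagger=UY^\dagger U^\dagger$, so it sends $Z(\mathfrak A)_{\mathrm{sa}}$ onto $Z(\widetilde{\mathfrak A})_{\mathrm{sa}}$. Hence
\[
U\Phi(\mathcal Q)U^\dagger = Z(\widetilde{\mathfrak A})_{\mathrm{sa}}\cap\{I_n\}^\perp .
\]
By \eqref{eq:Z_A_sa}, $Z(\widetilde{\mathfrak A})_{\mathrm{sa}}=\mathbb R I_n\oplus\mathfrak Z_0$, and every element of $\mathfrak Z_0$ is traceless, hence orthogonal to $I_n$. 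Writing $Y=cI_n+Z$ with $Z\in\mathfrak Z_0$, the condition $Y\perp I_n$ forces $c\,\tfrac n2=0$, so $c=0$. This gives \eqref{eq:july25_4.6}.

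\emph{Noncentral part.} Apply $\operatorname{Ad}_U$ to \eqref{eq:july25_2.7} in the same way:
\[
U\Phi(\mathcal S)U^\dagger=\widetilde{\mathfrak A}_{\mathrm{sa}}\cap\bigl(Z(\widetilde{\mathfrak A})_{\mathrm{sa}}\bigr)^\perp ,
\]
using items (iii)--(v) of Lemma~\ref{lem:adjoint map} and the fact that the orthogonal complement is taken inside $M_n(\mathbb C)$ with respect to the isometric inner product \eqref{eq:nHS}. Lemma~\ref{lem:D_intersection} identifies the right-hand side with $\mathfrak D$, which is \eqref{eq:july25_4.7}.

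The only delicate point is justifying that $\operatorname{Ad}_U$ commutes with taking the self-adjoint part and the orthogonal complement. I would handle this with two facts. First, $\operatorname{Ad}_U$ is a bijective unitary map of $(M_n(\mathbb C),\langle\cdot,\cdot\rangle_{\rm nHS})$, so it maps $V^\perp$ onto $(\operatorname{Ad}_UV)^\perp$. Second, it is a $*$-map, so for any subspace $V$ it maps $V_{\mathrm{sa}}$ onto $(\operatorname{Ad}_UV)_{\mathrm{sa}}$. One should also check that the complement is interpreted consistently, either in $M_n(\mathbb C)$ or in the real space of Hermitian matrices. On Hermitian matrices the nHS inner product is real, so both readings give the same intersection with $\widetilde{\mathfrak A}_{\mathrm{sa}}$. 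Everything else is bookkeeping.
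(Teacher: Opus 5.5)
Your proposal is correct and is essentially the paper's proof: both push \eqref{eq:july25_2.6} and \eqref{eq:july25_2.7} through $\Ad_U$ using Lemma~\ref{lem:adjoint map}. Both then identify $Z(\widetilde{\mathfrak A})_{\mathrm{sa}}\cap\{I_n\}^\perp$ with $\mathfrak Z_0$ by the trace condition (you via the splitting $\mathbb R I_n\oplus\mathfrak Z_0$, the paper via $\sum_k r_km_kz_k=0$, which amounts to the same thing), and identify $\widetilde{\mathfrak A}_{\mathrm{sa}}\cap(Z(\widetilde{\mathfrak A})_{\mathrm{sa}})^\perp$ with $\mathfrak D$ by Lemma~\ref{lem:D_intersection}. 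Your remark that the statement's prose swaps the pairing relative to the displayed equations is also right; the displayed pairing is the one both you and the paper establish.
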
		
	
	\begin{IEEEproof}
		First, we derive \eqref{eq:july25_4.6}. Applying \(\mathrm{Ad}_U\) to \eqref{eq:july25_2.6} and using Lemma \ref{lem:adjoint map},
		\beq
			U\Phi(\mathcal{Q})U^\dagger
			=\mathrm{Ad}_U\big(Z(\mathfrak{A})_{\mathrm{sa}}\cap\{I_n\}^\perp\big)
			=Z(\widetilde{\mathfrak{A}})_{\mathrm{sa}}\cap\{I_n\}^\perp .
			\label{eq:temp1_22aug}
		\eeq
		By \eqref{eq:july25_4.2}, a self-adjoint element of \(Z(\widetilde{\mathfrak{A}})\) has the form \(\bigoplus_k z_kI_{r_k}\otimes I_{m_k}\) with \(z_k\in\mathbb{R}\), and it is orthogonal to \(I_n\) if and only if \(\sum_k r_km_kz_k=0\). However, by \eqref{eq:july25_4.3} this set is exactly \(\mathfrak{Z}_0\). Hence by \eqref{eq:temp1_22aug},  we get  \eqref{eq:july25_4.6}. Next, we derive \eqref{eq:july25_4.7}. Applying \(\mathrm{Ad}_U\) to \eqref{eq:july25_2.7} and using Lemma \ref{lem:adjoint map},  $U\Phi(\mathcal{S})U^\dagger
		=\mathrm{Ad}_U\big(\mathfrak{A}_{\mathrm{sa}}\cap(Z(\mathfrak{A})_{\mathrm{sa}})^\perp\big)
		=\widetilde{\mathfrak{A}}_{\mathrm{sa}}\cap\big(Z(\widetilde{\mathfrak{A}})_{\mathrm{sa}}\big)^\perp=\mathfrak{D}$, 
		where the last equality is \eqref{D_intersection}.
	\end{IEEEproof}		
	
	Lemma \ref{lem: Z_0 and D} provides an  interpretation of the two coordinate subspaces \(\mathcal S, \mathcal Q\)  determined by strong alignment.  Specifically, the associated subspace
	$\Phi(\mathcal Q)$ is mapped onto the traceless Hermitian center
	$\mathfrak Z_0$, while $\Phi(\mathcal S)$ is mapped onto the 
	noncentral traceless Hermitian part $\mathfrak D$.  In this sense, the
	Artin--Wedderburn decomposition does more than merely block-diagonalize $\mathfrak A$ as shown in \eqref{eq:july25_4.1}: it separates its central 
	Hermitian matrices $\mathfrak Z_0$ (namely, Hermitian matrices commuting with each element in \(\widetilde{\mathfrak A}\)) from its noncommutative  
	observables $\mathfrak{D}$, where the latter constitute a set of observables whose evolution is decoupled from the environment  in the
	structural-decomposition result to be presented in the next subsection. 
	
	\subsection{Structural decomposition}
	\label{subsec:basis_independent_decomp}

In this subsection, we derive conditions under which there exists a set of observables of the  finite-level system \eqref{eq:irka_sys2}, whose evolution is decoupled from the input and output fields.
	
	\begin{theorem}
		\label{thm:decomp}
		Let $\mathcal S\subseteq\mathbb R^s$ be strongly aligned  and let $P_{\mathcal S}$ be its
		orthogonal projector.  Assume that
		\begin{subequations}\label{eq:zero_pattern}
			\begin{align}
				[A,P_{\mathcal S}]&=0,
				\label{eq:july25_5.4}\\
				B_\mu P_{\mathcal S}&=0,
				\qquad \mu=1,\ldots,m,
				\label{eq:july25_5.5}\\
				CP_{\mathcal S}&=0,
				\label{eq:july25_5.6}\\
				P_{\mathcal S}A_0&=0.
				\label{eq:july25_5.7}
			\end{align}
		\end{subequations}
Define observables
			\beq \label{eq: X_X_perp}
			X_{\mathcal S}\triangleq P_{\mathcal S}X,
			\qquad
			X_\perp\triangleq(I_s-P_{\mathcal S})X.
			\eeq
			Then system~\eqref{eq:irka_sys2} can be rewritten as 
			\begin{subequations}\label{eq:proj_decomp}
				\begin{align}
					dX_{\mathcal S}
					&=
					AX_{\mathcal S}\,dt,
					\label{eq:proj_S}\\
					dX_\perp
					&=
					\bigl(A_0+AX_\perp\bigr)\,dt
					+
					\sum_{\mu=1}^{m}
					B_\mu X_\perp\,dW_\mu,
					\label{eq:proj_perp}\\
					dY
					&=
					CX_\perp\,dt+dW.
					\label{eq:proj_out}
				\end{align}
			\end{subequations}
			Hence,  the  vector of observables   \(X_{\mathcal S}\) evolves
			autonomously, decoupled from both the input and output fields. 
	\end{theorem}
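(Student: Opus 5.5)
The plan is to left-multiply the QSDE \eqref{eq:irka_sys2} by $P_{\mathcal S}$ and by $I_s-P_{\mathcal S}$, and simplify each resulting term using the zero-pattern assumptions \eqref{eq:zero_pattern}. These are linear, constant-coefficient operations on the vector $X$. Hence $dX_{\mathcal S}=P_{\mathcal S}\,dX$ and $dX_\perp=(I_s-P_{\mathcal S})\,dX$, with no It\^o correction needed. Throughout I write $X=X_{\mathcal S}+X_\perp$ and use $P_{\mathcal S}^2=P_{\mathcal S}=P_{\mathcal S}^\top$.

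\emph{Step 1 (equation for $X_{\mathcal S}$).} Compute
\[
P_{\mathcal S}\,dX=P_{\mathcal S}A_0\,dt+P_{\mathcal S}AX\,dt+\sum_\mu P_{\mathcal S}B_\mu X\,dW_\mu .
\]
The first term vanishes by \eqref{eq:july25_5.7}. By \eqref{eq:july25_5.4}, $P_{\mathcal S}AX=AP_{\mathcal S}X=AX_{\mathcal S}$.

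For the noise term, note that \eqref{eq:july25_5.5} gives only $B_\mu P_{\mathcal S}=0$, whereas we need $P_{\mathcal S}B_\mu=0$. I would get this from the structure \eqref{eq:B1-from-C}--\eqref{eq:B2-from-C}. Each $B_\mu$ equals $\pm\Theta_n^-(c)$ for some real row $c$ of $C_1$ or $C_2$. By Lemma~\ref{lemL:CCR_ACR}, $\Theta_n^-(c)=\sum_j c_jF_j$ is antisymmetric, since each $F_j$ is. Therefore $P_{\mathcal S}B_\mu=-(B_\mu P_{\mathcal S})^\top=0$.

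Combining these gives $dX_{\mathcal S}=AX_{\mathcal S}\,dt$, which is \eqref{eq:proj_S}.

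\emph{Step 2 (equation for $X_\perp$ and the output).} Set $P_\perp=I_s-P_{\mathcal S}$. Then $P_\perp A_0=A_0$, and $P_\perp A X=AP_\perp X=AX_\perp$ by commutativity. From Step 1 we have $P_\perp B_\mu=B_\mu$, and $B_\mu X=B_\mu X_\perp$ because $B_\mu P_{\mathcal S}=0$. This yields \eqref{eq:proj_perp}. Finally, $CX=CX_{\mathcal S}+CX_\perp=CX_\perp$ by \eqref{eq:july25_5.6}, which gives \eqref{eq:proj_out}. The decoupling claim then follows immediately: \eqref{eq:proj_S} contains neither $dW$ nor any dependence of $Y$ on $X_{\mathcal S}$.

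The only non-routine point is the step $P_{\mathcal S}B_\mu=0$. It relies on antisymmetry of $B_\mu$, which comes from physical realizability, and is not among the listed hypotheses; I expect this to be the main obstacle. Strong alignment of $\mathcal S$ is not needed for the algebraic identities themselves. Its role is interpretive: by Lemma~\ref{lem: Z_0 and D}, the components of $X_{\mathcal S}$ span $\Phi(\mathcal S)$, which is the noncentral part $\mathfrak D$ of a genuine $C^*$-subalgebra. I would remark on this after the computation, but not use it in the derivation.
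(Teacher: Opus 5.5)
Your proposal is correct and matches the paper's proof essentially step for step. You left-multiply the QSDE by $P_{\mathcal S}$ and by $I_s-P_{\mathcal S}$, and you get the needed identity $P_{\mathcal S}B_\mu=-(B_\mu P_{\mathcal S})^\top=0$ from the antisymmetry of $B_\mu=\pm\Theta_n^-(\cdot)$ in the parametrization \eqref{eq:B1-from-C}--\eqref{eq:B2-from-C}, and, as in the paper, strong alignment plays no role in the algebraic derivation itself.
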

	
	\begin{IEEEproof}
	By \eqref{eq:july25_5.4}, the space spanned by
		$\mathcal S$ and that spanned by $\mathcal S^\perp$ are invariant under the transformation of the matrix $A$.  Moreover,
		by \eqref{eq:B1-from-C}--\eqref{eq:B2-from-C},
		$B_\mu^\top=-B_\mu$, and hence by \eqref{eq:july25_5.5},
		\beq \label{eq: P_S B}
		P_{\mathcal S}B_\mu
		=
		-(B_\mu P_{\mathcal S})^\top
		=
		0, \qquad \mu=1,\ldots,m.
		\eeq
		Left-multiplying \eqref{eq:irka_sys2} by $P_{\mathcal S}$ and using \eqref{eq:july25_5.4}, \eqref{eq:july25_5.7}, and \eqref{eq: P_S B} we get
		\eqref{eq:proj_S}.  Similarly, left-multiplying   \eqref{eq:irka_sys2} by $I_s-P_{\mathcal S}$ and  using \eqref{eq:july25_5.4}, \eqref{eq:july25_5.7}, and \eqref{eq: X_X_perp},  we get
		\eqref{eq:proj_perp}.  Finally,  \eqref{eq:july25_5.6} and \eqref{eq: X_X_perp} imply that
		$CX=C(I_s-P_{\mathcal S})X=CX_\perp$, which establishes \eqref{eq:proj_out}.
\end{IEEEproof}

\section{DFSs} \label{sec: DFS}
	
Theorem \ref{thm:decomp} identifies a set of the system observables whose evolution is not influenced by the environment.  Unfortunately, as evidenced by \eqref{eq:proj_decomp}, it does not give an explicit decomposition  of system matrices;  this is the purpose of this section.
	
	\subsection{Block-adapted bases and relative admissibility}
	\label{sec:frames}
	
In general, a coordinate transformation  does not  preserve the generalized Gell--Mann tensors \((f,d)\); instead,  it requires these structure tensors  to be transformed together with the basis. The purpose of this subsection is to construct a block-adapted  decomposition such that the leading block is a DFS under the transformed  structure tensors.

	\begin{definition}\label{def:frame}
		Let 
		$\eta=(\eta_1,\dots,\eta_s)$ be an \emph{ordered} orthonormal basis of $\Hermzero(n)$ with respect to the normalized
		Hilbert--Schmidt inner product \eqref{eq:nHS}.  Define its structure tensors
		\begin{equation}\label{eq:frame_tensors_def}
			f^\eta_{abc}
			\triangleq
			\frac{1}{4\ii}\Tr\bigl([\eta_a,\eta_b]\eta_c\bigr),
			\qquad
			d^\eta_{abc}
			\triangleq
			\frac14\Tr\bigl(\{\eta_a,\eta_b\}\eta_c\bigr).
		\end{equation}
		The matrices $F^\eta_a,D^\eta_a$ and the maps
		$\Theta^{\eta,\pm}$ are defined from $f^\eta,d^\eta$,  exactly as in
		\eqref{eq:F_and_D}--\eqref{eq:theta-definitions}.
	\end{definition}
	
	Clearly, the generalized Gell--Mann basis
	$(\lambda_1,\dots,\lambda_s)$ is a special case, with
	$f^\lambda=f$ and $d^\lambda=d$.
	
	\begin{lemma}\label{lem:frame_relations}
		For  $\eta$  in Definition \ref{def:frame},  $f^\eta$ is real and completely antisymmetric,
		$d^\eta$ is real and completely symmetric, and
		\begin{equation}\label{eq:frame_CCR}
			\begin{aligned}
				[\eta_a,\eta_b]
				&=
				2\ii\sum_{c=1}^s f^\eta_{abc}\eta_c,
				\\
				\{\eta_a,\eta_b\}
				&=
				\frac4n\delta_{ab}I_n
				+
				2\sum_{c=1}^s d^\eta_{abc}\eta_c.
			\end{aligned}
		\end{equation}
		Moreover, if
		\beq \label{eq:basis_transform}
		\eta_b=\sum_{a=1}^sR_{ab}\lambda_a
		\eeq
	for some orthogonal matrix $R\in \mathbb{R}^{s\times s}$, then the transformed and original structured tensors are related by
		\begin{equation}\label{eq:tensor_transform}
			\begin{aligned}
				f^\eta_{abc}
				&=
				\sum_{a',b',c'=1}^s
				f_{a'b'c'}R_{a'a}R_{b'b}R_{c'c},\\
				d^\eta_{abc}
				&=
				\sum_{a',b',c'=1}^s
				d_{a'b'c'}R_{a'a}R_{b'b}R_{c'c}.
			\end{aligned}
		\end{equation}
	\end{lemma}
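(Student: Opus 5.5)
The argument splits into three parts: reality and symmetry of the tensors, the two product formulas, and the transformation law. All three use only the trace and Hermiticity of the $\eta_a$ and their orthonormality under \eqref{eq:nHS}.

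\emph{Reality and symmetry.} Since each $\eta_a$ is Hermitian, $\Tr(\eta_a\eta_b\eta_c)^* = \Tr\bigl((\eta_a\eta_b\eta_c)^\dagger\bigr) = \Tr(\eta_c\eta_b\eta_a)$. Hence
\[
\Tr([\eta_a,\eta_b]\eta_c)^* = \Tr(\eta_c\eta_b\eta_a) - \Tr(\eta_c\eta_a\eta_b) = -\Tr([\eta_a,\eta_b]\eta_c)
\]
by cyclicity, so this trace is purely imaginary and $f^\eta_{abc}$ is real. The same computation shows that $\Tr(\{\eta_a,\eta_b\}\eta_c)$ is real, so $d^\eta_{abc}$ is real. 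For the symmetries, cyclicity of the trace shows that $\Tr([\eta_a,\eta_b]\eta_c)$ is invariant under cyclic permutations of $(a,b,c)$, and it is clearly antisymmetric in $(a,b)$. This gives complete antisymmetry of $f^\eta$. The analogous argument gives complete symmetry of $d^\eta$.

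\emph{Product formulas.} The matrices $I_n,\eta_1,\dots,\eta_s$ are orthogonal under \eqref{eq:nHS} and span the Hermitian matrices, because $\eta$ is an orthonormal basis of $\Hermzero(n)$. The matrix $[\eta_a,\eta_b]/(2\ii)$ is Hermitian and traceless, so it lies in the span of the $\eta_c$. Its coefficient on $\eta_c$ is $\langle\eta_c,\cdot\rangle_{\rm nHS} = \frac{1}{4\ii}\Tr([\eta_a,\eta_b]\eta_c) = f^\eta_{abc}$, which gives the commutator formula.

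The anticommutator $\{\eta_a,\eta_b\}/2$ is Hermitian. Its component along $I_n$ is $\frac{\Tr(\{\eta_a,\eta_b\})}{2n}I_n = \frac{2}{n}\delta_{ab}I_n$, using $\Tr(\eta_a\eta_b)=2\delta_{ab}$. Its coefficient on $\eta_c$ is $d^\eta_{abc}$. Multiplying by $2$ gives the second identity in \eqref{eq:frame_CCR}.

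\emph{Transformation law.} Substitute \eqref{eq:basis_transform} into \eqref{eq:frame_tensors_def} and expand trilinearly:
\[
f^\eta_{abc} = \sum_{a',b',c'} R_{a'a}R_{b'b}R_{c'c}\,\frac{1}{4\ii}\Tr\bigl([\lambda_{a'},\lambda_{b'}]\lambda_{c'}\bigr).
\]
By \eqref{lambda_ccr} and \eqref{eq:lambda}, the inner trace factor equals $f_{a'b'c'}$, which gives the first line of \eqref{eq:tensor_transform}. The formula for $d^\eta$ follows in the same way from \eqref{lambda_anticcr}; the identity term drops out because each $\lambda_{c'}$ is traceless.

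No step here is a serious obstacle. The orthogonality of $R$ is needed only to guarantee that $\eta$ is orthonormal. The one point requiring care is the normalization: the factor $\frac{1}{2}$ in \eqref{eq:nHS} combined with the factors $2$ in \eqref{eq:frame_CCR} must yield the constants $\frac{1}{4\ii}$ and $\frac14$ in \eqref{eq:frame_tensors_def}.
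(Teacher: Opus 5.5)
Your proposal is correct and follows the same route as the paper's proof: reality of the tensors from Hermiticity of the $\eta_a$, (anti)symmetry from the trace definitions, coefficient extraction via the normalized Hilbert--Schmidt inner product for \eqref{eq:frame_CCR}, and trilinear substitution of \eqref{eq:basis_transform} for \eqref{eq:tensor_transform}. You also supply details the paper leaves implicit, namely the cyclicity argument for complete (anti)symmetry and the explicit evaluation $\frac{1}{4\ii}\Tr([\lambda_{a'},\lambda_{b'}]\lambda_{c'})=f_{a'b'c'}$ together with its $d$-analogue.
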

	
	\begin{IEEEproof}
		Since each $\eta_a$ is Hermitian,
		$\Tr([\eta_a,\eta_b]\eta_c)$ is purely imaginary, whereas
		$\Tr(\{\eta_a,\eta_b\}\eta_c)$ is real. Thus, both   $f^\eta$ and $d^\eta$ are real.  The antisymmetry of $f^\eta$ and 
		symmetry of $d^\eta$  are immediate consequences of \eqref{eq:frame_tensors_def}. The operator $\frac1{2\ii}[\eta_a,\eta_b]$ is traceless Hermitian, and by \eqref{eq:frame_tensors_def} its
		coefficient along $\eta_c$ is
		\[
		\left\langle
		\eta_c,\frac1{2\ii}[\eta_a,\eta_b]
		\right\rangle_{\rm nHS}
		=
		f^\eta_{abc}.
		\]
		Similarly,  it can be easily shown that $\{\eta_a,\eta_b\}$ has trace $4\delta_{ab}$, the identity component
		$\frac4n\delta_{ab}I_n$, and traceless component \(\eta^c\) with coefficient
		$2d^\eta_{abc}$.  This proves \eqref{eq:frame_CCR}.  Eq. \eqref{eq:tensor_transform} follows by substituting the basis expansion \eqref{eq:basis_transform} into
		\eqref{eq:frame_tensors_def}.
	\end{IEEEproof}
	
	\begin{definition}\label{def:rel_admissible}
		Given $\eta$  in Definition \ref{def:frame},  a matrix $T\in\mathbb R^{s\times s}$ is called
		\emph{$(\lambda\!\to\!\eta)$-admissible} if $T^\top T=I_s$, and for every
		$\beta\in\mathbb C^s$,
		\begin{equation}\label{eq:rel_admissible}
			\begin{aligned}
				\Theta^{\eta,-}(T^\top\beta)
				&=
				T^\top\Theta_n^-(\beta)T,
				\\
				\Theta^{\eta,+}(T^\top\beta)
				&=
				T^\top\Theta_n^+(\beta)T.
			\end{aligned}
		\end{equation}
		Definition~\ref{def:admissible} is recovered when $\eta=\lambda$.
	\end{definition}
	
	The following result shows that a $(\lambda\!\to\!\eta)$-admissible map transforms the generalized Gell-Mann matrices to observables which preserve commutation and anticommutation relations, relative to the basis $\eta$.
	
	\begin{lemma}\label{lem:rotation_frame}
		Given $\eta$  in Definition \ref{def:frame} and let $T\in\mathbb R^{s\times s}$.  Then the transformed observables
		$\widetilde X=T^\top X$ satisfy
		\begin{equation}\label{eq:frame_XX}
			\begin{aligned}
				[\widetilde X,\widetilde X^\top]
				&=
				2\ii\Theta^{\eta,-}(\widetilde X),
				\\
				\{\widetilde X,\widetilde X^\top\}
				&=
				\frac4nI_s+2\Theta^{\eta,+}(\widetilde X)
			\end{aligned}
		\end{equation}
		if and only if $T$ is $(\lambda\!\to\!\eta)$-admissible.
	\end{lemma}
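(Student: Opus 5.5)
The plan is to copy the proof of Lemma~\ref{lem:rotation} line by line, replacing the target tensors $(f,d)$ by $(f^\eta,d^\eta)$ while keeping $(f,d)$ on the source side. The first step turns \eqref{eq:rel_admissible} into componentwise identities. Lemma~\ref{lemL:CCR_ACR} only uses complete (anti)symmetry, and Lemma~\ref{lem:frame_relations} guarantees this for $f^\eta,d^\eta$. Hence $\Theta^{\eta,-}(\beta)=\sum_j\beta_jF^\eta_j$ and $\Theta^{\eta,+}(\beta)=\sum_j\beta_jD^\eta_j$. Expanding $T^\top\beta$ exactly as in the proof of Lemma~\ref{lem:rotation} then gives $\Theta^{\eta,-}(T^\top\beta)=\sum_j\beta_j\sum_kT_{jk}F^\eta_k$. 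Since $\beta\in\mathbb C^s$ is arbitrary, the first identity in \eqref{eq:rel_admissible} is equivalent to
\[
T^\top F_jT=\sum_{k=1}^sT_{jk}F^\eta_k,\qquad j=1,\dots,s,
\]
and the second is equivalent to $T^\top D_jT=\sum_kT_{jk}D^\eta_k$ for all $j$.

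The second step treats the commutator. By \eqref{eq:CCR}, $[\widetilde X,\widetilde X^\top]=T^\top[X,X^\top]T=2\ii\sum_j\lambda_jT^\top F_jT$. Using Remark~\ref{rem:op_mat} and the expansion of $\Theta^{\eta,-}$ from the first step, $\Theta^{\eta,-}(T^\top X)=\sum_j\lambda_j\sum_kT_{jk}F^\eta_k$. The first identity in \eqref{eq:frame_XX} is therefore $\sum_j\lambda_j\bigl(T^\top F_jT-\sum_kT_{jk}F^\eta_k\bigr)=0$. This is an $s\times s$ array of matrices in $M_n(\mathbb C)$ with real coefficients. Linear independence of $\lambda_1,\dots,\lambda_s$ makes it equivalent to the $F$-identities of the first step.

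The third step treats the anticommutator in the same way. By \eqref{eq:CCR_ACR}, $\{\widetilde X,\widetilde X^\top\}=\frac4nT^\top T+2\sum_j\lambda_jT^\top D_jT$. The second identity in \eqref{eq:frame_XX} thus becomes
\[
\frac4n(I_s-T^\top T)I_n+2\sum_j\lambda_j\Bigl(\sum_kT_{jk}D^\eta_k-T^\top D_jT\Bigr)=0 .
\]
Linear independence of $I_n,\lambda_1,\dots,\lambda_s$ splits this into $T^\top T=I_s$ together with the $D$-identities. Combining the two directions gives the stated equivalence with $(\lambda\!\to\!\eta)$-admissibility.

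The main point needing care is that $\eta$ enters only through the constant tensors on the right-hand side, while the observables $\widetilde X=T^\top X$ are still expanded in the $\lambda$ basis. Independence must therefore always be invoked for $\{I_n,\lambda_j\}$ and never for $\{\eta_j\}$. It is also essential that Lemma~\ref{lem:frame_relations} supplies the complete (anti)symmetry needed to reuse Lemma~\ref{lemL:CCR_ACR} for $\Theta^{\eta,\pm}$. Once these two facts are in place, the argument is routine.
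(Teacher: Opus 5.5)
Your proposal is correct and matches the paper's proof, which likewise reruns the argument of Lemma~\ref{lem:rotation} with $F_k,D_k$ replaced by $F^\eta_k,D^\eta_k$ on the target side. Like the paper, you rely only on the complete (anti)symmetry of $f^\eta,d^\eta$ from Lemma~\ref{lem:frame_relations} and the linear independence of $I_n,\lambda_1,\dots,\lambda_s$; your componentwise identities $T^\top F_jT=\sum_kT_{jk}F^\eta_k$ and $T^\top D_jT=\sum_kT_{jk}D^\eta_k$ just make explicit what the paper leaves implicit.
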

	
	\begin{IEEEproof}
		The proof is identical to that of Lemma~\ref{lem:rotation}, with 
		matrices $F_k,D_k$ replaced by $F^\eta_k,D^\eta_k$.  It uses only the
		linear independence of the matrices $I_n,\lambda_1,\dots,\lambda_s$ and the 
		antisymmetry and symmetry established in Lemma~\ref{lem:frame_relations}.
	\end{IEEEproof}
	
	The following lemma gives the construction of   $(\lambda\!\to\!\eta)$-admissible maps, which is a counterpart of Lemma \ref{lem:T}.
	
	\begin{lemma}\label{lem:T_frame}
		Given $\eta$  in Definition \ref{def:frame},   and et $U\in SU(n)$, and define
		$T\in\mathbb R^{s\times s}$ by
		\begin{equation}\label{eq:T_frame}
			U^\dagger\eta_jU
			=
			\sum_{k=1}^sT_{kj}\lambda_k,
		\end{equation}
			where
		$T_{kj}
		=
		\langle\lambda_k,U^\dagger\eta_jU\rangle_{\rm nHS}$.  Then $T$ is $(\lambda\!\to\!\eta)$-admissible.
	\end{lemma}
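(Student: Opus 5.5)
We prove Lemma~\ref{lem:T_frame} by reducing it, through Lemma~\ref{lem:frame_relations}, to an identity between structure tensors.

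\textbf{Step 1: $T$ is orthogonal.} The matrices $U^\dagger\eta_jU$ are traceless Hermitian, and conjugation by $U$ preserves the normalized Hilbert--Schmidt inner product. Hence $\{U^\dagger\eta_jU\}_{j=1}^s$ is again an orthonormal basis of $\Hermzero(n)$. Consequently
\[
(T^\top T)_{jk}=\langle U^\dagger\eta_jU,U^\dagger\eta_kU\rangle_{\rm nHS}=\langle\eta_j,\eta_k\rangle_{\rm nHS}=\delta_{jk},
\]
so $T^\top T=I_s$. This is the same argument as in Lemma~\ref{lem:T}.

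\textbf{Step 2: reduce \eqref{eq:rel_admissible} to a tensor identity.} Expanding with Lemma~\ref{lemL:CCR_ACR} for both $\Theta_n^\pm$ and $\Theta^{\eta,\pm}$, exactly as in the proof of Lemma~\ref{lem:rotation}, the first identity in \eqref{eq:rel_admissible} becomes
\[
T^\top F_jT=\sum_k T_{jk}F^\eta_k \quad \text{for all } j .
\]
Taking the $(a,b)$ entry gives
\[
\sum_{m,p} f_{jmp}T_{ma}T_{pb}=\sum_k T_{jk}f^\eta_{kab}.
\]
Multiplying by $T_{jc}$, summing over $j$, and using $TT^\top=I_s$ (which follows from Step 1, since $T$ is square), this is equivalent to
\[
f^\eta_{cab}=\sum_{j,m,p} f_{jmp}T_{jc}T_{ma}T_{pb}.
\]
Up to relabeling, this is the first identity in \eqref{eq:tensor_transform} with $R=T$. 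The same reduction applies to $d$ and $\Theta^{\pm,+}$.

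\textbf{Step 3: verify the tensor identity.} Define $\widehat\eta_j:=U^\dagger\eta_jU=\sum_k T_{kj}\lambda_k$. This is an orthogonal change of basis of the form \eqref{eq:basis_transform} with $R=T$. By Lemma~\ref{lem:frame_relations}, the structure tensors of $\widehat\eta$ are exactly
\[
f^{\widehat\eta}_{abc}=\sum f_{a'b'c'}T_{a'a}T_{b'b}T_{c'c},
\]
and similarly for $d^{\widehat\eta}$. It therefore remains to show that $f^{\widehat\eta}=f^\eta$ and $d^{\widehat\eta}=d^\eta$. This follows from \eqref{eq:frame_tensors_def} and cyclicity of the trace:
\[
\Tr\bigl([U^\dagger\eta_aU,U^\dagger\eta_bU]\,U^\dagger\eta_cU\bigr)=\Tr\bigl(U^\dagger[\eta_a,\eta_b]\eta_cU\bigr)=\Tr\bigl([\eta_a,\eta_b]\eta_c\bigr),
\]
and likewise with anticommutators in place of commutators.

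\textbf{Main obstacle.} The only delicate point is index bookkeeping in Step 2. One must match the conventions $(F_a)_{bc}=f_{abc}$ and $\Theta(\beta)=\sum_j\beta_jF_j$ with the order of the $R$ factors in \eqref{eq:tensor_transform}. One must also check that the resulting contraction uses $T$, not $T^\top$, in each slot; complete (anti)symmetry of $f$ and $d$ lets us permute indices freely. The invertibility step, contracting with $T_{jc}$, relies on $T$ being orthogonal, which Step 1 supplies.
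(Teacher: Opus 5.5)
Your proof is correct, but it takes a different route from the paper's.

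\textbf{The paper's route.} It works at the level of observables. It notes that $\widetilde X_j=(Te_j)^\top X=\Phi(Te_j)=U^\dagger\eta_jU$, conjugates the relations \eqref{eq:frame_CCR} by $U$ to obtain \eqref{eq:frame_XX} for $\widetilde X$, and then invokes Lemma~\ref{lem:rotation_frame}.

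\textbf{Your route.} You work at the level of structure tensors. For orthogonal $T$, you show that \eqref{eq:rel_admissible} is equivalent to the transformation law \eqref{eq:tensor_transform} with $R=T$. You then verify that law by applying Lemma~\ref{lem:frame_relations} to the conjugated basis $U^\dagger\eta U$ and using unitary invariance of the trace.

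\textbf{What each buys.} The paper's version is shorter once Lemma~\ref{lem:rotation_frame} is available. It also records the identity \eqref{eq:temp1_21aug}, which is reused in part (ii) of the proof of Theorem~\ref{thm:decomp_frame}. Your version bypasses Lemma~\ref{lem:rotation_frame} entirely. It also gives a cleaner intrinsic characterization: an orthogonal $T$ is $(\lambda\!\to\!\eta)$-admissible exactly when the basis $\Phi(Te_1),\ldots,\Phi(Te_s)$ has the same structure tensors $(f^\eta,d^\eta)$ as $\eta$. This makes it transparent why any unitary conjugation works.

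\textbf{A bookkeeping remark.} Contracting with $T_{jc}$ over $j$ actually uses $T^\top T=I_s$, not $TT^\top=I_s$. The direction you need for the lemma, from the tensor law to \eqref{eq:rel_admissible}, is obtained by contracting with $T_{j'c}$ over $c$, and that step uses $TT^\top=I_s$. Both identities hold because $T$ is square and orthogonal, so nothing breaks.
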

	
	\begin{IEEEproof}
		Clearly, the family
		$(U^\dagger\eta_1U,\dots,U^\dagger\eta_sU)$ is an orthonormal basis of
		$\Hermzero(n)$, so $T^\top T=I_s$.  Furthermore,
		\beq \label{eq:temp1_21aug}
		\widetilde X_j 
		=
		e_j^\top T^\top X
		=
		(Te_j)^\top X
		=
		\Phi(Te_j)
		=
		U^\dagger\eta_jU.
		\eeq
		Unitary conjugation of \eqref{eq:frame_CCR} therefore gives
		\begin{align*}
			[\widetilde X_a,\widetilde X_b]
			&=
			2\ii\sum_c f^\eta_{abc}\widetilde X_c,
			\\
			\{\widetilde X_a,\widetilde X_b\}
			&=
			\frac4n\delta_{ab}I_n
			+
			2\sum_c d^\eta_{abc}\widetilde X_c,
		\end{align*}
	which establish \eqref{eq:frame_XX} via  \eqref{eq:frame_CCR}.  The result then follows from
		Lemma~\ref{lem:rotation_frame}.
	\end{IEEEproof}
	
	Lemma \ref{lem:T_frame} tells us that given the generalized Gell-Mann matrices  $\{\lambda_1,\ldots, \lambda_s\}$ and another orthonormal basis $\{\eta_1,\ldots, \eta_s\}$, each $U\in SU(n)$ induces a $(\lambda\!\to\!\eta)$-admissible map $T$.
	
	The following result, echoing Lemma \ref{lem:rotation_frame},  says that a $(\lambda\!\to\!\eta)$-admissible map  indeed transforms the quantum $n$-level system \eqref{eq:irka_sys2}  to a  quantum $n$-level system relative to the  orthonormal basis $\{\eta_1,\ldots, \eta_s\}$.
	
	\begin{lemma}\label{lem:PR_frame}
		Suppose the matrices of the system \eqref{eq:irka_sys2} are given by
		\eqref{eq:N-from-C} for physical parameters
		$\theta=(\alpha,C_1,C_2)$ in \eqref{eq:physical_parameters}, and let a matrix   $T\in\mathbb{R}^{s\times s}$ be
		$(\lambda\!\to\!\eta)$-admissible.  Define $
		\widetilde\alpha=T^\top\alpha$,
		$\widetilde C_i=C_iT$, and $
		\widetilde\Lambda=\Lambda T$.
		Then $
		\widetilde A_0=T^\top A_0$,
		$\widetilde A=T^\top AT$, and
		$\widetilde B_\mu=T^\top B_\mu T$ 
		are the counterpart of the matrices in \eqref{eq:N-from-C} under the orthonormal basis $\eta$.   
		Moreover, the Hamiltonian and coupling operators are
		\begin{equation}\label{eq:HL_frame_invariance}
			H
			=
			\alpha^\top X
			=
			\widetilde\alpha^\top\widetilde X,
			\qquad
			L
			=
			\Lambda X
			=
			\widetilde\Lambda\widetilde X.
		\end{equation}
	\end{lemma}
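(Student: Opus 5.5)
The plan is to verify, formula by formula, that each matrix in \eqref{eq:N-from-C}, rewritten with $\Theta^{\eta,\pm}$ in place of $\Theta_n^\pm$ and with $(\widetilde\alpha,\widetilde C_1,\widetilde C_2,\widetilde\Lambda)$ in place of $(\alpha,C_1,C_2,\Lambda)$, equals the congruence $T^\top(\cdot)T$ (or $T^\top(\cdot)$ for $A_0$) of the original matrix. The only tool needed is the relative admissibility identity \eqref{eq:rel_admissible}, used together with $T^\top T=I_s$. Since $T$ is real, $\widetilde\Lambda^\#=\Lambda^\#T$. For row vectors, the convention $\Theta(\beta)\equiv\Theta(\beta^\top)$ gives $\Theta^{\eta,\pm}((\Lambda T)_k)=\Theta^{\eta,\pm}(T^\top\Lambda_k^\top)=T^\top\Theta_n^\pm(\Lambda_k)T$. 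The analogous identities hold for $(\widetilde C_i)_k=(C_i)_kT$ and for $\widetilde\alpha=T^\top\alpha$.

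First, the $B$ matrices. From \eqref{eq:B1-from-C} we get $\Theta^{\eta,-}((\widetilde C_2)_k)=T^\top\Theta_n^-((C_2)_k)T=T^\top B_{1k}T$, and $B_{2k}$ is handled the same way. Next, $A$: the Hamiltonian term gives $-2\Theta^{\eta,-}(\widetilde\alpha)=T^\top(-2\Theta_n^-(\alpha))T$. In $R_k$ and $Q_k$ each product of two $\Theta$'s becomes $T^\top\Theta_n(\cdot)TT^\top\Theta_n(\cdot)T$, and this collapses to $T^\top\Theta_n(\cdot)\Theta_n(\cdot)T$ because $TT^\top=I_s$, which follows from $T^\top T=I_s$ for a square $T$. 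Hence $\widetilde R_k-\ii\widetilde Q_k=T^\top(R_k-\ii Q_k)T$, and so $\widetilde A=T^\top AT$. For $A_0$, we have $\Theta^{\eta,-}(\widetilde\Lambda_k^\#)\widetilde\Lambda_k^\top=T^\top\Theta_n^-(\Lambda_k^\#)TT^\top\Lambda_k^\top=T^\top\Theta_n^-(\Lambda_k^\#)\Lambda_k^\top$, which gives $\widetilde A_0=T^\top A_0$. The physical realizability constraint \eqref{eq:Ar-drift_0} transfers by the same argument: apply $T^\top(\cdot)T$ to both sides and use $\Theta^{\eta,+}(T^\top A_0)=T^\top\Theta_n^+(A_0)T$. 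This confirms that the transformed data are the $\eta$-basis counterpart of the original system.

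Finally, \eqref{eq:HL_frame_invariance} is immediate. Since $\widetilde X=T^\top X$ and $TT^\top=I_s$, we have $\widetilde\alpha^\top\widetilde X=\alpha^\top TT^\top X=\alpha^\top X$ and $\widetilde\Lambda\widetilde X=\Lambda TT^\top X=\Lambda X$.

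The main obstacle is bookkeeping rather than any genuinely hard step. One must apply \eqref{eq:rel_admissible} to complex arguments $\Lambda_k,\Lambda_k^\#$; this is allowed because the identity is stated for all $\beta\in\mathbb C^s$. One must also keep the row/column convention straight, and justify inserting $TT^\top=I_s$ between factors. One further point to state explicitly is what ``counterpart under the orthonormal basis $\eta$'' means: by Lemma~\ref{lem:rotation_frame}, $\widetilde X$ satisfies the $\eta$-relations \eqref{eq:frame_XX}. The $(S,L,H)$ derivation of \eqref{eq:N-from-C} in \cite{Duffaut-et-al_2016} uses only these relations, so it yields exactly the formulas above with $\Theta^{\eta,\pm}$. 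The direct congruence computation confirms this consistency.
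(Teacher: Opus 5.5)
Your proposal is correct and follows essentially the same route as the paper's proof. Both apply \eqref{eq:rel_admissible} term by term, including for the complex arguments $\Lambda_k,\Lambda_k^\#$, insert $TT^\top=I_s$ between the paired $\Theta$ factors in $R_k,Q_k$, and read off \eqref{eq:HL_frame_invariance} from $TT^\top=I_s$. You also write out the $A_0$ term and the transfer of \eqref{eq:Ar-drift_0}, both of which the paper leaves implicit.
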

	
	\begin{IEEEproof}
		Using \eqref{eq:rel_admissible}, $
		\widetilde B_{1k}
		=
		\Theta^{\eta,-}\bigl((\widetilde C_2)_k\bigr)$, and
		$\widetilde B_{2k}
		=
		-\Theta^{\eta,-}\bigl((\widetilde C_1)_k\bigr)$. 
		Since $T$ is real,
		$\widetilde\Lambda^\#=\Lambda^\#T$.  Using
		$TT^\top=I_s$ and \eqref{eq:rel_admissible} for complex arguments gives
		the $\eta$-basis forms of $R_k$ and $Q_k$, and hence the $\eta$-basis
		version of \eqref{eq:A-full}.   Finally,  as $	\widetilde\alpha^\top\widetilde X
			=
			\alpha^\top TT^\top X
			=
			\alpha^\top X$, and $\widetilde\Lambda\widetilde X
			=
			\Lambda TT^\top X
			=
			\Lambda X$,
		which prove \eqref{eq:HL_frame_invariance}.
	\end{IEEEproof}
	
	\begin{definition}\label{def:block_frame}
		Let $\mathcal S\subseteq\mathbb R^s$ be strongly aligned in the sense of
		Definition~\ref{def:strong_alignment},  and with Artin--Wedderburn data
		$\{(r_k,m_k)\}_{k=1}^K$ as given in \eqref{eq:AW data}. Set $
		p\triangleq\sum_{k=1}^K(r_k^2-1)$. 
		An orthonormal basis $\eta$ of $\Hermzero(n)$  is called \emph{block adapted} if
		\beq\label{eq:span_D}
		\Span_{\mathbb R}\{\eta_1,\dots,\eta_p\}
		=
		\mathfrak D,
		\eeq
		and
		\beq\label{eq:span_Z_0}
		\Span_{\mathbb R}
		\{\eta_{p+1},\dots,\eta_{p+K-1}\}
		=
		\mathfrak Z_0,
		\eeq
		where $\mathfrak D$ and $\mathfrak Z_0$  are given in \eqref{eq:july25_4.4} and \eqref{eq:july25_4.3}, respectively.
	\end{definition}
	
	\bmrk
	Block-adapted bases always exist.  By Lemma~\ref{lem:D_intersection},
	$\mathfrak D$ and $\mathfrak Z_0$ are mutually orthogonal subspaces of
	$\Hermzero(n)$.  One may therefore choose orthonormal bases  of these two
	subspaces and complete their union to get an orthonormal basis of
	$\Hermzero(n)$.
	\emrk
	
	The following is the main result of this section. 
	
	\begin{theorem}
		\label{thm:decomp_frame}
		Assume the hypotheses of Theorem~\ref{thm:decomp}.  Let $U\in SU(n)$ realize
			\eqref{eq:july25_4.1}. Then let $\eta$ be the associated block-adapted basis, and define the corresponding $T$ as in Lemma \ref{lem:T_frame}. The following statements hold.
		
		\begin{enumerate}[label=(\roman*)]
			\item
			$T$ is $(\lambda\!\to\!\eta)$-admissible.  Thus, the transformed observables 
			$\widetilde X=T^\top X$ satisfies the  commutation and
			anticommutation relations \eqref{eq:frame_XX}, and the transformed system is physically
			realizable with respect to the block-adapted basis $\eta$.
			
			\item
			The first \(p\) columns of \(T\) form an orthonormal basis of
			\(\mathcal S\).  Columns \(p+1,\ldots,p+K-1\) form an orthonormal
			basis of \(\mathcal Q\).  The remaining
			\(s-p-K+1\) columns form an orthonormal basis of $
			(\mathcal S\oplus\mathcal Q)^\perp$. 
			
			\item
			Partition
			\beq \label{eq:temp_T_21aug}
			T=
			\begin{bmatrix}
				T_{\mathcal S}&T_\perp
			\end{bmatrix},
			\qquad
			T_{\mathcal S}\in\mathbb R^{s\times p}.
			\eeq
			The transformed system matrices have the block forms
			\begin{subequations}\label{eq:diagonal}
				\begin{align}
					\widetilde A
					&=
					T^\top AT
					=
					\begin{bmatrix}
						A_{\mathcal S}&0\\
						0&A_\perp
					\end{bmatrix},
					\label{eq:july25_5.8}\\
					\widetilde B_\mu
					&=
					T^\top B_\mu T
					=
					\begin{bmatrix}
						0&0\\
						0&B_{\mu,\perp}
					\end{bmatrix},
					\label{eq:july25_5.9}\\
					\widetilde C
					&=
					CT
					=
					\begin{bmatrix}
						0&C_\perp
					\end{bmatrix},
					\label{eq:july25_5.10}\\
					\widetilde A_0
					&=
					T^\top A_0
					=
					\begin{bmatrix}
						0\\
						A_{0,\perp}
					\end{bmatrix}.
					\label{eq:july25_5.11}
				\end{align}
			\end{subequations}
			
			\item
			Denote the transformed system observables by 
		$
			\widetilde X 	=\left[
\begin{smallmatrix}
				\widetilde X_{\mathcal S}\\
				\widetilde X_\perp
\end{smallmatrix}\right]= T^\top X$.
			Then
			\begin{subequations}\label{eq:block_sector_dynamics}
				\begin{align}
					d\widetilde X_{\mathcal S}
					&=
					A_{\mathcal S}\widetilde X_{\mathcal S}\,dt,
					\label{eq:july26_8.4}\\
					d\widetilde X_\perp
					&=
					\bigl(A_{0,\perp}
					+A_\perp\widetilde X_\perp\bigr)\,dt
					+
					\sum_{\mu=1}^m
					B_{\mu,\perp}\widetilde X_\perp\,dW_\mu,
					\label{eq:july26_8.5}\\
					dY
					&=
					C_\perp\widetilde X_\perp\,dt+dW.
					\label{eq:july26_8.6}
				\end{align}
			\end{subequations}
			Clearly, the subsystem in \eqref{eq:july26_8.4} is a DFS of the finite-level quantum system \ref{eq:irka_sys2}  in the $\eta$-basis.
		\end{enumerate}
	\end{theorem}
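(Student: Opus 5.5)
Item (i) follows at once from Lemma~\ref{lem:T_frame}: $T$ is defined exactly as there from $U$ and the block-adapted basis $\eta$. Lemma~\ref{lem:rotation_frame} then gives the relations \eqref{eq:frame_XX}. Lemma~\ref{lem:PR_frame} shows that $\widetilde A_0,\widetilde A,\widetilde B_\mu$ are the $\eta$-basis counterparts of \eqref{eq:N-from-C} for $\widetilde\alpha=T^\top\alpha$, $\widetilde C_i=C_iT$. This is physical realizability relative to $\eta$.

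For (ii), the key identity is \eqref{eq:temp1_21aug}: $\Phi(Te_j)=U^\dagger\eta_jU$. By \eqref{eq:span_D} and \eqref{eq:july25_4.7}, $\Span_{\mathbb R}\{U^\dagger\eta_jU\}_{j\le p}=U^\dagger\mathfrak D U=\Phi(\mathcal S)$. Since $\Phi$ is an injective isometry (\eqref{eq:isometry_4sep}), $\{Te_j\}_{j\le p}$ spans $\mathcal S$. Similarly, \eqref{eq:span_Z_0} and \eqref{eq:july25_4.6} give that columns $p+1,\dots,p+K-1$ span $\mathcal Q$. Orthonormality of the columns comes from $T^\top T=I_s$. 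The remaining columns are then orthonormal and orthogonal to $\mathcal S\oplus\mathcal Q$, so they span its orthogonal complement by a dimension count. Here one uses $\dim\mathcal S=p$ and $\dim\mathcal Q=K-1$, which follow from the dimension formulas for $\mathfrak D$ and $\mathfrak Z_0$ together with injectivity of $\Phi$.

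For (iii), (ii) gives $P_{\mathcal S}=T_{\mathcal S}T_{\mathcal S}^\top$, with $P_{\mathcal S}T_{\mathcal S}=T_{\mathcal S}$ and $P_{\mathcal S}T_\perp=0$. We have $T^\top P_{\mathcal S}T=\operatorname{diag}(I_p,0)$. Conjugating \eqref{eq:july25_5.4} by $T$ shows that $\widetilde A$ commutes with $\operatorname{diag}(I_p,0)$, so $\widetilde A$ is block diagonal. From \eqref{eq:july25_5.5}, $B_\mu T_{\mathcal S}=B_\mu P_{\mathcal S}T_{\mathcal S}=0$. Antisymmetry of $B_\mu$ gives $P_{\mathcal S}B_\mu=0$ (as in \eqref{eq: P_S B}), hence $T_{\mathcal S}^\top B_\mu=0$. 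Together these yield \eqref{eq:july25_5.9}. Likewise, \eqref{eq:july25_5.6} gives $CT_{\mathcal S}=0$. Finally, \eqref{eq:july25_5.7} gives $T_{\mathcal S}^\top A_0=T_{\mathcal S}^\top P_{\mathcal S}A_0=0$.

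For (iv), substitute $X=T\widetilde X$ into \eqref{eq:irka_sys2}, left-multiply by $T^\top$, and use the block forms of (iii). This produces \eqref{eq:block_sector_dynamics} directly. It is equivalently the coordinate version of Theorem~\ref{thm:decomp}. The DFS interpretation then holds because $\widetilde X_{\mathcal S}$ is neither driven by $dW$ nor present in $dY$.

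The main obstacle is (ii). The correspondence between the columns of $T$ and the subspaces relies on translating $U^\dagger\mathfrak D U=\Phi(\mathcal S)$ back through $\Phi$. This requires care that $T_{kj}$ as defined in Lemma~\ref{lem:T_frame} indeed means $Te_j=\Phi^{-1}(U^\dagger\eta_jU)$.
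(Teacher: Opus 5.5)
Your proposal is correct and follows the paper's proof essentially step for step. Item (i) comes from Lemmas~\ref{lem:T_frame} and~\ref{lem:PR_frame}. Item (ii) uses $\Phi(Te_j)=U^\dagger\eta_jU$ together with $U^\dagger\mathfrak D U=\Phi(\mathcal S)$ and $U^\dagger\mathfrak Z_0U=\Phi(\mathcal Q)$. Items (iii)--(iv) use $P_{\mathcal S}=T_{\mathcal S}T_{\mathcal S}^\top$ and the antisymmetry of $B_\mu$. There are two cosmetic differences. You get block-diagonality of $\widetilde A$ from $[\widetilde A,\operatorname{diag}(I_p,0)]=0$, whereas the paper computes $T_\perp^\top AT_{\mathcal S}$ and $T_{\mathcal S}^\top AT_\perp$ directly. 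You also spell out the dimension count for the last $s-p-K+1$ columns, which the paper leaves implicit.
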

	
	\begin{IEEEproof}
		Part (i) follows from Lemmas~\ref{lem:T_frame} and \ref{lem:PR_frame}.
		
		For Part (ii), if $j\le p$, then by \eqref{eq:span_D} we get $\eta_j\in\mathfrak D$.  Moreover, by \eqref{eq:july25_4.7},  $
		U^\dagger\mathfrak DU=\Phi(\mathcal S)$, 
		which, together with \eqref{eq:temp1_21aug}, gives
		\beq \label{eq:31aug_temp1}
		\Phi(Te_j)
		=
		U^\dagger\eta_jU
		\in
		\Phi(\mathcal S).
		\eeq
		Thus $Te_j\in\mathcal S$.   Clearly, these $p$ columns are orthonormal, and as  $
		p=\dim_{\mathbb R}\mathfrak D=\dim\mathcal S$,   they form an orthonormal basis of $\mathcal S$.  The assertion concerning
		$\mathcal Q$ follows in the same way from
		$U^\dagger\mathfrak Z_0U=\Phi(\mathcal Q)$ in Lemma \ref{lem: Z_0 and D}.
		
		For Part (iii), by \eqref{eq:temp_T_21aug}  and item (ii) proved above,
		\beq \label{eq:P and perp_21aug}
		P_{\mathcal S}
		=
		T_{\mathcal S}T_{\mathcal S}^\top,
		\qquad
		P_{\mathcal S}T_{\mathcal S}=T_{\mathcal S},
		\qquad
		T_\perp^\top P_{\mathcal S}=0.
		\eeq
		Thus, by  \eqref{eq:july25_5.4},  $
		T_\perp^\top AT_{\mathcal S}
		=
		T_\perp^\top(I_s-P_{\mathcal S})
		AP_{\mathcal S}T_{\mathcal S}
		=
		0$,	
		and
		$
		T_{\mathcal S}^\top AT_\perp
		=
		T_{\mathcal S}^\top P_{\mathcal S}
		A(I_s-P_{\mathcal S})T_\perp
		=
		0$. 
		Therefore $\widetilde A$ has the block form \eqref{eq:july25_5.8}. Next, by \eqref{eq:july25_5.5} and \eqref{eq:P and perp_21aug}, $
		B_\mu T_{\mathcal S}
		=
		B_\mu P_{\mathcal S}T_{\mathcal S}
		=
		0$. 
		Moreover, $B_\mu^\top=-B_\mu$ implies
		$P_{\mathcal S}B_\mu=0$, and hence $
		T_{\mathcal S}^\top B_\mu
		=
		T_{\mathcal S}^\top P_{\mathcal S}B_\mu
		=
		0$. This proves the form $\widetilde B_\mu$ in \eqref{eq:july25_5.9}.  Similarly,  by \eqref{eq:july25_5.6}--\eqref{eq:july25_5.7}, and \eqref{eq:P and perp_21aug}, $
		CT_{\mathcal S}
		=
		CP_{\mathcal S}T_{\mathcal S}
		=
		0$,
		and
		$
		T_{\mathcal S}^\top A_0
		=
		T_{\mathcal S}^\top P_{\mathcal S}A_0
		=
		0$. Thus \eqref{eq:july25_5.10}--\eqref{eq:july25_5.11}  hold.  Part (iv) follows by
		substitution of the above transformed matrices into the transformed system.
	\end{IEEEproof}

\subsection{DFS under the Gell--Mann basis}
	\label{subsec:fixed_frame_corollary}
	
The following result, which is a consequence of Theorem \ref{thm:decomp_frame}, constructs a DFS of the finite-level quantum system \ref{eq:irka_sys2}   with respect to the generalized Gell--Mann basis.
	
	\begin{corollary} \label{cor:fixed_frame_decomp}
		Assume the hypotheses of Theorem~\ref{thm:decomp}.  Let
		\(U\in SU(n)\) realize the Artin--Wedderburn decomposition
		\eqref{eq:july25_4.1}, define $
		T=\Ad_{U^\dagger}$, 
		and set $
		p=\dim_{\mathbb R}\mathfrak D
		=
		\sum_{k=1}^K(r_k^2-1)$. 
		Suppose that \(\mathfrak D\) is spanned by a subset of the selected
		generalized Gell--Mann basis.  After relabeling that subset, we assume
		\begin{equation}
			\mathfrak D
			=
			\Span_{\mathbb R}\{\lambda_1,\ldots,\lambda_p\}.
			\label{eq:fixed_frame_compatibility}
		\end{equation}
		Then \(T\) is admissible in the sense of
		Definition~\ref{def:admissible}, its first \(p\) columns form an
		orthonormal basis of \(\mathcal S\), and the transformed matrices have
		the block forms \eqref{eq:diagonal}.  Consequently, the transformed
		system has the   form \eqref{eq:block_sector_dynamics} while
		retaining the original generalized Gell--Mann structure tensors
		\((f,d)\).
		
		A sufficient condition for
		\eqref{eq:fixed_frame_compatibility} is the following: \emph{exactly one}
		Artin--Wedderburn block has \(r_k\ge2\)  with multiplicity
		\(m_k=1\), and the blocks are ordered so that that block occupies the leading
		diagonal corner.
	\end{corollary}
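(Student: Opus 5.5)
The plan is to read Corollary~\ref{cor:fixed_frame_decomp} as the special case $\eta=\lambda$ of Theorem~\ref{thm:decomp_frame}. The only real work is checking that the Gell--Mann basis, suitably ordered, is block adapted. First, $T=\Ad_{U^\dagger}$ is the coordinate matrix of $A\mapsto U^\dagger AU$, so
\[
U^\dagger\lambda_jU=\sum_k T_{kj}\lambda_k .
\]
This is exactly \eqref{eq:T_frame} with $\eta=\lambda$, and also \eqref{eq:july27_unitary_10.1} with $U$ replaced by $U^\dagger\in SU(n)$. Hence Lemma~\ref{lem:T} (equivalently, Theorem~\ref{thm:admissible_inner}) gives admissibility in the sense of Definition~\ref{def:admissible}, which is $(\lambda\!\to\!\lambda)$-admissibility.

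Second, I would verify the block-adapted property. Condition \eqref{eq:span_D} is precisely the hypothesis \eqref{eq:fixed_frame_compatibility}. For the proposal, I would note that Part~(ii) of Theorem~\ref{thm:decomp_frame} uses \eqref{eq:span_Z_0} only for the assertion about $\mathcal Q$. The claims needed here use only the first $p$ columns and $P_{\mathcal S}$. So I would rerun the argument of Part~(ii) for $j\le p$: since $\lambda_j\in\mathfrak D$ and $U^\dagger\mathfrak DU=\Phi(\mathcal S)$ by \eqref{eq:july25_4.7}, we get $\Phi(Te_j)=U^\dagger\lambda_jU\in\Phi(\mathcal S)$. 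With $p=\dim\mathcal S$ and orthonormality, the first $p$ columns form an orthonormal basis of $\mathcal S$. The block forms \eqref{eq:diagonal} then follow verbatim from Part~(iii), which needs only \eqref{eq:P and perp_21aug} and the hypotheses \eqref{eq:zero_pattern}. The dynamics \eqref{eq:block_sector_dynamics} follow as in Part~(iv), and the structure tensors remain $(f,d)$ because $\eta=\lambda$.

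Third, I would prove the sufficient condition. Suppose only block $k=1$ has $r_1\ge2$ and $m_1=1$, placed in the leading corner. Then the other blocks contribute $\Hermzero(r_k)=\{0\}$, so
\[
\mathfrak D=\Hermzero(r_1)\oplus 0_{n-r_1}.
\]
This is the set of traceless Hermitian matrices supported on the leading $r_1\times r_1$ corner. The Gell--Mann elements $u_{jk},v_{jk}$ with $j<k\le r_1$ and $w_\ell$ with $\ell\le r_1-1$ all lie there, since $w_\ell$ involves only $P_{qq}$ with $q\le\ell+1\le r_1$. They are $r_1^2-1=p$ orthonormal elements, so they span $\mathfrak D$, and relabeling gives \eqref{eq:fixed_frame_compatibility}.

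The main obstacle is the block-adapted requirement \eqref{eq:span_Z_0}, since $\mathfrak Z_0$ need not be spanned by Gell--Mann elements in general. I handle it by showing that the claimed conclusions do not depend on it. I would also remark that, under the sufficient condition, $\mathfrak Z_0$ consists of diagonal matrices in $\Span\{w_\ell\}$, which partially recovers \eqref{eq:span_Z_0}.
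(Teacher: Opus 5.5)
Your proposal is correct and follows essentially the same route as the paper. Admissibility comes from Lemma~\ref{lem:T} applied to $U^\dagger$. The first $p$ columns are handled by the membership $\Phi(Te_j)=U^\dagger\lambda_jU\in U^\dagger\mathfrak D U=\Phi(\mathcal S)$ plus a dimension count, the block forms by the projector calculation of Theorem~\ref{thm:decomp_frame}(iii)--(iv), and the sufficient condition by the $r^2-1$ count of corner-supported Gell--Mann matrices. Your observation that \eqref{eq:span_Z_0} is never needed matches the paper's proof, which likewise reruns only the $\mathcal S$-part of Theorem~\ref{thm:decomp_frame}(ii) rather than invoking full block-adaptedness.
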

	
	\begin{IEEEproof}
		By Lemma~\ref{lem:T}, \(T=\Ad_{U^\dagger}\) is admissible.  For
		\(j=1,\ldots,p\),  
		\eqref{eq:fixed_frame_compatibility} gives
		\(\lambda_j\in\mathfrak D\).  Hence, by
		Lemma~\ref{lem: Z_0 and D},
		\beq \label{eq:Phi_T_e_j}
		\Phi(Te_j)
		=
		U^\dagger\lambda_jU
		\in
		U^\dagger\mathfrak DU
		=
		\Phi(\mathcal S),
		\quad j=1,\ldots,p.
		\eeq
		Therefore \(Te_j\in\mathcal S\).  Since \(T\) is orthogonal, the
		vectors \(Te_1,\ldots,Te_p\) are orthonormal.  Moreover, by \eqref{eq:july25_4.7},
		$
		p=\dim_{\mathbb R}\mathfrak D=\dim\mathcal S,
		$ 
		so they form an orthonormal basis of \(\mathcal S\).  The block forms then follow from the
		same projector calculation used in the proof of items (iii) and (iv) of 
		Theorem~\ref{thm:decomp_frame}.
		
		For the sufficient condition, let that \emph{unique}  block have
		size \(r\ge2\) and multiplicity one, and be placed in the leading
		\(r\times r\) corner.  Every other block has \(r_k=1\), and hence
		\(\Hermzero(r_k)=\{0\}\).  It follows that $
		\mathfrak D
		=
		\Hermzero(r)\oplus0_{n-r}$. 
		The generalized Gell--Mann matrices
		$
		\{u_{jk},v_{jk}:1\le j<k\le r\}
		\cup
		\{w_1,\ldots,w_{r-1}\}
		$
		are supported on the leading \(r\times r\) corner and form an
		orthonormal basis of this subspace.  Indeed, their number is
		$
		2\binom r2+(r-1)=r^2-1
		=
		\dim_{\mathbb R}\Hermzero(r).
		$ 
		After relabeling these matrices as
		\(\lambda_1,\ldots,\lambda_p\),  
		\eqref{eq:fixed_frame_compatibility} holds.
	\end{IEEEproof}
	
%
	
	\section{QND observables}  \label{sec:QND}
	
	In this section, we study QND variables of the quantum finite-level system  \eqref{eq:irka_sys2}. 
	
	\subsection{QND observables in the sense of linear systems theory}
	
	Given a matrix
$
		V=
		\begin{bmatrix}
			v_1 & \cdots & v_r
		\end{bmatrix}
		\in\R^{s\times r}
	$ (\(r\leq s\)),  define a  vector of observables
	\begin{equation}
		Z
		\triangleq
		V^{\top}X
		=
		\begin{bmatrix}
			Z_1\\
			\vdots\\
			Z_r
		\end{bmatrix}.
		\label{eq:july27_qnd_1.7}
	\end{equation}
	Left-multiplying system \eqref{eq:irka_sys2} by \(V^{\top}\) gives
	the following QSDE
	\begin{equation}
		dZ
		=
		\left(
		V^{\top}A_0
		+
		V^{\top}AX
		\right)dt
		+
		\sum_{\mu=1}^{m}
		V^{\top}B_\mu X\,dW_\mu.
		\label{eq:july27_qnd_4.2}
	\end{equation}
	
	\begin{theorem}
		\label{thm:july27_qnd_G7}
		Let \(V\in\R^{s\times r}\) have full column rank, and define $
		Z=V^{\top}X$ as in \eqref{eq:july27_qnd_1.7}.  Then the following statements hold.
		
		\begin{enumerate}[label=(\roman*)]
			\item The components of the vector of observables  \(Z\) mutually commute if and only if 
			\begin{equation}
				\Theta_n^-(v_i)v_j=0,		\qquad
				i,j=1,\ldots,r.
				\label{eq:july27_qnd_2.6}
			\end{equation}
			
			\item In addition,  if
			\begin{equation}
				\Theta_n^+(v_i)v_j
				\in
				\Range(V),
				\qquad
				i,j=1,\ldots,r
				\label{eq:july27_qnd_3.4}
			\end{equation}
			holds, then
			\begin{equation} \label{eq:july27_qnd_3.3}
				\mathfrak A_Z \triangleq \Span_{\C}\{I_n,Z_1,\ldots,Z_r\}
			\end{equation}
			is a commutative unital \(C^*\)-algebra.
			
			\item If 
			\begin{equation}
				V^{\top}B_\mu=0,
				\qquad
				\mu=1,\ldots,m,
				\label{eq:july27_qnd_4.3}
			\end{equation} 
			and 
			\begin{equation}
				A^\top\Range(V)
				\subseteq
				\Range(V)
				\label{eq:july27_qnd_6.2}
			\end{equation} 
			hold, then the QSDE \eqref{eq:july27_qnd_4.2} reduces to
			\begin{equation}
				dZ=(b+KZ)\,dt,
				\label{eq:july27_qnd_9.5}
			\end{equation}
			where the matrices
			\begin{equation}
				K
				=
				V^{\top}AV
				(V^{\top}V)^{-1}, \quad b=V^{\top}A_0 .
				\label{eq:july27_qnd_6.9}
			\end{equation}	
			In addition, if \eqref{eq:july27_qnd_2.6} holds, then
			\begin{equation}
				[Z_i(t),Z_j(s)]=0
				\label{eq:july27_qnd_9.6}
			\end{equation}
			for every \(i,j\) and all \(t,s\geq 0\). 
			
			\item Moreover,  denote $Y_M = MY$ for some \(M\in\mathbb R^{p\times m}\) (\(p\leq m\)). If
			\begin{equation} \label{eq::MJM}
				MJM^\top=0,
			\end{equation}
			and
			\begin{equation} \label{eq:KerV}
				\ker(V^\top)\subseteq\ker(MC)
			\end{equation}
			hold, then
			\begin{equation} \label{eq:Y_M}
				dY_M=C_ZZ\,dt+M\,d\mathcal W
			\end{equation}
			describes the dynamics of the commuting observables $Y_M$, where 			
			\begin{equation} \label{eq:C_Z}
				C_Z=MCV(V^\top V)^{-1} \in \mathbb{R}^{p\times r},
			\end{equation}	
			
			\item Finally, define the observability matrix for the linear dynamics described by \eqref{eq:july27_qnd_9.5} and \eqref{eq:Y_M},
			\begin{equation} \label{eq:O_Z}
				\mathcal O_Z
				\triangleq
				\begin{bmatrix}
					C_Z\\
					C_ZK\\
					\vdots\\
					C_ZK^{r-1}
				\end{bmatrix}.
			\end{equation}	
			If 
			\begin{equation} \label{eq:rank_Oz}
				\operatorname{rank}\mathcal O_Z=r,
			\end{equation}	
			then the vector \(Z\) is observable from \(Y_M\) in the
			standard linear-systems sense.
		\end{enumerate}
	\end{theorem}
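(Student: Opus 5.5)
The plan is to take the five items in order, each resting on the bilinear identities \eqref{eq:july25_1.1}--\eqref{eq:july25_1.3} for $\Phi(v)=v^\top X$.

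For (i), note that $Z_i=\Phi(v_i)$, so by \eqref{eq:july25_1.1} we have $[Z_i,Z_j]=-2\ii\,\Phi(f(v_i,v_j))=-2\ii\,\Phi(\Theta_n^-(v_i)v_j)$. Since $\Phi$ is an isometry, hence injective, this vanishes if and only if $\Theta_n^-(v_i)v_j=0$, which is \eqref{eq:july27_qnd_2.6}.

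For (ii), I would use \eqref{eq:july25_1.3}. Under \eqref{eq:july27_qnd_2.6},
\[
Z_iZ_j=\tfrac{2}{n}(v_i^\top v_j)I_n+\Phi(\Theta_n^+(v_i)v_j).
\]
By \eqref{eq:july27_qnd_3.4} the last term equals $\Phi(Vc)=c^\top Z$ for some $c\in\R^r$, so $Z_iZ_j\in\mathfrak A_Z$. Bilinearity then gives multiplicative closure of $\mathfrak A_Z$. Closure under $\dagger$ holds because the $Z_i$ are Hermitian, and commutativity follows from (i). With the operator norm, exactly as in Lemma~\ref{lem:* algebra}, $\mathfrak A_Z$ is a commutative unital $C^*$-algebra.

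For (iii), \eqref{eq:july27_qnd_4.3} removes the noise terms in \eqref{eq:july27_qnd_4.2}. Condition \eqref{eq:july27_qnd_6.2} gives $A^\top V=VK'$ for some $K'$, so $V^\top A=K'^\top V^\top$ and $V^\top AX=K'^\top Z$. Right-multiplying $V^\top A=K'^\top V^\top$ by $V(V^\top V)^{-1}$ identifies $K'^\top=V^\top AV(V^\top V)^{-1}=K$, which yields \eqref{eq:july27_qnd_9.5}. Since $b$ and $K$ are real matrices, the solution is
\[
Z(t)=e^{K(t-s)}Z(s)+\int_s^t e^{K(t-\tau)}b\,d\tau .
\]
Thus each $Z_i(t)$ is a real affine combination of $I$ and the $Z_k(s)$. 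Because (i) gives $[Z_k(s),Z_l(s)]=0$ at equal times, \eqref{eq:july27_qnd_9.6} follows. The equal-time commutation is preserved because the Heisenberg evolution is a $*$-homomorphism.

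For (iv), the condition \eqref{eq:KerV} means $\Range((MC)^\top)\subseteq\Range(V)$, so $MC=GV^\top$ for some $G$. Right-multiplying by $V(V^\top V)^{-1}$ gives $G=C_Z$, and therefore $MCX=C_ZZ$, which is \eqref{eq:Y_M}. To show that $Y_M$ is a commuting family, I would use the It\^o table \eqref{eq:W}: the commutator of the input increments is given by $\Omega-\Omega^\top=2\ii J$. Then \eqref{eq::MJM} makes the components of $MdW$, and hence of $Y_M$ (whose output fields inherit the input commutators), mutually commute.

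Item (v) is then the standard rank test for the linear system \eqref{eq:july27_qnd_9.5} with \eqref{eq:Y_M}. The known input $b$ does not affect observability.

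I expect item (iv) to be the main obstacle. The commutativity of the output processes at different times needs care, and I would argue it via the standard fact that outputs inherit the input field commutation relations.
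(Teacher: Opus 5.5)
Your proposal is correct and follows essentially the same route as the paper: injectivity of $\Phi$ (equivalently, linear independence of the $\lambda_a$) for (i), the product formula \eqref{eq:july25_1.3} together with $\Theta_n^+(v_i)v_j\in\Range(V)$ for (ii), the factorization $V^\top A=KV^\top$ and the affine solution of the deterministic linear equation for (iii), and $MC=C_ZV^\top$ for (iv). The differences are minor. In (iii) the paper writes both $Z(t)$ and $Z(s)$ in terms of $Z(0)$, so it does not need your appeal to the $*$-homomorphism property of the Heisenberg evolution. In (iv) the paper only asserts that the components of $Y_M$ commute because $MJM^\top=0$; your computation $\Omega-\Omega^\top=2\ii J$, combined with the fact that the outputs inherit the input commutation relations, supplies the justification the paper omits.
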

	
	\begin{IEEEproof}
		(i) From \eqref{eq:CCR}  we have
$
			[Z,Z^\top]
			=
			[V^{\top}X,X^{\top}V]
			=
			V^\top[X,X^\top]V =2\ii 
			V^\top\Theta_n^-(X)V$. 
		Thus, similar to the proof for Lemma \ref{lem:rotation},  
		\begin{equation}
			[Z_i,Z_j]=0,
			\qquad
			i,j=1,\ldots,r.
			\label{eq:july27_qnd_2.8}
		\end{equation}
		if and only if \eqref{eq:july27_qnd_2.6} holds.
		
		(ii) 	Consider the space   \(\mathfrak A_Z\)  defined in \eqref{eq:july27_qnd_3.3} . Under item (i) proved  above and the matrix operator norm $
		\|A\| = 
		\sigma_{\rm max}(A)$, to show that  \(\mathfrak A_Z\)   is commutative unital \(C^*\)-algebra, it is necessary and
		sufficient  to show that   \(\mathfrak A_Z\) is  closed under the anticommutation relations. 	Since \(V\) has full column rank, by
		\eqref{eq:july27_qnd_3.4} there is a unique vector
		\(\Gamma_{ij}\in\R^r\) satisfying
	$
			\Theta_n^+(v_i)v_j
			=
			V\Gamma_{ij}$, 
		where the matrices
$
			\Gamma_{ij}
			=
			(V^{\top}V)^{-1}
			V^\top
			\Theta_n^+(v_i)v_j.
$ 
		Consequently, by \eqref{eq:CCR_ACR} and \eqref{eq:july27_qnd_1.7},
		\begin{equation}
			\{Z_i,Z_j\}
			=
			\frac{4}{n}
			v_i^{\top}v_j\,I_n
			+
			2\Gamma_{ij}^\top Z.
			\label{eq:july27_qnd_3.7}
		\end{equation}
		As a result, if both \eqref{eq:july27_qnd_2.6}--\eqref{eq:july27_qnd_3.4} hold, then by \eqref{eq:july27_qnd_2.8}--\eqref{eq:july27_qnd_3.7},
	$
			Z_iZ_j
			=
			\frac{2}{n}
			v_i^{\top}v_j\,I_n
			+
			\Gamma_{ij}^\top Z \in \mathfrak A_Z.
	$
		
		(iii) As $V$ is of full column rank, by \eqref{eq:july27_qnd_6.2},  the matrix  \(K\in\R^{r\times r}\) in \eqref{eq:july27_qnd_6.9} satisfies  $
		V^{\top}A=KV^\top$.
		Then by \eqref{eq:july27_qnd_1.7}
		\beq \label{eq:KZ}
		V^{\top}AX
		=
		KV^{\top}X
		=
		KZ.
		\eeq
		By \eqref{eq:july27_qnd_4.3} and \eqref{eq:KZ}, the  QSDE \eqref{eq:july27_qnd_4.2} reduces to  \eqref{eq:july27_qnd_9.5}.
		The solution of \eqref{eq:july27_qnd_9.5} with the initial condition $Z(0) = Z$ is
		\begin{equation*}
			Z(t)
			=
			e^{Kt }Z
			+
			\int_{0}^{t}
			e^{K(t-\tau)}b\,d\tau.
			\label{eq:july27_qnd_4.7}
		\end{equation*}
		Consequently,
		\begin{align}
			[Z(t),Z(s)^\top]
			&=
			e^{Kt}
			[Z,Z^\top ]
			e^{K^{\top}s}.
			\label{eq:july27_qnd_5.1}
		\end{align}
		As proved in item (i) above, if \eqref{eq:july27_qnd_2.6} holds, then we have $
		[Z,Z^\top]=0$, whose substitution into \eqref{eq:july27_qnd_5.1} gives
	$
			[Z(t),Z(s)^\top]=0$ for all
			$t,s\geq 0.
$ 
		Componentwise, \eqref{eq:july27_qnd_9.6} holds for all
			$i,j=1,\ldots,r$ and 
		$
			t,s\geq 0$. 
			
		(iv) Under \eqref{eq:KerV}, there exists a  matrix $C_Z$ such that  
	$
		MC = C_Z V^\top.
	$ 
		In fact, $C_Z$ defined in \eqref{eq:C_Z} is a valid choice.  As a result, we get \eqref{eq:Y_M}.  Moreover, by  \eqref{eq::MJM},  the components of  the output observable vector $ Y_M$ mutually commute.
		
		(v) The conclusion follows as in the well-known linear systems theory \cite{ZDG96}. 
	\end{IEEEproof}			
	
	\subsection{QND observables in the sense of spectral separation}
	
	According to Theorem \ref{thm:july27_qnd_G7}, if \eqref{eq:rank_Oz} does not hold, the observables $\{Z_1,\ldots, Z_r\}$  in \eqref{eq:july27_qnd_9.5} cannot be  completely reconstructed from the output field vector  \eqref{eq:Y_M}, in the sense of linear systems theory  \cite{ZDG96}.   However, for the  finite-dimensional 
	\(C^*\)-algebra \(\mathfrak A_Z\) of commuting observables defined in \eqref{eq:july27_qnd_3.3},  a different notion of observability is often more
	appropriate. Specifically, instead of asking whether  each \(Z_i\) can
	be reconstructed linearly from \(Y_M\), one may ask whether a vector of selected output-signal 
	observables (to bee defined in \eqref{eq:F_ell} below) is able to distinguish  joint spectral subspaces of
	 \(\mathfrak A_Z\). Equivalently, one asks whether  a vector of selected output-signal 
	observables is able to generate  algebra
	\(\mathfrak A_Z\). In this sense, we say the  commutative observable algebra
	\(\mathfrak A_Z\) is observable. This operator-algebraic notion of observability is developed in this subsection.
	
	According to Theorem~\ref{thm:july27_qnd_G7}, if
	\eqref{eq:july27_qnd_2.6}--\eqref{eq:july27_qnd_3.4} hold, then all the elements in 
	\(\mathfrak A_Z\) defined in \eqref{eq:july27_qnd_3.3}  are mutually commutative. Consequently,
	there is a  family of  minimal
	mutually orthogonal projectors
 	\(\Pi_1,\ldots,\Pi_q\in\mathfrak A_Z\) such that
	\begin{equation}
		\begin{gathered}
			\mathfrak A_Z
			=
			\operatorname{span}_{\mathbb C}
			\{\Pi_1,\ldots,\Pi_q\},
			\\
			\Pi_\alpha\Pi_\beta
			=
			\delta_{\alpha\beta}\Pi_\alpha,
			\qquad
			\sum_{\alpha=1}^q\Pi_\alpha=I_n.
		\end{gathered}
		\label{eq:A_Z_Pi}
	\end{equation}

Recall the selected output equation \eqref{eq:Y_M}, and define
	the \emph{output-signal} observables
	\beq \label{eq:F_ell}
	F_\ell
	=
	\sum_{j=1}^r(C_Z)_{\ell j}Z_j
	\in\mathfrak A_Z,
	\qquad \ell=1,\ldots,p.
	\eeq
	These are system observables, rather than output-field
	quadratures. The terminology ``output-signal observables'' refers to their role in the
	selected output equation. Actually from the  output equation \eqref{eq:Y_M} we have
	\[
	dY_{M,\ell}(t)
	=
	F_\ell(t)\,dt+(M\,d\mathcal W(t))_\ell, \quad \ell=1,\ldots,p.
	\]
	Thus, $F_\ell(t)$ is the system-dependent signal 
	in the measured-output QSDE 
	\cite[Sec.~2.1]{GZP19}.
	The spectral-separation criterion below concerns the system
	algebra generated by these signal observables; it does not
	assert exact reconstruction from a finite noisy measurement
	record.
	
	\bthm \label{thm:QND_2}
	The selected  mutually commuting output-signal observables \(\{F_1,\ldots, F_p\}\) generate $\mathfrak A_Z$ if and only if the joint signal-value vectors (namely $f_{j\alpha}$ for each observable $F_j$, \(j=1,\ldots, p\))
	\beqm
	\mathbf f_\alpha
	=
	\begin{bmatrix}
		f_{1\alpha}\\
		\vdots\\
		f_{p\alpha}
	\end{bmatrix}, \quad \alpha =1,\ldots q
	\eeqm
	satisfy
	\beq \label{eq:nonparallel}
	\mathbf f_\alpha\neq\mathbf f_\beta
	\quad
	\mathrm{whenever } \quad \alpha\neq\beta.
	\eeq
Equivalently, the $*$-algebra 
		\beq \label{eq:alg_A_Z}
		\operatorname{alg}^{*}(I_n,F_1,\ldots,F_p)
		=
		\mathfrak A_Z.
		\eeq
	\ethm
	
	\begin{IEEEproof} 
		(\emph{Sufficiency}.) The proof is algebraic. By means of 
		\eqref{eq:A_Z_Pi} and the form of \(F_\ell\), direct algebraic manipulations yield
		\begin{align*}
			\sum_{\ell=1}^p
			\left(F_\ell-f_{\ell\beta}I_n\right)^2
			&=
			\sum_{\gamma=1}^q
			\Bigl[
			\sum_{\ell=1}^p
			(f_{\ell\gamma}-f_{\ell\beta})^2
			\Bigr]\Pi_\gamma
			\\
			&= 
			\sum_{\gamma=1}^q
			\|\mathbf f_\gamma-\mathbf f_\beta\|^2\Pi_\gamma,
		\end{align*}
		from which we have
		\[
		\frac{
			\sum_{\ell=1}^p(F_\ell-f_{\ell\beta}I_n)^2
		}{
			\|\mathbf f_\alpha-\mathbf f_\beta\|^2
		}
		=
		\sum_{\gamma=1}^q
		\frac{
			\|\mathbf f_\gamma-\mathbf f_\beta\|^2
		}{
			\|\mathbf f_\alpha-\mathbf f_\beta\|^2
		}
		\Pi_\gamma.
		\]
		This, together with
	\[
		\Bigl(\sum_\gamma a_\gamma\Pi_\gamma\Bigr)
		\Bigl(\sum_\delta b_\delta\Pi_\delta\Bigr)
		=
		\sum_\gamma a_\gamma b_\gamma\Pi_\gamma,
		\]
		yields
		\beq \label{eq:Pi_PI_alpha}
		\prod_{\substack{\beta=1\\\beta\neq\alpha}}^q
		\frac{
			\displaystyle\sum_{\ell=1}^p
			\left(F_\ell-f_{\ell\beta}I_n\right)^2
		}{
			\|\mathbf f_\alpha-\mathbf f_\beta\|^2 
		}
		=	\Pi_\alpha, \quad \alpha = 1\ldots, q,
		\eeq
		where  \eqref{eq:nonparallel} has been used.  By \eqref{eq:Pi_PI_alpha}, all orthogonal projectors $\Pi_\alpha$ can be obtained from the output-signal observables \(\{F_1,\ldots, F_p\}\). 
		
		(\emph{Necessity}.) We prove it by contradiction.  	Suppose that
		\(\mathbf f_\alpha=\mathbf f_\beta\) for some
		\(\alpha\neq\beta\).    As the observables $F_\ell$ mutually commute, 
	every element of $\mathfrak B\triangleq \operatorname{alg}^{*}(I_n,F_1,\ldots,F_p)$ is a polynomial in them, and by mean of spectral decomposition such polynomial has the form
	\[
	P(F_1,\ldots,F_p)
	=
	\sum_{\gamma=1}^{q}P(\mathbf f_\gamma)\Pi_\gamma.
	\]
	If $\Pi_\alpha=P(F_1,\ldots,F_p)$ for some polynomial \(P(F_1,\ldots,F_p)\), multiplication by
	$\Pi_\alpha$ and $\Pi_\beta$, respectively, yield $
	P(\mathbf f_\alpha)=1$, and 
	$P(\mathbf f_\beta)=0$, contradicting $\mathbf f_\alpha=\mathbf f_\beta$.
	The same argument excludes $\Pi_\beta$ too.  Thus, neither $\Pi_\alpha$ nor $\Pi_\beta$ belongs to
	$\mathfrak B$. On the other hand,
 both  $\Pi_\alpha$ and $\Pi_\beta$ belong to $\mathfrak A_Z$.  We conclude
	that $\mathfrak B$ is a proper subalgebra of $\mathfrak A_Z$, therefore reaching an contradiction.
\end{IEEEproof}
	
	\section{BAE measurement} \label{sec:BAE}

In this section, we study quantum  BAE measurement for  the \(n\)-level quantum system 
\eqref{eq:irka_sys2}.  A sufficient condition is given by the following result.
	
	\begin{theorem} \label{thm:july27_G8_general}
		Consider system \eqref{eq:irka_sys2}. 
		Define input field observables,  output field observables and output matrix
		\beq \label{eq:WYC}
		\mathcal W_g=g^\top \mathcal W,
		\qquad
		Y_h=h^\top Y,
		\qquad
		C_h=h^\top C,
		\eeq
		where  $
		g\in\mathbb R^{m\times r_g}$ and
		$h\in\mathbb R^{m\times r_h}$ (\(r_g, r_h\leq m\)). 
		Assume an orthogonal completion of \(S=[g\  \ g_\perp]\) satisfies
		\beq \label{eq:S_Omega_S}
		S^\top S=I_m,
		\qquad
		S^\top\Omega S=\Omega.
		\eeq
	Define matrices
		\beq \label{eq:Bga00}
		B_{g,j}
		=
		\sum_{k=1}^m g_{kj} B_k, \quad j=1,\ldots,r_g. 
		\eeq
		Then define a set
		\begin{equation}\label{eq:Rg}
			\begin{aligned}
				\mathcal R_g
				\triangleq
				\Span_{\mathbb R}
				\Bigl\{
				&M_\ell\cdots M_1B_{g,j}v:\ 
				v\in\mathbb R^s,\ j=1,\ldots,r_g,\\
				&	M_j\in\{A,B_1,\ldots,B_m\},\ 
				\ell=0,1,\ldots
				\Bigr\}.
			\end{aligned}
		\end{equation}
		($\ell=0$ means no left multiplication of $\{A,B_1,\ldots,B_m\}$ on the vector $B_{g,j}v$.) If
		\beq\label{eq:h_perp_g}
		h^\top g=0_{r_h\times r_g}
		\eeq
		and
		\beq \label{eq: ChRg}
		C_h\mathcal R_g=\{0\},
		\eeq
		then the output field observables \(Y_h\) are back-action evading with respect to the input field observables    \(\mathcal W_g\); i.e.,  \(\mathcal W_g\) have neither:
		\bed	
		\item[(1)] a direct feedthrough path to \(Y_h\); nor
		\item[(2)]  a system-variable-mediated path through the bilinear dynamics.
		\eed	
	\end{theorem}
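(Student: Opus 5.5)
The plan is to split the input into the component $\mathcal W_g$ and a complementary component, and then check the two paths (1) and (2) separately.

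\textbf{Change of input coordinates.} Set $\mathcal W_{g_\perp}=g_\perp^\top\mathcal W$. By \eqref{eq:S_Omega_S}, $S$ is orthogonal, so $\mathcal W = S S^\top \mathcal W = g\,\mathcal W_g + g_\perp \mathcal W_{g_\perp}$. The symplectic condition $S^\top\Omega S=\Omega$ guarantees that the transformed inputs $S^\top\mathcal W$ again have It\^o matrix $\Omega$ as in \eqref{eq:W}. This means $\mathcal W_g$ and $\mathcal W_{g_\perp}$ are legitimate, independent-in-the-vacuum field channels, so ``the influence of $\mathcal W_g$'' is well defined.

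Substituting into \eqref{eq:irka_sys2} gives
\[
\sum_{\mu}B_\mu X\,d\mathcal W_\mu=\sum_{j=1}^{r_g}B_{g,j}X\,d\mathcal W_{g,j}+\sum_{j}B_{g_\perp,j}X\,d\mathcal W_{g_\perp,j},
\]
with $B_{g,j}$ as in \eqref{eq:Bga00} and $B_{g_\perp,j}$ defined analogously. For the output we get
\[
dY_h=C_hX\,dt+h^\top g\,d\mathcal W_g+h^\top g_\perp\,d\mathcal W_{g_\perp}.
\]
Condition \eqref{eq:h_perp_g} removes the term $h^\top g\,d\mathcal W_g$, which settles (1): there is no direct feedthrough.

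\textbf{The system-mediated path.} For (2), expand the solution $X(t)$ as an iterated stochastic (Dyson/Picard) series driven by $dt$ and the increments $d\mathcal W_\mu$. Every term in which $\mathcal W_g$ enters does so through a factor $B_{g,j}X(\tau)\,d\mathcal W_{g,j}(\tau)$. That factor is then propagated to time $t$ by further left multiplications by $A$ (from $dt$ terms) and by $B_\mu$ (from noise terms). Each $\mathcal W_g$-dependent contribution to $X(t)$ therefore has the form
\[
M_\ell\cdots M_1B_{g,j}(\text{vector of operators}),\qquad M_i\in\{A,B_1,\ldots,B_m\}.
\]
Equivalently, it takes values coordinatewise in $\mathcal R_g$ from \eqref{eq:Rg}. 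The constant drift $A_0$ never multiplies a $\mathcal W_g$ factor from the left after its entry, so it does not enlarge this set.

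This can be made precise without series by a Krylov/invariance argument. By construction $\mathcal R_g$ is the smallest subspace containing $\Range(B_{g,j})$ for all $j$ and invariant under $A$ and every $B_\mu$. Choose an orthonormal basis adapted to $\mathcal R_g\oplus\mathcal R_g^\perp$. In that basis $A$ and $B_\mu$ are block upper triangular, and the $B_{g,j}$ map into $\mathcal R_g$. Hence the $\mathcal R_g^\perp$-component $P_{\mathcal R_g^\perp}X$ obeys a QSDE that is autonomous and driven only by $dt$ and $\mathcal W_{g_\perp}$, with no $\mathcal W_g$ input at all.

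\textbf{Reading off the output.} Condition \eqref{eq: ChRg} gives $C_h=C_hP_{\mathcal R_g^\perp}$. Thus $C_hX\,dt$ depends only on the $\mathcal W_g$-free component, and $Y_h$ carries no $\mathcal W_g$ contribution, which proves (2).

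\textbf{Main obstacle.} The delicate step is the triangular-structure claim. I must check that the $\mathcal R_g$-component cannot feed back into $P_{\mathcal R_g^\perp}X$. This holds because invariance of $\mathcal R_g$ under $A$ and $B_\mu$ means $P_{\mathcal R_g^\perp}M P_{\mathcal R_g}=0$. I must also verify that the It\^o correction terms, through $\Omega$, introduce no cross coupling between $\mathcal W_g$ and $\mathcal W_{g_\perp}$. This is exactly what $S^\top\Omega S=\Omega$ secures.
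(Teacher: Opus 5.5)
Your proposal is correct and follows the paper's proof. The paper picks a full-row-rank $N$ with $\ker N=\mathcal R_g$, obtains $NA=\bar AN$, $NB_\mu=\bar B_\mu N$ and $NB_{g,j}=0$ from the invariance of $\mathcal R_g$, and writes $C_h=\bar C_hN$; this is precisely your autonomous $\mathcal R_g^\perp$-component argument, and $h^\top g=0$ removes the feedthrough exactly as you do.

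A minor correction to your framing: $\mathcal W_g$ and $\mathcal W_{g_\perp}$ need not be independent (in the paper's example they are the conjugate quadratures $W_2$ and $-W_1$), and left-multiplying the QSDE by the constant matrix $P_{\mathcal R_g^\perp}$ generates no It\^o correction, so $S^\top\Omega S=\Omega$ only certifies that $S^\top\mathcal W$ are genuine input quadratures rather than ruling out any cross coupling.
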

	
	\begin{IEEEproof} Denote $\mathcal W_\perp = g_\perp^\top \mathcal W$. By \eqref{eq:S_Omega_S}, the transformed input field observables  
		\[\left[
		\begin{array}{c}
			\mathcal W_g \\
			\mathcal W_\perp
	  \end{array}\right]
	  = S^\top \mathcal W
		\]
		also satisfy \eqref{eq:W}. Thus, they are genuine quantum input quadratures. 
		Choose a full-row-rank matrix $
		N\in\mathbb R^{q\times s}$ (\(q\leq s\))
		such that $
		\ker N=\mathcal R_g$, and define a vector of observables $Z=NX$.
		By definition, \(\mathcal R_g\) is invariant under \(A\) and every \(B_\mu\), hence there exist  matrices \(\bar A\) and \(\bar B_\mu\) satisfying
		$
		NA=\bar A N,
		$
		and
	$
		NB_\mu=\bar B_\mu N $, for $ \mu=1,\ldots,m.
	$
		Equivalently, for the transformed input matrices in \eqref{eq:Bga00}, there exist matrices \(\bar B_{g,a}\) and \(\bar B_{\perp,b}\) such that
$
		NB_{g,a}=\bar B_{g,a}N,
$
		and
		$
		NB_{\perp,b}=\bar B_{\perp,b}N.
		$
		By construction of \(\mathcal R_g\) in \eqref{eq:Rg},
		$
		\operatorname{Range}(B_{g,a})
		\subseteq\mathcal R_g=\ker N.
		$
		Therefore,
	$
		NB_{g,a}=0.
	$
		Since \(N\) has full row rank, it has a right inverse \( \widetilde{N} = N^\top (NN^\top)^{-1}\) such that
	$
		N\widetilde{N}=I_q.
$
 Therefore. 
		\beq \label{eq: NB_ga3}
		\bar B_{g,a}
		=
		\bar B_{g,a}N\widetilde{N}
		=
		NB_{g,a}\widetilde{N}
		=
		0.
		\eeq
Consequently, left-multiplying  system \eqref{eq:irka_sys2} by \(N\) gives
		\beq \label{eq:Z_BAE}
		dZ
		=
		(NA_0+\bar AZ)\,dt
		+
		\sum_{b=1}^{m-r_g}
		\bar B_{\perp,b}Z\,d(\mathcal W_\perp)_b.
		\eeq
		According to \eqref{eq:WYC}, \eqref{eq:Bga00} and  \eqref{eq: NB_ga3}, the increment \(d\mathcal W_g\) does not appear  in  \eqref{eq:Z_BAE}. Next,  \eqref{eq: ChRg}
		means $
		\ker N=\mathcal R_g\subseteq\ker C_h$. 
		Hence, there exists $\bar C_h\in\mathbb R^{r_h\times q}$
		such that $C_h=\bar C_hN$. Consequently, $C_hX=\bar C_hZ$.  By \eqref{eq:WYC}, the selected output satisfies
		\[
	\begin{aligned}
			dY_h
			&=
			h^\top dY=
			C_hX\,dt+h^\top d\mathcal W
			\\
			&=
			\bar C_hZ\,dt
			+
			h^\top g\,d\mathcal W_g
			+
			h^\top g_\perp\,d\mathcal W_\perp.
			\end{aligned}
			\]
		Using \eqref{eq:h_perp_g}, we obtain
		\[
		dY_h
		=
		\bar C_hZ\,dt
		+
		h^\top g_\perp\,d\mathcal W_\perp.
		\]
		As the dynamics of  \(Z\) in \eqref{eq:Z_BAE} have no \(\mathcal W_g\)-driven term,  and the selected output $Y_h$ has no feedthrough  \(\mathcal W_g\) term either,  the output observables \(Y_h\) are back-action evading against the input observables \(\mathcal W_g\).
	\end{IEEEproof}
	
	\bmrk
	If \(Y_h\) is intended to represent a vector of simultaneously measured output fields, one should also require \(	h^\top J h=0\) for the preservation of commutation relations. 	A common normalization $
	h^\top h=I_{r_h}$ can also be imposed. 
	These conditions concern physical joint measurability of the selected output quadratures \(Y_h\) and have nothing to do with the BAE  measurement conditions presented in Theorem \ref{thm:july27_G8_general}.
	\emrk
	
	
	\section{Four-qubit-chain illustrative example}
	\label{sec:four_qubit_chain}
	
This section illustrates the results of
	Sections~\ref{sec:structural decomp}--\ref{sec:BAE} using a four-qubit
	Heisenberg chain with collective dephasing. The example exhibits three
	coexisting structures: a logical-qubit observable subsystem (DFS),
	a conserved commutative magnetization algebra (QND observables), and a homodyne output
	that evades the conjugate measurement's back action (BAE measurement). Four-qubit
	decoherence-free encodings and their control by exchange interactions
	are established construction; see for example 
\cite{BaconKempeLidarWhaley00,LidarBaconKempeWhaley00,AntonioBose13}.

	\subsection{The model}
	\label{subsec:four_chain_model}
	
	Let $\mathcal H=(\mathbb C^2)^{\otimes4}$. Then $n=2^4=16$ and $s=n^2-1=255$.
	For $a\in\{x,y,z\}$, let $\sigma_a^{(j)}$ denote the Pauli matrix
	$\sigma_a$ acting on qubit $j$. Define the nearest-neighbor exchange
	operators $E_{j,j+1}$ and chain Hamiltonian $H_{\mathrm{ch}}$ as
	\begin{equation*}
		\begin{aligned}
			E_{j,j+1}
			&=\sum_{a=x,y,z}\sigma_a^{(j)}\sigma_a^{(j+1)},
			\qquad j=1,2,3,\\
			H_{\mathrm{ch}}
			&=J_{12}E_{12}+J_{23}E_{23}+J_{34}E_{34},
		\end{aligned}
		\label{eq:four_chain_H}
	\end{equation*}
	where $J_{12},J_{23},J_{34}\in\mathbb R$.
	Introduce the collective spin components
	\begin{equation}
		J_a \triangleq \frac12\sum_{j=1}^4\sigma_a^{(j)}, \quad  
		\mathbf{J}^2 \triangleq J_x^2+J_y^2+J_z^2,  \quad a=x,y,z.
		\label{eq:four_chain_Jz}
	\end{equation}
It can be readily shown that   \([\sigma_a^{(j)}+\sigma_a^{(j+1)} ,E_{j,j+1}]=0\) and \([\sigma_a^{(k)},E_{j,j+1}]=0\) if \(k\notin\{j,j+1\}\). Hence we have
	\begin{equation}
		[H_{\mathrm{ch}},J_a]=0,   \quad   
		[H_{\mathrm{ch}},\mathbf{J}^2]=0, \quad a=x,y,z.
		\label{eq:four_chain_JH_commute}
	\end{equation}
	Assume a single vacuum field channel with coupling operator 
	$L=\sqrt{\kappa}\,J_z$, where $\kappa>0$.
	Let $c_z\in\mathbb R^{255}$ satisfy $\Phi(c_z)=J_z$. Then by \eqref{eq:def_Phi} we get 
	\beq \label{eq:L_example}
	L=\Phi(\sqrt{\kappa}c_z)J_z = (\sqrt{\kappa}c_z)^\top X.
	\eeq
	Consequently,  system matrices in
	\eqref{eq:N-from-C} are
	\begin{equation}
		\begin{aligned}
			C_1&=2\sqrt{\kappa}\,c_z^\top, & C_2&=0,   &A_0&=0, \\
			B_1&=0,                                             & B_2&=-2\sqrt{\kappa}\,\Theta_n^-(c_z), 
		\end{aligned}
		\label{eq:four_chain_coefficients}
	\end{equation}
and the input--output equations in \eqref{eq:irka_sys2} are
	\begin{equation} 	\label{eq:four_chain_output}
			dY_1=2\sqrt{\kappa}\,J_z\,dt+dW_1, \quad
			dY_2=dW_2.
	\end{equation}
	The $Y_1$-quadrature carries the magnetization signal $J_z$, whereas
	the conjugate $Y_2$-quadrature has no system-dependent signal.

	For $H_{\mathrm{ch}}$ an $L$ given above,  in the Heisenberg picture, the evolution of  the system operator $O$ follows the QSDE, \cite[Section 2]{Duffaut-et-al_2016}, 
	\beq \label{eq:QSDE}
	d O = \mathcal L(O) dt +\ii [L,O] dW_2,
	\eeq 
	where the superoperators
	\begin{equation}
		\begin{aligned}
			\mathcal L(O)&=\mathcal K(O)+\mathcal D_L(O),\\
			\mathcal K(O)&=-\ii[O, H_{\mathrm{ch}}], \quad
			\mathcal D_L(O)&=-\frac{1}{2}[L,[L,O]].
		\end{aligned}
		\label{eq:four_chain_generator}
	\end{equation}
	Clearly 
	\beq \label{eq:Z_conserved}
	dJ_z =0,
	\eeq 
i.e., the measured magnetization $J_z$ is conserved.  Moreover, from \eqref{eq:L_example} we have $c_z^\top dX=0$. Comparing with  \eqref{sys_n_level} and using \eqref{eq:four_chain_coefficients}, we get
		\begin{equation}
		c_z^\top A=0,
		\qquad
		c_z^\top B_\mu=0, 
		\quad \mu=1,2.
		\label{eq:four_chain_conservation}
	\end{equation}
Similarly, given $v\in\mathbb R^{255}$,  by means of   $B_2=-B_2^\top$,  \eqref{eq:four_chain_coefficients}  and \eqref{eq:isometry_4sep},  substituting $O=\Phi(v) =v^\top X$ into \eqref{eq:QSDE} and comparing with  \eqref{sys_n_level}   we obtain 
	\begin{equation}
		\begin{aligned}
			\mathcal L(\Phi(v))
			&=\Phi(A^\top v),\\
			\Phi(B_2v)
			&=-\ii\sqrt{\kappa}\,[J_z,\Phi(v)],\\
			C_1v
			&=\sqrt{\kappa}\,\Tr(J_z\Phi(v)).
		\end{aligned}
		\label{eq:four_chain_dictionary}
	\end{equation}
Hence, due to the isometric property of $\Phi$ defined in \eqref{eq:def_Phi},
\beq \label{eq:10sep_temp1}
B_2v=0 \Longleftrightarrow [J_z,\Phi(v)]=0.
\eeq

	\subsection{An autonomous logical-qubit observable subsystem (DFS)}
	\label{subsec:four_chain_logical}
	
	\paragraph{The singlet encoding}
	Define the two-qubit states
	\begin{equation}
		\begin{aligned}
			|s\rangle&=\frac{|01\rangle-|10\rangle}{\sqrt2},
			&
			|t_0\rangle&=\frac{|01\rangle+|10\rangle}{\sqrt2},\\
			|t_+\rangle&=|00\rangle,
			&
			|t_-\rangle&=|11\rangle.
		\end{aligned}
		\label{eq:four_chain_pair_states}
	\end{equation}
Then define orthonormal states
	\begin{equation}
		\begin{aligned}
			|0_L\rangle
			&=|s\rangle_{12}|s\rangle_{34},\\
			|1_L\rangle
			&=\frac1{\sqrt3}\bigl(
			|t_+\rangle_{12}|t_-\rangle_{34}
			-|t_0\rangle_{12}|t_0\rangle_{34}
			+|t_-\rangle_{12}|t_+\rangle_{34}
			\bigr).
		\end{aligned}
		\label{eq:four_chain_logical_states}
	\end{equation}
	It can be easily verified that 
	\beq \label{eqLJ_0_1}
	J_a |0_L\rangle =  J_a |1_L\rangle =0, \quad  a=x,y,z.
	\eeq
	 As a result, by \eqref{eq:four_chain_Jz}, $\mathbf{J}^2|0_L\rangle= \mathbf{J}^2|1_L\rangle = 0$. Hence, $\{\ket{0_L}, \ket{1_L}\}$ form an orthonormal basis for the two-dimensional total-spin-zero subspace. Moreover,  the rank-2
	projector
	\begin{equation}
		P_0=|0_L\rangle\langle0_L|+|1_L\rangle\langle1_L|
		\label{eq:four_chain_P0}
	\end{equation}
	is the spectral projector of $\mathbf{J}^2$ associated with eigenvalue zero.  Hence the second
	equation in \eqref{eq:four_chain_JH_commute} implies
	\begin{equation}
		[H_{\mathrm{ch}},P_0]=0.
		\label{eq:four_chain_dark_conditions}
	\end{equation}
	Moreover, by \eqref{eqLJ_0_1}--\eqref{eq:four_chain_P0},  
	\[
	J_aP_0=P_0J_a=0, \quad  a=x,y,z.
	\]
	Thus the logical subspace spanned by \(\{|0_L\rangle, |1_L\rangle\}\) is invariant under the exchange Hamiltonian $H_{\mathrm{ch}}$
	and dark with respect to collective dephasing $J_z$.
	
	Define logical Pauli operators on this subspace, extended by zero
	on its orthogonal complement:
	\begin{equation}
		\begin{aligned}
			\bar\sigma_x
			&=|0_L\rangle\langle1_L|+|1_L\rangle\langle0_L|,\\
			\bar\sigma_y
			&=-\ii|0_L\rangle\langle1_L|+\ii|1_L\rangle\langle0_L|,\\
			\bar\sigma_z
			&=|0_L\rangle\langle0_L|-|1_L\rangle\langle1_L|.
		\end{aligned}
		\label{eq:four_chain_logical_Pauli}
	\end{equation}
By \eqref{eqLJ_0_1}--\eqref{eq:four_chain_P0}, 
	\begin{equation}
		\begin{aligned}
			&P_0\bar\sigma_a=\bar\sigma_aP_0=\bar\sigma_a, \ \ J_z\bar\sigma_a=\bar\sigma_aJ_z=0, \\
			&\bar\sigma_a\bar\sigma_b
			=\delta_{ab}P_0
			+\ii\sum_{c=x,y,z}\epsilon_{abc}\bar\sigma_c,
		\end{aligned}
		\label{eq:four_chain_logical_product}
	\end{equation}
		where \(\epsilon_{abc}\) is the completely antisymmetric Levi--Civita
	tensor \cite{Duffaut-et-al_2012}. 
	Direct, yet tedious, calculation on this logical basis gives
	\begin{equation}
		\begin{aligned}
			P_0E_{12}P_0=P_0E_{34}P_0
			&=-P_0-2\bar\sigma_z,\\
			P_0E_{23}P_0
			&=-P_0+\bar\sigma_z+\sqrt3\,\bar\sigma_x.
		\end{aligned}
		\label{eq:four_chain_exchange_restriction}
	\end{equation}
	Consequently, the restricted Hamiltonian is
	\begin{equation}
		\begin{aligned}
			P_0H_{\mathrm{ch}}P_0
			&=h_0P_0+h_x\bar\sigma_x+h_z\bar\sigma_z,\\
			h_0&=-(J_{12}+J_{23}+J_{34}),\\
			h_x&=\sqrt3\,J_{23},\\
			h_z&=J_{23}-2J_{12}-2J_{34}.
		\end{aligned}
		\label{eq:four_chain_H_logical}
	\end{equation}

	\paragraph{Verification of strong alignment}
	Denote 
	\[
	Q_L=P_0-\frac18I_{16}.
	\] 
	Then  by \eqref{eq:four_chain_logical_product} we get
		\begin{equation}\label{eq:ccr_anticcr}
		\begin{aligned}
			\Tr(\bar\sigma_a\bar\sigma_b) &= 2\delta_{ab},
			\\
			[\bar\sigma_a,\bar\sigma_b]
			&=
			2\ii\sum_{c=x,y,z}\epsilon_{abc}\bar\sigma_c,
			\\
			\{\bar\sigma_a,\bar\sigma_b\}
			&=
			2\delta_{ab}P_0
			=
			\frac14\delta_{ab}I_{16}+ 2\delta_{ab}Q_L.
		\end{aligned}
		\end{equation} 
Define real coordinate subspaces $ \mathcal S, \mathcal Q\subset \mathbb{R}^{255}$ by
		\begin{equation} 	\label{eq:four_chain_S}
				\Phi(\mathcal S)
				=\Span_{\mathbb R}
				\{\bar\sigma_x,\bar\sigma_y,\bar\sigma_z\}, \quad 
				\Phi(\mathcal Q)=\Span_{\mathbb R}\{Q_L\}.
		\end{equation}
	Let $\Phi(s_a)=\bar\sigma_a$ for $a=x,y,z$, and $\Phi(q_L)=Q_L$.
	Since $\Tr(P_0)=2$, $Q_L$ is traceless Hermitian. By \eqref{eq:isometry_4sep} and \eqref{eq:ccr_anticcr},  normalized Hilbert--Schmidt inner products give
	\begin{equation}
		s_a^\top s_b=\delta_{ab},
		\qquad
		q_L^\top s_a=0,
		\qquad
		q_L^\top q_L=\frac78.
		\label{eq:four_chain_coords}
	\end{equation}
	Consequently, $\mathcal Q\subseteq\mathcal S^\perp$.  Define the projector 
	\beq \label{eq:P_S_example}
	P_{\mathcal S}=\sum_{a=x,y,z} s_as_a^\top
	\eeq 
	on the subspace \(\mathcal  Q\). By \eqref{eq:Phi_uv_ccr_anticcr},  \eqref{eq:ccr_anticcr}, and the facts
	$Q_L\bar\sigma_a=\frac78\bar\sigma_a$ and
	$Q_L^2=\frac34Q_L+\frac7{64}I_{16}$,  we obtain
	\begin{equation}
		\begin{aligned}
			&f(s_a,s_b)=-\sum_c\epsilon_{abc}s_c,\ \ 
			d(s_a,s_b)=\delta_{ab}q_L,\\
			& f(q_L,\mathcal S\oplus\mathcal Q)=\{0\},\ 
			d(q_L,s_a)=\frac78s_a, \ 
			d(q_L,q_L)=\frac34q_L.
		\end{aligned}
		\label{eq:four_chain_alignment_products}
	\end{equation}
	These identities verify  \eqref{eq:f_d_Q_S} in
	Definition~\ref{def:strong_alignment}.

By  \eqref{eq:four_chain_S}, the associated unital $C^*$-algebra in \eqref{eq:july25_2.5}  is
	\begin{equation}
		\begin{aligned}
			\mathfrak A
			&=\Span_{\mathbb C}
			\{I_{16},Q_L,\bar\sigma_x,\bar\sigma_y,\bar\sigma_z\}\\
			&=P_0M_{16}(\mathbb C)P_0
			\oplus\mathbb C(I_{16}-P_0),
		\end{aligned}
		\label{eq:four_chain_logical_algebra_0}
	\end{equation}
	where the facts $	\Span_{\mathbb C}\{I_{16},Q_L\}
	=
	\Span_{\mathbb C}\{P_0,I_{16}-P_0\}$ and $P_0M_{16}(\mathbb C)P_0
	=
	\Span_{\mathbb C}
	\{P_0,\bar\sigma_x,\bar\sigma_y,\bar\sigma_z\}$ have been used to establish the second equality.  Clearly,  the center
	$Z(\mathfrak A)=\Span_{\mathbb C}\{I_{16},Q_L\}$.
	Using \eqref{eq:four_chain_coords} and the traceless property of the logical
	Pauli operators, we obtain
	\begin{equation}
			Z(\mathfrak A)_{\mathrm{sa}}\cap\{I_{16}\}^{\perp}
			=\Phi(\mathcal Q), \ \ 
			\mathfrak A_{\mathrm{sa}}
			\cap Z(\mathfrak A)_{\mathrm{sa}}^\perp
			=\Phi(\mathcal S).
		\label{eq:four_chain_center}
	\end{equation}
	As a result, the required exact central--noncentral separation \eqref{eq:Phi_QS} also holds,
	and $\mathcal S$ is strongly aligned.
	
	\paragraph{Verification of \eqref{eq:zero_pattern}} By \eqref{eq:four_chain_logical_product}, we get $[J_z,\Phi(s_a)]=0$. Then by \eqref{eq:10sep_temp1} and \eqref{eq:four_chain_coefficients}, we have $B_\mu s_a=0$. Moreover, by \eqref{eq:four_chain_logical_product}, $c_z\perp s_a$ and hence $C_1s_a=0$. Finally, $C_2=0$ and $A_0=0$ are already given in \eqref{eq:four_chain_coefficients}. Consequently, \eqref{eq:july25_5.5}-\eqref{eq:july25_5.7} hold.
	
	Next, we look at the drift condition \eqref{eq:july25_5.4}.  from \eqref{eq:four_chain_dark_conditions},  \eqref{eq:four_chain_logical_product},  \eqref{eq:four_chain_H_logical}, and  \eqref{eq:four_chain_generator}, we get
	\begin{equation}
		\begin{aligned}
			\mathcal K(\bar\sigma_a) &= \mathcal K(P_0\bar\sigma_a P_0)
			=\ii[h_x\bar\sigma_x+h_z\bar\sigma_z,\bar\sigma_a]
			\in\Phi(\mathcal S),\\
			\mathcal D_L(\bar\sigma_a)&  =0.
		\end{aligned}
		\label{eq:four_chain_G_on_S}
	\end{equation}
Hence
	\(\mathcal L\bigl(\Phi(\mathcal S)\bigr)\subseteq\Phi(\mathcal S)\).  For \(X,Y\in\Hermzero(16)\),
	\begin{align*}
		&\langle X,\mathcal K(Y)\rangle_{\rm nHS}
		=
		-\frac{\ii}{2}\Tr\bigl(XYH_{\mathrm{ch}}-XH_{\mathrm{ch}}Y\bigr)
		\\
		=&
		\frac12\Tr\bigl(\ii(XH_{\mathrm{ch}}-H_{\mathrm{ch}}X)Y\bigr)
		=
		-\langle\mathcal K(X),Y\rangle_{\rm nHS}.
	\end{align*}
	That is, the Hamiltonian superoperator \(\mathcal K\) is
	skew-adjoint with respect to the normalized Hilbert--Schmidt inner
	product \eqref{eq:nHS}. 
	Similarly, from  $
	\langle X,\mathcal D_L(Y)\rangle_{\rm nHS}
	=
	\langle\mathcal D_L(X),Y\rangle_{\rm nHS}$ we know that the Lindblad  superoperator \(\mathcal D_L\) is self-adjoint.  
	Now let \(X\in\Phi(\mathcal S)\) and \(Y\in\Phi(\mathcal S)^\perp\),
	the orthogonal complement being taken in \(\Hermzero(16)\).  Then by \eqref{eq:four_chain_G_on_S}
	\[
	\langle X,\mathcal L(Y)\rangle_{\rm nHS}
	=
	-\langle\mathcal K(X),Y\rangle_{\rm nHS}
	+
	\langle\mathcal D_L(X),Y\rangle_{\rm nHS}
	=
	0.
	\]
	Thus
	\(\mathcal L\bigl(\Phi(\mathcal S)^\perp\bigr)\subseteq\Phi(\mathcal S)^\perp\).
	Since \(\Phi\) is an isometric isomorphism,
	\(\Phi(\mathcal S)^\perp=\Phi(\mathcal S^\perp)\), and by \eqref{eq:four_chain_dictionary} and  
	\eqref{eq:four_chain_G_on_S} we get $
	A^\top\mathcal S\subseteq\mathcal S$ and  $
	A^\top\mathcal S^\perp\subseteq\mathcal S^\perp$;
	i.e., \([A^\top,P_{\mathcal S}]=0\), which is equivalent to
	\([A,P_{\mathcal S}]=0\).  This is \eqref{eq:july25_5.4}. Collecting the preceding conclusions yields precisely \eqref{eq:zero_pattern}. Thus, all hypotheses of Theorem~\ref{thm:decomp} are satisfied.

	\paragraph{Artin--Wedderburn decomposition and DFS}
	Choose $U\in SU(16)$ such that
	$U|1_L\rangle=e_1$, $U|0_L\rangle=e_2$, and the orthogonal complement  \(\Range(I_{16}-P_0)\) 
	is mapped onto $\Span\{e_3,\ldots,e_{16}\}$.
	With the sign convention for the generalized Gell--Mann basis  used in
	Section~\ref{sec:preli}, we have
	\begin{equation}
		\begin{gathered}
			U\bar\sigma_xU^\dagger=u_{12},  \ 
			U\bar\sigma_yU^\dagger=v_{12},  \ 
			U\bar\sigma_zU^\dagger=w_1, 
			\\
			UP_0U^\dagger=I_2\oplus0_{14},  \ 
			UQ_L U^\dagger=\frac78I_2\oplus\Bigl(-\frac18\Bigr)I_{14}.
		\end{gathered}
		\label{eq:four_chain_U_action}
	\end{equation}
	Consequently, applying  unitary conjugation to $\mathfrak A$ in the second equality in  \eqref{eq:four_chain_logical_algebra_0} and $\Phi(\mathcal S)$ and  $\Phi(\mathcal Q)$ in   \eqref{eq:four_chain_S} and using \eqref{eq: Z_0 and D}, we get
	\begin{equation}
		\begin{aligned}
			U\mathfrak A U^\dagger
			&=M_2(\mathbb C)\oplus\mathbb CI_{14},\\
			\mathfrak D
			&=\Hermzero(2)\oplus0_{14}=\Span_{\mathbb R}\{u_{12},v_{12},w_1\},\\
			\mathfrak Z_0
			&=\Span_{\mathbb R}\{7I_2\oplus(-I_{14})\}.
		\end{aligned}
		\label{eq:four_chain_D_Z0}
	\end{equation}
	The Artin--Wedderburn data are
	$(r_1,m_1)=(2,1)$ and $(r_2,m_2)=(1,14)$. Moreover, $\dim\mathfrak D=3$ and $\dim\mathfrak Z_0=1$,
	corresponding exactly to $\mathcal S$ and $\mathcal Q$.
	
	Relabel the original Gell--Mann matrices so that
	$\lambda_1=u_{12}$, $\lambda_2=v_{12}$, and $\lambda_3=w_1$. Then by \eqref{eq:four_chain_D_Z0}, \(\mathfrak D=\Span_{\mathbb R}\{\lambda_1,\lambda_2,\lambda_3\}\).
	Thus, the unique noncommutative block has multiplicity $m_1=1$ and occupies
	the leading corner, so the conditions of
	Corollary~\ref{cor:fixed_frame_decomp} hold.
	Writing
	$\bar{\bm{\sigma}}
	=(\bar\sigma_x,\bar\sigma_y,\bar\sigma_z)^\top$, direct calculation by \eqref{eq:four_chain_G_on_S} and \eqref{eq:ccr_anticcr} gives the DFS:
	\begin{equation}
		\begin{aligned}
			d\bar{\bm{\sigma}}
			&=A_{\mathcal S}\bar{\bm{\sigma}}\,dt,\\
			A_{\mathcal S}
			&=T_{\mathcal S}^\top AT_{\mathcal S}
			=
			\begin{bmatrix}
				0&-2h_z&0\\
				2h_z&0&-2h_x\\
				0&2h_x&0
			\end{bmatrix}.
		\end{aligned}
		\label{eq:four_chain_logical_dynamics}
	\end{equation}
	On the encoded subsystem \(\bar{\bm{\sigma}}\), the Bloch vector precesses about
	$(h_x,0,h_z)$ with angular frequency
	$2\sqrt{h_x^2+h_z^2}$.  We have shown that the  transformed system has  a DFS of the form in  \eqref{eq:block_sector_dynamics}. We verified 	Corollary~\ref{cor:fixed_frame_decomp}.

	\subsection{The conserved QND magnetization algebra}
	\label{subsec:four_chain_QND}
	
	The spectra of $J_z$ are $\{-2,-1,0,1,2\}$.
	Let $\Pi_m$ denote the spectral projector corresponding to  the spectrum $m$.
Calculations show that  the rank of $\Pi_m$   is
	\begin{equation}
		d_m=\Tr(\Pi_m)=\binom4{2-m},
		\quad
		 m= -2,-1,0,1,2.
		\label{eq:four_chain_dm}
	\end{equation}
	Define traceless observables
		\begin{equation}\label{eq:Q_m_23Sep}
	Q_m
	\triangleq
	\Pi_m-\frac{d_m}{16}I_{16}, \quad m= -2,-1,0,1,2, 
	\end{equation} 
	and their coordinate vectors by $\Phi(v_m) = Q_m$.
	Denote
	\begin{equation}
		\begin{aligned}
			V&=[\,v_{-2}\ v_{-1}\ v_0\ v_1\,] \in \mathbb{R}^{255\times 4},\\
			Z&=V^\top X=[Q_{-2},Q_{-1},Q_0,Q_1]^\top.
		\end{aligned}
		\label{eq:four_chain_V}
	\end{equation}
As  $\Pi_m$ are  mutually orthogonal, simple algebraic manipulations yield that  $Q_{-2},Q_{-1},Q_0, Q_1$ are linearly independent. As a result, the matrix $V$ has full column rank.
	
	\paragraph{Commutativity and conservation}
	As spectral projectors $\Pi_m$ are mutually orthogonal,  $Q_{-2},Q_{-1},Q_0, Q_1$ mutually 
	commute too. From \eqref{eq:Q_m_23Sep} we have
	\begin{equation}
		\begin{aligned}
			Q_mQ_\ell
			&=\left(\delta_{m\ell}-\frac{d_\ell}{16}\right)Q_m
			-\frac{d_m}{16}Q_\ell\\
			&\quad
			+\frac{d_m}{16}
			\left(\delta_{m\ell}-\frac{d_\ell}{16}\right)I_{16}.
		\end{aligned}
		\label{eq:four_chain_Q_product}
	\end{equation}
	Moreover, by \eqref{eq:isometry_4sep}, 
	$v_m^\top v_\ell
	=\frac{d_m}{2}(\delta_{m\ell}-d_\ell/16)$. Consequently, comparing \eqref{eq:four_chain_Q_product} with the product formula
	\eqref{eq:july25_1.3} yields
	\begin{equation}
		\begin{aligned}
			f(v_m,v_\ell)&=0,\\
			d(v_m,v_\ell)
			&=\left(\delta_{m\ell}-\frac{d_\ell}{16}\right)v_m
			-\frac{d_m}{16}v_\ell
			\in\Range(V).
		\end{aligned}
		\label{eq:four_chain_QND_closure}
	\end{equation}
	These are the hypotheses of
	Theorem~\ref{thm:july27_qnd_G7}(i)--(ii). Clearly, 
	\begin{equation}
		\begin{aligned}
			\mathfrak A_{Z}
			&=\Span_{\mathbb C}\{I_{16},Q_{-2},Q_{-1},Q_0,Q_1\}\\
			&=\Span_{\mathbb C}\{\Pi_{-2},\Pi_{-1},\Pi_0,\Pi_1,\Pi_2\}
		\end{aligned}
		\label{eq:four_chain_Jz_algebra}
	\end{equation}
	is a five-dimensional commutative unital $C^*$-algebra.
	A set of its minimal projectors is precisely  \(\{\Pi_{-2},\Pi_{-1},\Pi_0,\Pi_1,\Pi_2\}\). This verifies \eqref{eq:july27_qnd_3.3}.
	
	Because $J_z$  commutes with
	$H_{\mathrm{ch}}$ and $L$,  each $Q_m$ defined in \eqref{eq:Q_m_23Sep} commutes with
	$H_{\mathrm{ch}}$ and $L$ too.  As a result,
\eqref{eq:four_chain_generator}
	and \eqref{eq:four_chain_dictionary} give
	\begin{equation}
		V^\top A=0,
		\qquad
		V^\top B_\mu=0,
		\qquad
		V^\top A_0=0.
		\label{eq:four_chain_QND_conditions}
	\end{equation}
The conditions in
	Theorem~\ref{thm:july27_qnd_G7}(iii) therefore hold with $K=0$ and
	$b=0$. Consequently  \eqref{eq:july27_qnd_9.5} becomes $
		Z(t)\equiv Z(0)$, and \eqref{eq:july27_qnd_9.6} holds. 
	Thus, the complete magnetization QND algebra $\mathfrak A_{Z}$ is conserved,	not merely its generator $J_z$.
	
	\paragraph{Linear observability}
	Choose $M=[\,1\ \  0\,]$.  Then $Y_M=Y_1$ and $MJM^\top=0$.
	Using 
	\[
	J_z =\sum_m m \Pi_m, \quad \sum_m Q_m=0,
	\]
	 and \eqref{eq:four_chain_dm}-\eqref{eq:four_chain_V}, we get
	\beq
			J_z
			=-4Q_{-2}-3Q_{-1}-2Q_0-Q_1=\gamma^\top Z,
		\label{eq:four_chain_Jz_Q}
\eeq
	where \(\gamma=[-4 \ -3   ~ \ -2 \ -1]^\top\). 
	Hence $c_z=V\gamma$ and
	$MC=C_1=2\sqrt{\kappa}\,\gamma^\top V^\top$.
	This verifies \eqref{eq:KerV}, as required in
	Theorem~\ref{thm:july27_qnd_G7}(iv).  Also,  \eqref{eq:Y_M}--\eqref{eq:C_Z} become
	\begin{equation}		\label{eq:four_chain_CZ}
			dY_1=C_ZZ\,dt+dW_1, 
			\quad
			C_Z=2\sqrt{\kappa}\gamma^\top.
	\end{equation}
	Since $K=0$,  by \eqref{eq:O_Z} we  get \(\rank\mathcal O_Z=\rank C_Z=1<4\). Hence, , 
	the vector \(Z\) is not observable from \(Y_1\) in the
standard linear-systems sense.

	\paragraph{Spectral separation} 	
	Let $F=C_ZZ=2\sqrt{\kappa}\,J_z$, where \eqref{eq:four_chain_Jz_Q}--\eqref{eq:four_chain_CZ} have been used. Clearly, $F$ has five distinct
	spectral values $f_m=2\sqrt{\kappa}\,m$. \eqref{eq:nonparallel} holds naturally. 
	The spectral decomposition of $F$ is
	\begin{equation}
		F=\sum_{k=-2}^{2}f_k\Pi_k, \quad f_k = 2\sqrt{\kappa} k.
		\label{eq:F-spectral-decomposition}
	\end{equation}
Define degree-4  polynomials 
\beq
	p_m(x)
	\triangleq
	\prod_{\substack{\ell=-2\\\ell\neq m}}^{2}
	\frac{x-f_\ell}{f_m-f_\ell}, \quad m=-2,\ldots,2.
			\label{eq:p_polynomial_m}
\eeq
	Clearly,  $p_m(f_k)=\delta_{mk}$.
	To evaluate these polynomials at the matrix $F$, we rewrite $p_m(x)$ as 
		\begin{equation}
	p_m(x)=\sum_{r=0}^{4}a_r^{(m)}x^r.
		\label{eq:p_polynomial}
\end{equation}
On the other hand, by \eqref{eq:F-spectral-decomposition},
	\begin{equation}
		F^r=\sum_{k=-2}^{2}f_k^r\Pi_k,
		\qquad r\in \mathbb{Z}.
		\label{eq:powers-of-F}
	\end{equation}
	Using \eqref{eq:p_polynomial}--\eqref{eq:powers-of-F}, we obtain
	\begin{align}
		&p_m(F)=\sum_{r=0}^{4}a_r^{(m)}F^r
		=\sum_{r=0}^{4}a_r^{(m)}
		\sum_{k=-2}^{2}f_k^r\Pi_k
		\nonumber\\
		=&\sum_{k=-2}^{2}
		\left(
		\sum_{r=0}^{4}a_r^{(m)}f_k^r
		\right)\Pi_k
		=\sum_{k=-2}^{2}p_m(f_k)\Pi_k
		\nonumber\\
		=&\sum_{k=-2}^{2}\delta_{mk}\Pi_k
		=\Pi_m.
		\label{eq:Lagrange-projection}
	\end{align}
	By \eqref{eq:p_polynomial_m}, \eqref{eq:Lagrange-projection} and the identity \(I_{16}=\sum_{m=-2}^2 \Pi_m\) we get
\[
		\Pi_m
		=p_m(F)
 =\prod_{\substack{\ell=-2\\ \ell\neq m}}^{2}
		\frac{F-f_\ell I_{16}}{f_m-f_\ell}
	=\prod_{\substack{\ell=-2\\ \ell\neq m}}^{2}
		\frac{F-2\sqrt{\kappa}\,\ell I_{16}}
		{2\sqrt{\kappa}(m-\ell)}
	\]
 for all $ m=-2,\ldots,2$. 
	Therefore, every  projector \(\Pi_m\) is a polynomial in $F$ of degree
	at most four, and consequently
	$\mathfrak A_{Z}
	=\Span_{\mathbb C}\{I_{16},F,F^2,F^3,F^4\}$. Equivalently, the output-signal observable \(F\) separates spectral projections of
	 $\mathfrak A_{Z}$, and  $\alg^*(I_{16},F)=\mathfrak A_{Z}$, which is \eqref{eq:alg_A_Z}.  Theorem \ref{thm:QND_2} is verified.
	
The above distinction is essential: linear  observability
	concerns a linear model composed of \eqref{eq:july27_qnd_9.5} and \eqref{eq:Y_M}, whereas algebraic observability
	uses the finite joint spectrum of the commuting observables.
	The Homodyne signal distinguishes the five magnetization labels.

	\subsection{Quantum BAE measurements}
	\label{subsec:four_chain_BAE}
	
	For the single field channel, let
	\begin{equation*}
		h=e_1, \ g=e_2, \ 
		g_\perp=-e_1, \ 
		S=[\,g\ g_\perp\,]
		=
		\begin{bmatrix}
			0&-1\\
			1&0
		\end{bmatrix}.
		\label{eq:four_chain_gh}
	\end{equation*}
	Then $W_g=W_2$, $Y_h=Y_1$, $B_g=B_2$, $C_h=C_1$. 
	Moreover, $h^\top g=0$ excludes direct feedthrough from $W_2$ to
	$Y_1$. Clearly, \eqref{eq:S_Omega_S},  and both
	$h^\top Jh=0$ and $h^\top h=1$ also hold.
	
	It remains to verify the system-mediated decoupling condition. By \eqref{eq:four_chain_conservation},  we get $C_hA=0$,  $C_hB_1=0$, $C_hB_2=0$; 
	that is, \(C_h\) annihilates every matrix in \(\{A,B_1,B_2\}\) from
	the left.  Let \(M_j\in\{A,B_1,B_2\}\).  It follows that  for every
	\(v\in\mathbb R^{255}\),
\[
		C_hM_\ell\cdots M_1B_gv=0,
		\qquad
		\ell=0,1,\ldots.
		\label{eq:four_chain_BAE_word}
\]
Thus \eqref{eq: ChRg} holds. 	All conditions of
	Theorem~\ref{thm:july27_G8_general} are therefore satisfied. Actually, as  already derived in \eqref{eq:four_chain_output},  the output quadrature 
	\begin{equation}
		dY_h
		=
		2\sqrt{\kappa}\,J_z\,dt+dW_1,
		\label{eq:four_chain_BAE_output}
	\end{equation}
	is back-action evading with
	respect to the input quadrature \(W_g=W_2\). Theorem~\ref{thm:july27_G8_general}  is verified. Moreover, by \eqref{eq:Z_conserved},  $J_z(t) \equiv J_z(0)$, the output equation \eqref{eq:four_chain_BAE_output} takes the particularly simple
	form $Y_1(t)=2\sqrt{\kappa}\,J_z(0)t+W_1(t)$.
	Thus the input field  quadrature $W_2$ has neither a direct nor a system-mediated effect on  the output quadrature $Y_1$. We verified Theorem~\ref{thm:july27_G8_general}.

	\subsection{Coexistence of the three structures}
The three structures exhibited above are governed by two commutation
properties of the model: the rotation invariance
\eqref{eq:four_chain_JH_commute} of the exchange Hamiltonian  \(H_{\mathrm{ch}}\), and the
fact that the coupling operator $L$ is a function of the conserved
magnetization $J_z$.  The first property makes the total-spin-zero subspace
invariant under \(H_{\mathrm{ch}}\) and, together with
\(J_zP_0=0\), yields the autonomous 
logical qubit in Subsection~\ref{subsec:four_chain_logical}, which is
the \((r,m)=(2,1)\) DFS.  These two properties together yield  both
the QND magnetization algebra of
Subsection~\ref{subsec:four_chain_QND}, which is conserved and, although
not linearly observable, is generated by the single homodyne signal $F$, 
and the quantum BAE measurement of
Subsection~\ref{subsec:four_chain_BAE}.

\section{Conclusion}
We have investigated the structure of quantum finite-level systems in the Heisenberg picture,  by focusing on their decoherence-free subsystems (DFSs), quantum nondemolition (QND) variables and quantum back-action evading (BAE) measurements. One line of future research is to apply these analytical results to system design of quantum finite-level systems. The Tavis–Cummings model is studied \cite{DZWW23}, where for a special case investigated in Section  III-C, a real orthogonal coordinate transformation \(T\)  decomposes the whole system into the controllable/observable subsystem and a uncontrollable/unobservable subsystem (namely, DFS).  Interestingly, the columns of \(T\) corresponding to the DFS are exactly the dark states which are decoherence-free subspace in the Schrödinger picture.  However, to the best of our knowledge,  the connection between decoherence-free subsystems  in the Heisenberg picture  and decoherence-free subspaces in the Schrödinger picture does not exist for general quantum systems.  Thus,  anther line of future research is to connect these results in the Heisenberg picture to relevant results in the Schrödinger picture, in the hope of  obtaining a unified theory.

\end{document}